\documentclass{article}

\usepackage[T1]{fontenc}
\usepackage[outline]{contour}
\usepackage{todonotes}
\usepackage[numbers]{natbib}
\bibliographystyle{plainnat}
\usepackage{authblk}
\usepackage{fullpage}  
\usepackage{amsmath,amssymb,mathrsfs}
\usepackage{amsthm}
\usepackage{color}
\usepackage[all,color]{xy}
\usepackage{graphicx}
\usepackage{comment}
\usepackage{tikz}
\usepackage{tikzscale}
\usepackage{subfigure}
\usetikzlibrary{calc}
\usepackage{algorithmic,algorithm2e}
\usepackage{multirow}
\usepackage{gensymb}
\usepackage{filecontents}
\usepackage{todonotes}
\usepackage{titlesec}
\usepackage{pgfgantt}
\usepackage{rotating}
\usepackage{placeins}
\usepackage{enumitem}  
\usepackage{IEEEtrantools}
\usetikzlibrary{calc,intersections,arrows,shapes.geometric,decorations.pathmorphing} \usepackage{xcolor} 
\usepackage{hyperref}
\usepackage{cleveref}

\tikzset{%
    add/.style args={#1 and #2}{
        to path={%
 ($(\tikztostart)!-#1!(\tikztotarget)$)--($(\tikztotarget)!-#2!(\tikztostart)$)%
  \tikztonodes},add/.default={.2 and .2}}
}  
  
\definecolor{grn}{RGB}{57,203,209}
\definecolor{bl}{RGB}{72,61,139}

\newcounter{iitem}

\newenvironment{iitem}{
  \refstepcounter{iitem}

  \enumerate
}{
  \endenumerate
}

\newcommand{\sbf}[1]{\fontseries{sb} {#1}}
  
\newtheorem{definition}{Definition}
\newtheorem{lemma}{Lemma}
\newtheorem{theorem}{Theorem}
\newtheorem{claim}{Claim}
\newtheorem{corollary}{Corollary}
\newtheorem{observation}{Observation}

\newtheorem{conjecture}{Conjecture}

\newtheorem{question}{Question}

\title{On problems related to crossing families}

\author[ ]{William Evans and Noushin Saeedi}
\affil[ ]{\small{Department of Computer Science, University of British Columbia, Canada}}
\affil[ ]{\textit {\{will,~noushins\}@cs.ubc.ca}}

\date{}

\titleformat{\paragraph}
{\normalfont\normalsize\bfseries}{\theparagraph}{1em}{}
\titlespacing*{\paragraph}
{0pt}{3.25ex plus 1ex minus .2ex}{1.5ex plus .2ex}

\begin{document}

\maketitle
\begin{abstract}
Given a set of points in the plane, a \emph{crossing family} is a collection of segments, each joining two of the points, such that every two segments intersect internally.
Aronov et al. [Combinatorica,~14(2):127-134,~1994]
proved that any set of $n$ points contains a crossing family of size $\Omega(\sqrt{n})$. They also mentioned that there exist point sets whose maximum crossing family uses at most $\frac{n}{2}$ of the points. We improve the upper bound on the size of crossing families to $5\lceil \frac{n}{24} \rceil$. We also introduce a few generalizations of crossing families, and give several lower and upper bounds on our generalized notions.

\end{abstract}
\section{Introduction}
Let $P$ be a set of $n$ points in general position in the plane.
A collection of line segments, each joining two of the points, is called a \emph{crossing family} if every two segments intersect internally. Let $\text{crf}(P)$ denote the size of the maximum crossing family in $P$, and let $\text{crf}(n)=\min_{|P|=n} \text{crf}(P)$, where the minimum is taken over all $n$-point sets $P$ in general position in the plane. \citet{Pach-crf} studied the size of $\text{crf}(n)$. They noted that a set of $n$ points chosen at random in a unit disc, ``almost surely'' has a linear-sized crossing family, and that there are point sets whose maximum crossing family uses at most $\frac{n}{2}$ of the points. They proved that any set of $n$ points contains a crossing family of size at least $\Omega(\sqrt{n})$. It is conjectured that $\text{crf}(P)=\Theta(n)$.

We  improve the upper bound on the size of crossing families. We consider more general variants of ``crossings'' and restricted classes of point sets. We study some generalized notions of crossing families, where we consider both combinatorial and geometric generalizations.

Point sets $A$ and $B$ are \emph{separable} if they can be separated by a line. A point set $A$ \emph{separates} point set $B$ from $C$ if $A$ and $B \cup C$ are separable and every line through two points in $A$ has all of $B$ on one side and all of $C$ on the other side.
We show that any crossing family of a point set $A \cup B \cup C$ such that $A$ separates $B$ from $C$ has all its segments incident to one set (i.e. $A$ or $B$ or $C$). We exploit this ``separating property'' between subsets of points and design a template for constructing $n$-point sets whose crossing family is of size at most $5\lceil \frac{n}{24}\rceil$ (see Section~\ref{sec:up}).

A point set $A$ \emph{avoids} a point set $B$ if no line formed by a pair of points in $A$ intersects the convex hull of $B$. $A$ and $B$ are \emph{mutually avoiding} if $A$ avoids $B$ and $B$ avoids $A$. A point set that can be partitioned into separable point sets $A$ and $B$ such that $A$ avoids $B$ is \emph{$1$-avoiding}.
We study the combinatorial properties of crossing families in $1$-avoiding point sets. We relax the combinatorial properties of crossing families and introduce the notion of ``side compatible subsets''. As a step towards finding maximum crossing families in $1$-avoiding point sets, we give linear bounds on side compatible subsets for some simplified versions of the problem (see Section~\ref{sec:sc}).

We also study some geometric generalizations of crossing families.
A \emph{spoke set} for $P$ is a set $\mathcal{L}$ of non-parallel lines such that each open unbounded region in the arrangement $\mathcal{L}$ has at least one point of $P$. The size of a spoke set $\mathcal{L}$ is the number of lines in $\mathcal{L}$.
If $\text{crf}(P)=k$, then the size of the largest spoke set for $P$ is at least $k$. 
This is because by rotating the supporting line of each segment in a crossing family clockwise infinitesimally about the midpoint of the segment, we obtain a spoke set. \citet{gcrf17} studied the properties of spoke sets in the dual plane. He claimed any $1$-avoiding point set of size $n$ has a spoke set of size $\frac{n}{4}$, but as we explain in Section~\ref{sec:ss}, the proof does not seem to be correct.

We introduce a generalized notion for the dual of a spoke set, called an ``M-semialternating path'' (see Section~\ref{sec:m}). M-semialternating paths are related to pseudolines in two-coloured line arrangements that intersect (in order) lines of alternate colours in the arrangement.
We show that the sizes of
different types of
M-semialternating paths are connected and give upper bounds on the size of certain M-semialternating paths. We also improve the upper bound on the size of spoke sets from $\frac{9n}{20}$ to $\frac{n}{4}+1$. Lastly, given a point set, we study the family of segments (each joining two of the points) such that for every pair, the supporting line of one intersects the interior of the other. We call such a family of segments a \emph{stabbing family}. We show that the size of the largest stabbing family in an $n$-point set is $\frac{n}{2}$ (see Section~\ref{sec:stab}).

During the final preparation of this manuscript, we became aware of a recent concurrent work by \citet{pach-new}, showing that any set of $n$ points in general position has a crossing family of size at least $\dfrac{n}{2^{O(\sqrt{\log{n}})}}$. The methods used in our work and theirs are different, and even though we study generalized notions of crossing families or restricted classes of point sets, our results
are different from theirs and not implied by their work.

\section{Related work}
\subsection{Crossing families}
\citet{Pach-crf} introduced the notion of crossing families
and studied the size of $\text{crf}(n)$. They noted that for non-convex point sets, the maximum crossing family may contain at most $\frac{n}{2}$ points.
Two equal-sized disjoint sets $A$ and $B$ can be \emph{crossed} if there exists a crossing family exhausting $A$ and $B$ such that  each line segment connects a point in $A$ to a point in $B$. \citet{Pach-crf} characterized the properties of pairs of point sets that can be crossed. Consider point sets $A$ and $B$ which are separated by a line.  A point $a$ in $A$ sees a point $b$ in $B$ at rank $i$ if $b$ is the $i$-th point seen counterclockwise  from $a$ (starting from the direction of a separating line). Two sets $A$ and $B$ with cardinality $s$ obey the \emph{rank condition} if there exist labelings $a_1,a_2,\dots,a_s$ and $b_1, b_2, \dots, b_s$ such that for all $i$, $a_i$ sees $b_i$ at rank $i$ and vice versa. If the labelings are such that for all $i$ and $j$, $a_i$ sees $b_j$ at rank $j$ and vice versa, then the sets obey the \emph{strong rank condition}.
An $n$-point set is \emph{dense} if the ratio of the maximum distance between any pair of points to the minimum distance (between any pair of points) is $O(\sqrt{n})$.

\citet{Pach-crf} proved that two sets $A$ and $B$ can be crossed if and only if they obey the rank condition. $A$ and $B$ are mutually avoiding if and only if they obey the strong rank condition. They showed that any set of $n$ points contains a pair of mutually avoiding subsets of size $\Theta{\sqrt{n}}$,
and hence $\text{crf}(n)=\Omega(\sqrt{n})$.
 \citet{Valtr} constructed a dense $n$-point set that contains no pair of mutually avoiding subsets of size more than $O(\sqrt{n})$.
Since the strong rank condition is much stronger than the condition needed for having a crossing family (i.e. rank condition), it is believed that the lower bound can be improved. In fact, the conjecture is that $\text{crf}(n)=\Theta(n)$.

\citet{Pach-crf} noted that a set of $n$ points chosen at random in a unit disc, ``almost surely'' has a linear-sized crossing family. \citet{Valtr-dense} showed that the maximum crossing family in a dense point set is almost linear.
\citet{sol-hl} proved that given a set $P$ of $2n$ points in general position, $\text{crf}(P)=n$ if and only if $P$ has exactly $n$ halving lines. A \emph{halving line} in a point set $P$ with an even number of points, is a line through two points of $P$ such that both its half-planes contain exactly the same number of points.

\subsection{Generalizations of crossing families}

Several generalizations of the notion of crossing families have been studied.

In recent years, there has been growing interest in studying pairwise crossings among other objects (rather than segments) formed by a given point set. A \emph{geometric graph} is a graph drawn in the plane so that the vertices are represented by points in general position and the edges are represented by straight line segments connecting the corresponding points. A geometric graph is \emph{complete} if there is an edge between every pair of points. An \emph{H-crossing family} in a complete geometric graph is a set of vertex disjoint isomorphic copies of $H$ that are pairwise crossing, where two geometric subgraphs $h$ and $h'$ \emph{cross} if there is an edge is $h$ that crosses an edge of $h'$. A crossing family can be defined as a $K_2$-crossing family (in a complete geometric graph).
~\citet{tverberg} showed that every complete geometric graph contains a $K_3$-crossing family of size $\lfloor \frac{n}{3}\rfloor$ (which is tight). \citet{cr-ham} showed the tight bound of $\frac{n}{4}$ for $P_4$-crossing families. \citet{ghc18}
showed that every complete geometric graph contains a $P_3$-crossing family of size $O(\sqrt{\frac{n}{2}})$, a $K_{1,3}$-crossing family of size $\frac{n}{6}$, and a $K_4$-crossing family of size $\frac{n}{4}$.

Crossing families have also been studied in \emph{topological graphs} where the edges are represented by Jordan curves (rather than straight line segments). Note that if two edges share an interior point, they must properly cross at that point. A topological graph is \emph{simple} if every pair of edges intersect at most once (i.e. at a common endpoint or at a proper crossing). A graph is \emph{$k$-quasi-planar} if it can be drawn as a topological graph with no $k$ pairwise crossing edges. It is conjectured that for any fixed $k\ge 2$, there exists $c_k$ such that every $k$-quasi-planar graph on $n$ vertices has at most $c_kn$ edges, where $c_k$ is a constant that depends on $k$. \citet{quasi-planar} proved the conjecture for $k=3$. \citet{quasi-4} proved the conjecture for $k=4$. For $k>4$, the best known upper bound on the maximum number of edges in $k$-quasi-planar graphs in which no pair of edges intersect in more than $t$ points is $2^{\alpha(n)^{c}}n \log{n}$, where $\alpha(n)$ denotes the inverse Ackermann function and $c$ depends only on $k$ and $t$~\cite{suk-k-quasi}. For simple $k$-quasi-planar graphs, the best known upper bound is $c_k n \log(n)$ \cite{suk-k-quasi}. \citet{valtr-kp} showed that for any fixed $k \ge 3$, any geometric graph on $n$ vertices with no $k$ pairwise parallel edges contains at most $c_k n$ edges, where two edges are \emph{parallel} if the intersection point of their supporting lines is not on either of the edges. He proved that any geometric graph on $n$ vertices with no $k$ pairwise crossing edges contains at most $c_k n \log(n)$ edges. He also showed that the same bound holds when the edges are drawn as $x$-monotone curves~\cite{valtr-kc}. 
~\citet{k10} gave a simple $3$-quasi-planar drawing of $K_{10}$ (i.e. no three edges are pairwise crossing). It is known that the maximum crossing family of any geometric graph on ten vertices is at least three~\cite{g-k10}. A (topological) graph is \emph{$k$-planar} if it can be drawn in the plane such that no edge is crossed more than $k$ times. \citet{k-quasi-rel} proved that for $k \ge 3$, every simple topological $k$-planar graph can be redrawn (by rerouting some edges) to become $(k+1)$-quasi-planar.

\citet{gcrf17} studied another generalization of crossing families called spoke sets (which was first introduced by \citet{bose}).
Any crossing family of size $k$ guarantees a spoke set of size $k$, but the reverse is not true.
~\citet{gcrf17} characterized the family of ``spoke matchings'', which are geometric matchings in $2k$-point sets admitting a spoke set $\mathcal{L}$ of size $k$, where each matching edge connects the points in antipodal regions of the arrangement $\mathcal{L}$.
He also studied the properties of spoke sets in the dual plane.
He claimed that any $1$-avoiding $n$-point set
has a spoke set of size $\frac{n}{4}$; however, the proof does not seem to be correct. Lastly, he proved that there exist $n$-point sets whose largest spoke set is of size at most $\frac{9}{20}n$ \cite{sch,gcrf17}.

A \emph{cell-path} of a line arrangement is a sequence of cells in the arrangement such that any two consecutive cells of the sequence share a boundary edge and no cell appears more than once. A cell-path is \emph{alternating} if the common edges of consecutive cells alternate in color.
The dual of a spoke set corresponds to a cell-path with certain properties.
\citet{cell-path-Linda} and \citet{cell-path-Aich} studied the existence of long cell-paths in line arrangements. They also studied the bicolored version of the problem, in which they look for long alternating cell-paths. These problems have close connections to problems on point sets in the plane (through duality). \citet{rb-points} give a survey on combinatorial problems on bicolored points in the plane.

\section{Upper bound on crossing families}\label{sec:up}
In this section, we show that there exist $n$-point sets whose crossing family is of size at most $5\lceil\frac{n}{24}\rceil$.
\begin{definition}[separating property]
Let $A$, $B$, and $C$ be three disjoint point sets. We say $A$ \emph{separates} $B$ from $C$ if
\begin{enumerate}[label={(\bfseries C\arabic*)}]
\item $A$ and $B \cup C$ are separable by a line\label{sep-c1}, and
\item every line through two points in $A$ has all of $B$ on one side and all of $C$ on the other.\label{sep-c2}
\end{enumerate}
\label{def:separate}
\end{definition}

Let $P_1$, $P_2$, and $P_3$ be three disjoint point sets. Let $i \in \{1,2,3\}$. The three sets have the \emph{separating property} if, for some $i$, $P_i$ separates $P_{i-1}$ from $P_{i+1}$, where indices are arithmetic modulo $3$. If for all $i$, $P_i$ separates $P_{i-1}$ from $P_{i+1}$, then the three sets satisfy the \emph{full separating} property.
Let $L(P_i)$ denote the set of all lines through two points in $P_i$.
For three point sets $P_1$, $P_2$, and $P_3$ with the full separating property, we define the \emph{core} to be the intersection of all bounded regions formed by any three lines $l_1$, $l_2$, and $l_3$, where $l_1 \in L(P_1)$, $l_2 \in L(P_2)$, and $l_3 \in L(P_3)$. We refer to a segment connecting a point in $P_i$ to a point in $P_j$ as a \emph{$P_iP_j$-segment} or a segment of \emph{type} $P_iP_j$.
A segment is \emph{incident} to a point set $P$ if it has an endpoint in $P$.
We say a set $\mathcal{S}$ of segments \emph{emanate} from a point set $P$ if all segments in $\mathcal{S}$ have an endpoint in $P$.

\begin{lemma}
Let $A$, $B$, and $C$ be three disjoint point sets such that $A$ separates $B$ from $C$. Any crossing family in $A \cup B \cup C$ emanates from $A$ or $B$ or $C$.
\label{lem:split}
\end{lemma}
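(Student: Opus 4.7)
The plan is to fix, by~\ref{sep-c1}, a line $\ell$ strictly separating $A$ from $B \cup C$, and, for any two distinct $a, a' \in A$, to exploit the line $\ell_{aa'}$ through them, which by~\ref{sep-c2} strictly separates $B$ from $C$. I will then pin down which pairs of segment types are geometrically forbidden and finish by a short case analysis. I label each segment by its endpoint types: $AA$, $AB$, $AC$, $BB$, $BC$, or $CC$.

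First I will record four ``disjointness'' observations. \emph{(i)} An $AA$-segment lies in the open $A$-side of $\ell$, while any $BB$-, $BC$-, or $CC$-segment lies in the open $B\cup C$-side, so the two are disjoint. \emph{(ii)} A $BB$-segment and a $CC$-segment lie in opposite open half-planes of any line $\ell_{aa'}$ (which exists since one may assume $|A|\geq 2$), and hence are disjoint. \emph{(iii)} Given an $AC$-segment $ac$ and any $a'\in A\setminus\{a\}$, the segment $ac$ starts on $\ell_{aa'}$ at $a$ and ends strictly on the $C$-side, so it lies in the closed $C$-side and meets $\ell_{aa'}$ only at $a$; a $BB$-segment lies strictly on the $B$-side, so the two are disjoint. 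The symmetric statement excludes $AB$ versus $CC$. \emph{(iv)} An $AB$-segment $ab$ and an $AC$-segment $a'c$ with $a \neq a'$ lie in the closed $B$-side of $\ell_{aa'}$, meeting $\ell_{aa'}$ only at $a$, and in the closed $C$-side, meeting $\ell_{aa'}$ only at $a'$, respectively; hence their intersection is contained in $\{a\}\cap\{a'\}=\emptyset$. (If $a=a'$, the two segments share an endpoint, so they still fail to cross internally.) Each observation reduces to the convexity fact that a segment from a point on a line to a point strictly on one side stays in the corresponding closed half-plane.

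Now assume for contradiction that the crossing family emanates from none of $A$, $B$, $C$. Then it contains a segment $s_A$ not incident to $A$ (type $BB$, $BC$, or $CC$), a segment $s_B$ not incident to $B$ (type $AA$, $AC$, or $CC$), and a segment $s_C$ not incident to $C$ (type $AA$, $AB$, or $BB$). I case-split on the type of $s_A$. If $s_A$ is of type $BB$, observations \emph{(i)}, \emph{(iii)}, \emph{(ii)} respectively rule out each of the three candidates for $s_B$; $s_A$ of type $CC$ is symmetric. If $s_A$ is of type $BC$, then \emph{(i)} forces $s_B\in\{AC,CC\}$ and $s_C\in\{AB,BB\}$, and the four resulting pairings $(AC,AB)$, $(AC,BB)$, $(CC,AB)$, $(CC,BB)$ are excluded by \emph{(iv)}, \emph{(iii)}, \emph{(iii)}, \emph{(ii)} in turn. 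Every branch ends in contradiction.

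The main subtlety is observation \emph{(iv)}: since an $AB$- and an $AC$-segment both lie on the $B\cup C$ side of $\ell$, condition~\ref{sep-c1} alone cannot separate them; the argument genuinely needs~\ref{sep-c2}, through the line $\ell_{aa'}$ joining the two distinct $A$-endpoints, to conclude that the segments do not even meet. Once these disjointness facts are in place, the remaining combinatorial bookkeeping is routine.
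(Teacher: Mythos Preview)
Your proof is correct and follows essentially the same approach as the paper. Both arguments identify exactly the same forbidden pairs of segment types---$\{AB,AC\}$ via condition~\ref{sep-c2}, and $\{XY,ZZ\}$ with $Z\notin\{X,Y\}$ via the separability conditions---and then conclude by type bookkeeping; the only difference is that the paper packages observations \emph{(i)}--\emph{(iii)} as convex-hull disjointness statements ($\mathtt{CH}(B\cup C)\cap\mathtt{CH}(A)=\emptyset$, $\mathtt{CH}(A\cup B)\cap\mathtt{CH}(C)=\emptyset$, $\mathtt{CH}(A\cup C)\cap\mathtt{CH}(B)=\emptyset$) rather than invoking the separating lines $\ell$ and $\ell_{aa'}$ directly, and leaves the final case analysis implicit where you spell it out.
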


\begin{proof}
Since $A$ separates $B$ from $C$, no $AB$-segment crosses an $AC$-segment; otherwise, the line through the the endpoints of these segments in $A$ would have at least one point from $B$ and one point from $C$ on the same side, which violates \ref{sep-c2}. Thus, no crossing family can contain an $AB$-segment and an $AC$-segment.

Let $\mathtt{CH}(S)$ denote the convex hull of $S$, where $S$ is a point set. Since $A$ separates $B$ from $C$, by Definition~\ref{def:separate}, $\mathtt{CH}(B \cup C) \cap \mathtt{CH}(A)=\emptyset$ (otherwise, $A$ and $B\cup C$ are not separable). Moreover, $\mathtt{CH}(A \cup B) \cap \mathtt{CH}(C)=\emptyset$ and $\mathtt{CH}(A \cup C) \cap \mathtt{CH}(B)=\emptyset$ (otherwise, property~\ref{sep-c2} is violated). This implies that no crossing family contains an $XY$-segment and a $ZZ$-segment, where $X,Y,Z \in \{A,B,C\}$ and $X,Y \neq Z$. Therefore, any crossing family emanates from $A$ or $B$ or $C$.
\end{proof}

Lemma~\ref{lem:split} immediately implies that there exist sets of $n$ points whose crossing family is of size at most $\lceil\frac{n}{3}\rceil$.
We extend the idea of having the separating property among certain subsets of the point set to obtain a better upper bound. First, we describe a template for constructing point sets with maximum crossing family of size at most $\lceil\frac{n}{4}\rceil$. We then modify the construction to improve the upper bound to $5\lceil\frac{n}{24}\rceil$.  

We start with a set of four points in non-convex position. We denote the points on the convex hull by $p_1$, $p_2$, and $p_3$, and the point inside by $q$. Note that there is no crossing between the segments joining any two of these points. Now we grow a small disc around each point. Call the discs $D_1$, $D_2$, $D_3$, and $D_q$ accordingly.  We replace each point $p$ with $\frac{n}{4}$ points such that
\begin{enumerate}[label=(\arabic*)]
\item the new points are inside the disc around $p$,
\item for all $i \in [3]$, the new points in $D_i$ are almost on a line, and
\item for all $i \in [3]$, $D_i$ separates $D_{i+1}$ from $D_{i-1} \cup D_q$. The indices are arithmetic modulo $3$.
\end{enumerate}

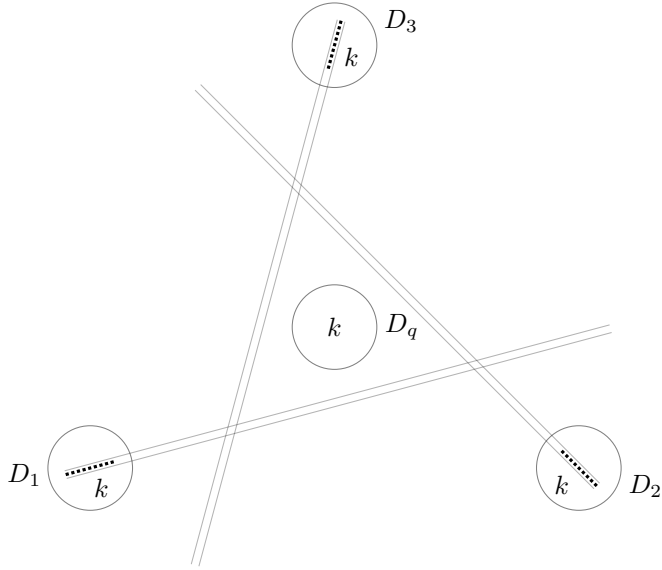
\begin{figure}[h]
\begin{center}
\begin{tikzpicture} [scale=.75]
	\coordinate (D) at (0,0);
	\path (D) -- +(90: 5 cm) coordinate (C);
	\path (D) -- +(330: 5 cm) coordinate (B);
	\path (D) -- +(210:5 cm) coordinate (A);
	
	\path (A) -- +(.2,-.35) coordinate (N1);
	\node at (N1) {{$k$}};	
	\path (B) -- +(-.3,-.3) coordinate (N2);
	\node at (N2) {{$k$}};	
	\path (C) -- +(.3,-.2) coordinate (N3);
	\node at (N3) {{$k$}};
	\node at (D) {{$k$}};

	\foreach \point in {A,B,C,D}
		\draw[opacity=.5] (\point) circle (0.75cm);

	\path let \p1=(A) in (\x1,\y1) -- +(195:.45cm) coordinate (a);
	\path let \p1=(a) in (\x1,\y1) -- +(105:.07cm) coordinate (a1);
	\path let \p1=(a) in (\x1,\y1) -- +(285:.07cm) coordinate (a2);
	\draw[very thick, densely dotted] (a) -- +(15:.9cm);
	\draw[opacity=0.3] (a1) --  +(15:10cm);
	\draw[opacity=0.3] (a2) -- +(15:10cm);

	\path let \p1=(B) in (\x1,\y1) -- +(315:.45cm) coordinate (b);
	\path let \p1=(b) in (\x1,\y1) -- +(45:.07cm) coordinate (b1);
	\path let \p1=(b) in (\x1,\y1) -- +(225:.07cm) coordinate (b2);
	\draw[very thick, densely dotted] (b) -- +(135:.9cm);
	\draw[opacity=0.3] (b1) --  +(135:10cm);
	\draw[opacity=0.3] (b2) -- +(135:10cm);

	\path let \p1=(C) in (\x1,\y1) -- +(75:.45cm) coordinate (c);
	\path let \p1=(c) in (\x1,\y1) -- +(165:.07cm) coordinate (c1);
	\path let \p1=(c) in (\x1,\y1) -- +(345:.07cm) coordinate (c2);
	\draw[very thick, densely dotted] (c) -- +(255:.9cm);
	\draw[opacity=0.3] (c1) --  +(255:10cm);
	\draw[opacity=0.3] (c2) -- +(255:10cm);
	
	\node [left] at (a) {{$D_1\hspace{.2cm}$}};
	\node [right] at (b) {{\hspace{.3cm}$D_2$ }};
	\node [right] at (c) {{\hspace{.2cm}$\hspace{.25cm}D_3$ }};
	\path let \p1=(D) in (\x1,\y1) -- +(.6,0) coordinate (d);
	\node [right] at (d) {{$\hspace{.1cm}D_q$ }};		
\end{tikzpicture}
\end{center}
\caption{A set of $4k$ points with maximum crossing family of size at most $k$.}
\label{fig:nover4}
\end{figure}

See Figure~\ref{fig:nover4}.
The following lemma implies that the segments forming a crossing family in this configuration, should all have an endpoint in the same disk (thus, the maximum crossing family of such a point configuration is of size at most $\frac{n}{4}$).

\begin{lemma}
Let $A,B,C$, and $D$ be four point sets
such that the set $\{A, B, C\}$ satisfies full separating property and $D$ is inside the corresponding core. Any crossing family in $A \cup B \cup C \cup D$ emanates from one of $A,B,C,$ or $D$.
\label{lem:nover4}
\end{lemma}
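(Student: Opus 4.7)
My plan is to reduce Lemma~\ref{lem:nover4} to three applications of Lemma~\ref{lem:split}, ``absorbing'' $D$ into each of $B$, $C$, $A$ in turn, and then combining the three conclusions via a combinatorial (vertex-cover) argument.

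The first step is geometric: I would show that $D$ lies on a consistent side of every line in $L(A)$, and similarly for $L(B)$ and $L(C)$. For a fixed $l \in L(A)$ and any $l_B \in L(B), l_C \in L(C)$, the bounded triangle formed by $l, l_B, l_C$ lies on a single side of $l$, and the core (being a subset of this triangle) does too. Because the core is non-empty and convex, these sides must agree across all such triples, so $D$ lies on a single consistent side of every $l \in L(A)$ --- say the $C$-side. A parallel argument handles $L(B)$ and $L(C)$, and the orientations of the enclosing triangles $(l_A, l_B, l_C)$ further force the three ``$D$-groupings'' to follow a cyclic pattern. Up to relabeling we obtain: $A$ separates $B$ from $C \cup D$, $B$ separates $C$ from $A \cup D$, and $C$ separates $A$ from $B \cup D$.

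I then apply Lemma~\ref{lem:split} to each of these three separations: any crossing family $F \subseteq A \cup B \cup C \cup D$ emanates from one of $\{A, B, C \cup D\}$, from one of $\{B, C, A \cup D\}$, and from one of $\{C, A, B \cup D\}$. I combine these combinatorially. Consider the ``type graph'' $G$ of $F$ on vertices $\{A, B, C, D\}$, with an edge $\{X, Y\}$ whenever $F$ contains an $XY$-segment; then ``$F$ emanates from $S$'' is exactly ``$S$ is a vertex cover of $G$.'' If any of the singletons $\{A\}, \{B\}, \{C\}$ covers $G$, we are done. Otherwise the three doubletons $\{C, D\}, \{A, D\}, \{B, D\}$ are all covers, so any edge of $G$ avoiding $D$ would simultaneously lie in $\{A, B\}$, $\{B, C\}$, and $\{A, C\}$, which is impossible. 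Hence $\{D\}$ covers $G$ and $F$ emanates from $D$.

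The main obstacle is the geometric Step~1: rigorously establishing that $D$ sits on a consistent side of each pencil $L(X)$ and that the three ``$D$-groupings'' follow a cyclic pattern. The former uses the non-emptiness and convexity of the core; the latter requires a careful analysis of the orientations of the triangles $(l_A, l_B, l_C)$ whose intersection defines the core. Without cyclicity, the three Lemma~\ref{lem:split} conclusions might yield only two distinct doubleton covers (e.g.\ $\{C,D\}, \{C,D\}, \{A,D\}$), which would not force $\{D\}$ to be a cover, so this step is essential for the combinatorial argument to close.
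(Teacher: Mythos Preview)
Your approach is correct but takes a genuinely different route from the paper's. The paper argues via two ingredients: (i) for every complementary pair $\{X,Y\},\{W,Z\}\subset\{A,B,C,D\}$ the convex hulls $\mathtt{CH}(X\cup Y)$ and $\mathtt{CH}(W\cup Z)$ are disjoint, so no crossing family contains both an $XY$-segment and a $WZ$-segment; and (ii) every triple in $\{A,B,C,D\}$ satisfies the separating property, so by Lemma~\ref{lem:split} the segments of the family lying within any three of the sets emanate from one of them. From (ii), a family not emanating from a single set must be incident to all four sets; its type graph then has two edges with disjoint vertex sets, contradicting (i). Your proof instead absorbs $D$ cyclically into each of $A,B,C$, applies Lemma~\ref{lem:split} three times to the \emph{entire} family, and closes with the vertex-cover argument. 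The paper's route is more symmetric in the four sets and trades your cyclicity requirement for the convex-hull disjointness fact (which is what handles the ``triangle'' type graph $\{AB,BC,CA\}$ in their argument); your route is self-contained once the cyclic pattern is established, and the vertex-cover framing is clean. Your identification of cyclicity as the crux is accurate: it does hold (the orientation of the triangles $T(l_A,l_B,l_C)$ whose intersection is the core forces it), but it is genuinely the nontrivial geometric step in your plan, whereas the paper's two geometric facts are more immediate to verify from the hypotheses.
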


\begin{proof}
Let $P=A \cup B \cup C \cup D$. Note that for any $X,Y \in \{A,B,C,D\}$, $\mathtt{CH}(X \cup Y)$ and $\mathtt{CH}(P \setminus (X \cup Y))$ are disjoint. Thus, no crossing family can contain an $XY$-segment and a $WZ$-segment if $\{X,Y\} \cap \{W,Z\} =\emptyset$, where $X,Y,Z,W \in \{A,B,C,D\}$.

Let $\mathcal{D}=\{A,B,C,D\}$. Note that (since $A,B$, and $C$ satisfy the full separating property and $D$ is inside the corresponding core) every triple of sets in $\mathcal{D}$ satisfies the separating property. Thus, for any triple of sets, the segments of a crossing family induced by them have an endpoint in the same set. As a result, a crossing family not emanating from a set $X\in \mathcal{D}$ should be incident to all four sets. However, any collection of segments that is incident to all four sets without emanating from one of them must contain an $XY$-segment and a $WZ$-segment, where $\{X,Y\} \cap \{W,Z\}=\emptyset$, and hence cannot form a crossing family.
\end{proof}

Note that in the point configuration illustrated in Figure~\ref{fig:nover4}, there is no restriction on the orientation of the points in $D_q$, and all disks contain the same number of points.
We modify this configuration such that a more specific positioning of the points in $D_q$ allows us to place more points in $D_q$ without increasing the size of the maximum crossing family.

In the following, we describe how to construct such a point configuration in more detail. See Figure~\ref{fig:5nover24}.
We start by putting three discs $D_1$, $D_2$, and $D_3$, each containing $5k$ almost collinear points, such that they satisfy the full separating property. In the core defined by them, we put six more discs $\{A_i,S_i \mid i \in [3]\}$, where each $S_i$ is a ``super'' disk containing two disks $B_i$ and $C_i$. Each disk $A_i, B_i$, and $C_i$ contains $k$ points.
An \emph{$X$-disk}, where $X \in \{A,B,C,D,S\}$ refers to any of the disks $X_1,X_2$, or $X_3$.
Each disk $X_i$ (where $X \in \{A,B,C,D,S\}$) is obtained by rotating $X_{i-1}$ counterclockwise $\frac{2\pi}{3}$ around the origin (marked with $\times$ in Figure~\ref{fig:5nover24}). The indices are arithmetic modulo $3$. Let $P$ denote the set of all points. $P$ is \emph{$3$-fold symmetric} with respect to the origin, that is, it can be partitioned into three sets (called \emph{wings}) of size $\frac{|P|}{3}$ such that each wing rotated by $\frac{2\pi}{3}$ and $\frac{4\pi}{3}$ (around the origin) gives the other two wings.
Each wing is composed of $A_i,B_i,C_i,D_i$ (for some $i \in [3]$) and the order of the disks along each wing in increasing distance from the origin is $A_i,B_i,C_i,D_i$.
For each disk $X$, let $c(X)$ denote the center of $X$. If a disk $X$ contains a point $p$, we write $p \in X$. (For simplicity in writing, we may treat a disk $X$ as the set of points inside disk $X$ at times.)
Let $\mathcal{D} =\{A_i,B_i,C_i,D_i \mid i \in [3]\}$.
Every pair of disks in $\mathcal{D}$ are disjoint, and each disk is infinitesimally small so that for every triple $\langle X,Y,Z \rangle$ of disks (where $X,Y,Z \in \mathcal{D}$), the orientation of $\langle x,y,z \rangle$, where $x \in X$, $y\in Y$, $z \in Z$, is the same as the orientation of $\langle c(X),c(Y),c(Z) \rangle$. (Note that $S_i$ contains $B_i \cup C_i$, hence $S_i \notin \mathcal{D}$.)
A triple $\langle X,Y,Z \rangle$ of disks, where $X,Y,Z \in \mathcal{D}$, has a \emph{positive orientation} if the circle through points $\langle c(X),c(Y),c(Z) \rangle$ is traversed counterclockwise when we encounter the points in cyclic order $ c(X),c(Y),c(Z),c(X)$. A triple of disks in $\mathcal{D}$ has a \emph{negative orientation} if the circle through $\langle c(X),c(Y),c(Z)\rangle$ is traversed clockwise.
The set of disks $\{A_i,B_i,C_i \mid i \in [3]\}$ and the points inside them are such that
\begin{itemize}
\item the set $\{c(D_i),c(A_i) \mid i \in[3] \}$ has a crossing family of size three (the set of segments $\{\overline{c(D_i)c(A_{i+1})} \mid i \in[3]\}$ forms a crossing family), 
\item the maximum crossing family of $\{c(D_i),c(B_i) \mid i \in[3] \}$ is of size two,
\item the maximum crossing family of $\{c(D_i),c(C_i) \mid i \in[3] \}$ is of size two,
\item $\langle D_i,A_{i+1},B_i \rangle$ and $\langle D_i,A_{i+1},C_i \rangle$ have negative and positive orientations, respectively,
\item $\langle B_i,A_{i+1},A_i \rangle$ and $\langle C_i,A_{i+1},A_{i+2} \rangle$ have negative and positive orientations, respectively,
\item $\langle D_i,A_i,B_i \rangle$ and $\langle D_i,A_{i+2},C_i \rangle$ have positive and negative orientations, respectively,
\item $\langle C_i,B_{i+1},A_i \rangle$ has a positive orientation, and
\item the points in $S_i$ are almost collinear, and every line through two points in $S_i$ separates $D_i$ from $P \setminus (D_i \cup S_i)$.
\end{itemize}
\begin{figure}[h!]
\begin{center}
\hspace*{-.8cm}  
\includegraphics[scale=.65]{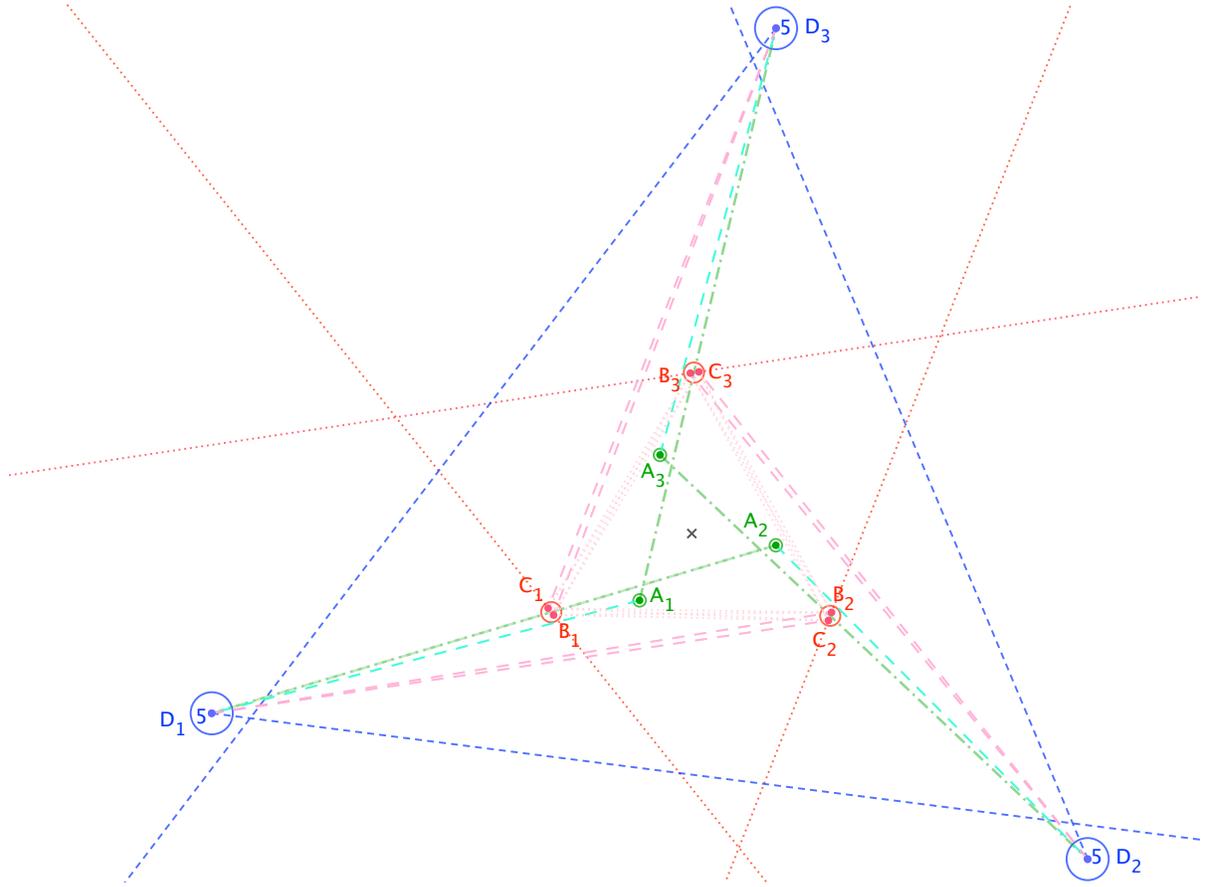}
\end{center}
\caption{A $3$-fold symmetric set of $24$ points with maximum crossing family of size $5$.}
\label{fig:5nover24}
\end{figure}

We show that the maximum crossing family of the point configuration described above is at most $5k$.

\begin{theorem}
There exist sets of $n$ points whose crossing family is of size at most $5\lceil\frac{n}{24}\rceil$.
\label{thm:up-5nover24}
\end{theorem}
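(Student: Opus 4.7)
The plan is to show that the $24k$-point configuration $P$ described just before the theorem admits no crossing family of size exceeding $5k$; the theorem then follows by choosing $k=\lceil n/24\rceil$ and removing $24k-n$ surplus points (say, from a $D$-disk), since deleting points cannot enlarge the maximum crossing family.

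The first step is to invoke Lemma~\ref{lem:nover4} with the four sets $D_1$, $D_2$, $D_3$ (which satisfy the full separating property by construction) and $P_{\text{inner}} := \bigcup_{i\in[3]} (A_i \cup B_i \cup C_i)$ (which lies in their core). It follows that every crossing family $\mathcal{F}$ in $P$ emanates from one of these four sets. If $\mathcal{F}$ emanates from some $D_i$, then vertex-disjointness of crossing families (two segments sharing an endpoint cannot intersect internally) gives $|\mathcal{F}|\le |D_i|=5k$, and the bound is immediate.

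The substantive case is when $\mathcal{F}$ emanates from $P_{\text{inner}}$. I would classify each segment of $\mathcal{F}$ by its \emph{type}, the unordered pair of disks containing its endpoints. Since $\mathcal{F}$ emanates from $P_{\text{inner}}$, no segment has type $D_iD_j$, so every type involves at least one inner disk and therefore contributes at most $k$ segments to $\mathcal{F}$ by vertex-disjointness on that inner disk. Because each disk is infinitesimally small and every triple of points from three distinct disks inherits the orientation of the corresponding centers, the set of distinct types appearing in $\mathcal{F}$ must correspond to a pairwise-crossing collection of segments among the twelve disk centers. Consequently it suffices to show that any such center-level crossing family, all of whose segments have at least one inner-disk endpoint, contains at most $5$ segments.

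To obtain the bound of $5$ at the center level, I would combine (i) the bullet-point constraints that the maximum crossing families of $\{c(D_i),c(A_i)\}$, $\{c(D_i),c(B_i)\}$, $\{c(D_i),c(C_i)\}$ are of sizes $3$, $2$, $2$ respectively; (ii) the listed orientation conditions on triples such as $\langle D_i, A_{i+1}, B_i\rangle$, $\langle D_i, A_i, B_i\rangle$, and $\langle C_i, B_{i+1}, A_i\rangle$; and (iii) the property that every line through two points of $S_i=B_i\cup C_i$ has $D_i$ on one side and $P\setminus (D_i\cup S_i)$ on the other, which (by the \ref{sep-c2}-half of the argument in the proof of Lemma~\ref{lem:split}) forbids any crossing family from containing both an $S_iD_i$-segment and an $S_i\cdot(P\setminus(D_i\cup S_i))$-segment. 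A case analysis—exploiting the $3$-fold symmetry to reduce the number of subcases—should then show that at most $5$ inner-incident types can simultaneously be pairwise crossing, the extremal configuration being the three types $D_iA_{i+1}$ for $i\in[3]$ together with two types of the form $D_iB_i$ or $D_iC_i$, yielding $3k+2k=5k$ segments. The main obstacle is this final case analysis, since one must use the orientation conditions to systematically rule out alternative type collections, particularly those involving intra-inner types such as $A_iB_j$, $B_iC_j$, or $A_iA_j$, that would otherwise push the count past $5$.
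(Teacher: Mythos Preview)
Your reduction in the third paragraph contains a genuine gap. You claim that ``the set of distinct types appearing in $\mathcal{F}$ must correspond to a pairwise-crossing collection of segments among the twelve disk centers.'' This is false: two crossing segments in $\mathcal{F}$ whose types share a disk, say an $A_1D_1$-segment and an $A_1D_2$-segment using two distinct points of $A_1$, yield center-level segments $c(A_1)c(D_1)$ and $c(A_1)c(D_2)$ that \emph{share an endpoint} rather than cross internally. The correct statement is that the center-level segments are pairwise \emph{touching} (crossing or sharing an endpoint), and such a family can easily contain more than five segments --- for instance, a star at $c(D_1)$ to all nine inner centers. In that situation the bound $5k$ comes from $|D_1|=5k$, not from the number of types, so ``at most five types'' is simply not the right target. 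Your proposed extremal configuration illustrates the same issue: the three segments $c(D_i)c(A_{i+1})$ do pairwise cross, but adding a segment of type $D_iB_i$ gives a center-level segment sharing $c(D_i)$ with one of the three, so you never get five pairwise-crossing center segments.

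The paper handles this by first proving Lemma~\ref{lem:grow}, which carries out exactly the pairwise-touching analysis you are missing: it shows that if the $24$-point base configuration ($k=1$) has maximum crossing family $5$, then the blown-up configuration has maximum crossing family $5\lceil n/24\rceil$. The substantive work then happens entirely at $k=1$, where each disk is a single point, so any two segments of types sharing a disk genuinely cannot both lie in a crossing family. The paper's case analysis (Cases~\ref{c1}--\ref{c6} and the subsequent breakdown) repeatedly exploits the structure of the contracted family $\widetilde{\mathcal{F}}$ --- whether it forms a star, two crossing stars, or contains a triangle --- rather than merely counting types. Your outline could be repaired by inserting such a reduction lemma, but as written the central step does not go through.
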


Before proving Theorem~\ref{thm:up-5nover24}, we draw attention to the following lemma.

\begin{lemma}
If there exists an $m$-point set with maximum crossing family of size $f > 1$, then there exists an $n$-point set with maximum crossing family of size $ f\cdot\lceil \frac{n}{m} \rceil $.
\label{lem:grow}
\end{lemma}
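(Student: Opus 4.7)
The plan is to construct the desired $n$-point set by inflating each point of the given $m$-point set $P$ into a tiny cluster of nearly-collinear points, and then pruning down to exactly $n$ points. Set $k=\lceil n/m\rceil$. For each $p\in P$, place $k$ points along a very short segment through $p$ whose direction is chosen generically with respect to all lines determined by $P$; call the resulting cluster $D_p$ and the resulting $mk$-point set $P'$. Finally delete $mk-n$ points from $P'$ (taking care not to remove any point used by the crossing family built below) to obtain the $n$-point set $P''$. The clusters are chosen small enough that for any four clusters $D_u,D_v,D_{u'},D_{v'}$ corresponding to four distinct points of $P$, a segment from $D_u$ to $D_v$ crosses a segment from $D_{u'}$ to $D_{v'}$ if and only if $\overline{uv}$ crosses $\overline{u'v'}$ in $P$.

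For the upper bound, I would take an arbitrary crossing family $F$ in $P''$ and assign to each $s\in F$ its \emph{type} $\pi(s)\subseteq P$, namely the unordered pair of points of $P$ whose clusters contain the endpoints of $s$. Three observations suffice. First, no two segments of $F$ can share a cluster: if $\pi(s)=\{u,v\}$ and $\pi(s')=\{u,v'\}$ with $v\neq v'$, then as the clusters shrink $s$ and $s'$ approach two segments emanating from the common point $u$, which do not internally cross, so for small enough clusters $s$ and $s'$ do not cross either, contradicting $F$ being a crossing family. Second, for each type $\{u,v\}$, segments of that type share no endpoint (they lie in a crossing family), so at most $k$ of them appear in $F$. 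Third, by the choice of cluster size, the set of distinct types used by $F$ is itself a crossing family in $P$, hence has cardinality at most $f$. Combining these gives $|F|\le fk=f\lceil n/m\rceil$.

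For the matching lower bound, I would take a maximum crossing family $C$ of $P$ and, for each edge $\overline{uv}\in C$, form a \emph{crossing bundle} of $k$ pairwise-crossing segments between $D_u$ and $D_v$: since the points of each cluster lie almost on a short segment, pairing the $i$-th point of $D_u$ with the $(k+1-i)$-th point of $D_v$ in the natural orderings produces $k$ pairwise-crossing segments. Segments belonging to different bundles cross by the smallness of the clusters, giving $fk$ pairwise-crossing segments in $P'$. These use $2fk\le mk$ points (since $2f\le m$), so the pruning step can be arranged not to destroy them, which yields $\text{crf}(P'')\ge fk$.

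The main technical obstacle is making these smallness arguments rigorous in a unified way. One must fix the cluster directions generically with respect to the finite line arrangement determined by $P$ and then pick a common cluster diameter smaller than the minimum ``safety margin'' extracted from the transversal-crossing conditions used above. Because each such condition is open, a sufficiently small choice works; but the quantifier structure, namely a single choice of clusters that simultaneously witnesses every geometric claim (no-crossing when types share a cluster, and crossing-iff-crossing-in-$P$ for disjoint types), requires some care.
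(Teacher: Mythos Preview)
Your construction and the lower bound are essentially the paper's: blow each point of $P$ up into a tiny cluster of $k=\lceil n/m\rceil$ points and bundle $k$ pairwise-crossing segments over each edge of a maximum crossing family of $P$. The gap is in your upper-bound argument, specifically your first observation.

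The claim that ``no two segments of $F$ can share a cluster'' is false, and the limiting argument you give does not establish it. Take $u=(0,0)$, $v=(1,1)$, $v'=(1,-1)$, and put $u_1=(0,-\varepsilon)$, $u_2=(0,\varepsilon)$ in $D_u$. Then $\overline{u_1v}$ and $\overline{u_2v'}$ cross internally (near the origin) for every $\varepsilon>0$, even though in the limit $\varepsilon\to 0$ they share the endpoint $u$ and do not cross internally. The ``internally cross'' predicate is not lower semicontinuous at this degeneration, so you cannot conclude non-crossing for small clusters from non-crossing in the limit. Once Observation~1 fails, Observation~3 is no longer usable either: the set of types appearing in $F$ need not consist of pairwise-disjoint pairs, so it is not automatically a crossing family in $P$.

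The paper deals with exactly this phenomenon. Instead of forbidding shared clusters, it contracts each cluster back to its original point and observes that the contracted segments $Q$ are pairwise \emph{touching} (either crossing or sharing an endpoint). One then argues about the structure of such a touching system: if the maximum crossing family $\mathcal{F}\subseteq Q$ has size at least two, no point of $p(Q)\setminus p(\mathcal{F})$ can form a triangle with an edge of $\mathcal{F}$, which forces all segments of $Q$ to live on the at most $2f$ points of $p(\mathcal{F})$; since each cluster has $k$ points and the segments of $\mathcal{F}'$ are vertex-disjoint, $|\mathcal{F}'|\le f k$. Your argument can be repaired along these lines, but not by the continuity claim you currently use.
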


\begin{proof}
Let $P$ be an $m$-point set with maximum crossing family of size $f$. We use $P$ as a base set to construct an $n$-point set $P'$ as follows: each point in $P$ is replaced with $\lfloor \frac{n}{m} \rfloor$ or $\lceil \frac{n}{m} \rceil$ imperceptibly perturbed copies (each copy of a point is distinct).
Let $\mathcal{F}'$ be a maximum crossing family in $P'$. Given $\mathcal{F}'$, contract each copy (of a point in $P$)  to the original point (in $P$) and let $Q$ be the new set of segments obtained. Note that the segments in $Q$ are all pairwise touching (that is, they either intersect or have a common endpoint).
Let $\mathcal{F} \subseteq Q$ be a maximum crossing family. For a set of segments $S$, let $p(S)$ denote the set of points induced by $S$.
If $|\mathcal{F}|>1$, then $Q$ admits no triangle that is formed by a point in $p(Q) \setminus p(\mathcal{F})$ and a segment in $\mathcal{F}$
(otherwise, at least one pair of segments in $Q$ are disjoint). Therefore, in this case, the maximum crossing family in $P'$ is at most $f\cdot\lceil\frac{n}{m}\rceil$. If $|\mathcal{F}|=1$, the size of the crossing family in $P'$ may be up to $\frac{3}{2}\lceil\frac{n}{m}\rceil$.

\end{proof}

\begin{proof}[Proof of Theorem~\ref{thm:up-5nover24}]
Let $P$ be the point configuration described earlier when $k=1$ (depicted in Figure~\ref{fig:5nover24}).
We show that the maximum crossing family of $P$ has size at most five. This, together with Lemma~\ref{lem:grow}, concludes the proof.

Recall that in the point configuration we described, the orientation of every triple of points from three different disks is the same as the orientation of their disks. This implies that our configuration (with $k>1$) is a subclass of the point sets obtained from appealing to Lemma~\ref{lem:grow}.


A \emph{disk-contraction} contracts all points in a disk to the center of the disk.
For a set $\mathcal{S}$ of segments, $\widetilde{\mathcal{S}}_{\mathcal{X}}$ denotes the set of segments obtained after performing a disk-contraction on $\mathcal{S}$, where each disk in $\mathcal{X}$ is contracted.
We refer to a segment $\overline{c(X)c(Y)}$ in $\widetilde{\mathcal{S}}_{\mathcal{X}}$, where $X,Y \in \mathcal{X}$, as an $XY$-segment or a segment of type $XY$. An $XY$-segment is incident to disks $X$ and $Y$.

Let $\mathcal{D}=\{A_i,B_i,C_i,D_i \mid i\in[3]\}$. Let $\mathcal{F}$ be a maximum crossing family of $P$. We first show that if $|\mathcal{F}| >5$, then $\mathcal{F}$ may only consist of segments of type $D_iX_j$, where $X \in \{A,B,C\}$. We then prove that any crossing family formed by these type of segments is indeed of size at most five.
We consider the following cases.
\begin{enumerate}[label={\bfseries Case \arabic*.},ref={\arabic*},leftmargin=2.8\parindent]
\item \label{c1}$\mathcal{F}$ is not incident to any $D$-disks. The number of points in $P$ that are not in $D$-disks is nine. Hence, in this case, the crossing family is of size at most four.
\item \label{c2}$\mathcal{F}$ contains a $D_iD_j$-segment (where $i,j \in [3]$). Recall that  the three sets $D_1,D_2$, and $D_3$ satisfy the full separating property and that $P \setminus \bigcup\limits_{i=1}^{3} D_i$ lies in the core defined by the $D$-disks. As a result, by Lemma~\ref{lem:nover4}, all segments in $\mathcal{F}$ are incident to $D_i$ or all are incident to $D_j$. Thus, the size of the crossing family is at most five.
\item \label{c3}$\mathcal{F}$ is incident to no $A$-disks. Let $\widetilde{\mathcal{F}}=\widetilde{\mathcal{F}}_{\{S_i,D_i \mid i \in[3]\}}$. Recall that the maximum crossing family of $\{c(D_i),c(S_i) \mid i \in[3] \}$ is of size two. Suppose $|\mathcal{F}|>5$. There are three possibilities:
\begin{itemize}
\item $\widetilde{\mathcal{F}}$ admits a triangle. By Case \ref{c2}, $\mathcal{F}$ contains no $D_iD_j$-segments. Hence, the triangle is incident to at least two $S$-disks. Note that $c(S_i)$ can be incident to at most two segments in $\widetilde{\mathcal{F}}$. If $\widetilde{\mathcal{F}}$ contains more than three segments, then a vertex of the triangle is incident to at least three segments (because the segments in $\widetilde{\mathcal{F}}$ are pairwise touching), and hence the triangle should be incident to one $D$-disc. Since $\widetilde{\mathcal{F}}$ is not incident to any $A$-disks, $\widetilde{\mathcal{F}}$ cannot contain a segment touching all three segments of the triangle. Hence, $\widetilde{\mathcal{F}}$ consists of exactly three segments, and $|\mathcal{F}| \le \frac{3}{2}\cdot 2=3$.
\item $\widetilde{\mathcal{F}}$ contains a pair of segments that are crossing. Note that $\widetilde{\mathcal{F}}$ admits no triangles and that $\mathcal{F}$ contains no $D_iD_j$-segments. As a result $|\mathcal{F}|\le 2 \cdot 2=4$.
\item $\widetilde{\mathcal{F}}$ forms a star (i.e. has a point that is incident to all segments). Thus, $|\mathcal{F}| \le 5$.
\end{itemize}
\item \label{c4}$\mathcal{F}$ contains an $A_iA_j$-segment (where $i,j \in [3]$).
Let $\widetilde{\mathcal{F}}=\widetilde{\mathcal{F}}_{\mathcal{D}}$. Let $e \in \widetilde{\mathcal{F}}$ be an $A_iA_j$-segment. All segments in $\widetilde{\mathcal{F}}$ are pairwise touching. The endpoints of $e$ cannot be incident to any other segments. Thus, each segment in $\widetilde{\mathcal{F}} \setminus e$ should intersect $e$. Moreover, any segment that crosses $e$ is incident to $c(A_k)$, where $k\neq i,j$. Thus, $|\mathcal{F}|=2$.
\item \label{c5}$\mathcal{F}$ contains an $S_iS_j$-segment (where $i,j \in [3]$).
Note that any segment crossing an $S_iS_i$-segment is of type $D_iA_{i+1}$. Thus, if $\mathcal{F}$ contains an $S_iS_i$-segment, $|\mathcal{F}|=2$.
Let $\widetilde{\mathcal{F}}=\widetilde{\mathcal{F}}_{\{A_i,S_i,D_i \mid i \in[3]\}}$.  Let $e \in \widetilde{\mathcal{F}}$ be an $S_iS_j$-segment, where $i \neq j$. Either $\widetilde{\mathcal{F}}$ admits a triangle or not. We consider the two cases below.
\begin{itemize}
\item $\widetilde{\mathcal{F}}$ admits a triangle. Suppose $|\mathcal{F}|>5$. By Case \ref{c2}, $\widetilde{\mathcal{F}}$ contains no $D_xD_y$-segments (where $x,y \in [3]$). For any $x \in[3]$, $c(A_x)$ can be incident to at most one segment in $\widetilde{\mathcal{F}}$. Thus, any triangle contains an $S_iS_j$-segment. Assume that the triangle is formed by $e$ and a point $v$. Recall that for $x \in[3]$, $S_x$ separates $D_x$ from $P \setminus (D_x \cup S_x)$. Thus $v \notin \{D_i,D_j\}$. Note that no segment incident to $v$ crosses $e$. Moreover, $c(S_i)$ and $c(S_j)$ can be incident to at most two segments. Therefore, $\widetilde{\mathcal{F}}$ consists of exactly three segments, and hence $|\mathcal{F}|\le \frac{3}{2}\cdot 2=3$.
\item $\widetilde{\mathcal{F}}$ does not admit a triangle. 
Any segment in $\widetilde{\mathcal{F}}$ that crosses $e$ is incident to an $A$-disk. Hence, $e$ can be crossed by at most three segments. At least one of $c(S_i)$ or $c(S_j)$ is incident to only one segment (if both are incident to two segments, then a triangle is formed). Therefore, $|\mathcal{F}|\le 5$.
\end{itemize}
\item \label{c6}$\mathcal{F}$ contains an $A_iS_j$-segment (where $i,j \in [3]$).
Let $e \in \mathcal{F}$ be an $A_iS_j$-segment. If $i=j$, then $|\mathcal{F}|=2$ because at most one segment can cross $e$. If $i \neq j$, then $|\mathcal{F}|\le 4$ because at most three segments can cross $e$.
\end{enumerate}
Suppose $|\mathcal{F}|>5$.
By Cases \ref{c1}-\ref{c6} , $\mathcal{F}$ may only contain segments of type $D_iX_j$, where $X \in \{A,B,C\}$ and $i,j \in [3]$. Thus, by Case \ref{c3}, there exists a segment of type $D_iA_j$. Let $\widetilde{\mathcal{F}}=\widetilde{\mathcal{F}}_{\mathcal{D}}$. We consider two cases.
\begin{itemize}
\item There exists a segment $e \in \widetilde{\mathcal{F}}$ that is of type $D_iA_i$ or $D_iA_{i+2}$.
Without loss of generality, assume $e$ is of type $D_iA_i$. (The other case is symmetric.) Let $Q \subseteq \widetilde{\mathcal{F}}$ be a maximum crossing family. Recall that all segments in $\widetilde{\mathcal{F}}$ are incident to $D$-disks. Note that all segments in $\widetilde{\mathcal{F}}$ that cross $e$ (if any) should be incident to the same $D$-disk. Thus, $|Q| \le 2$. All segments in $\widetilde{\mathcal{F}}$ are pairwise touching. Therefore, $\widetilde{\mathcal{F}}$ consists of either a star or two crossing stars (two stars are crossing if each segment of a star crosses all segments of the other star).
If $\widetilde{\mathcal{F}}$ forms a star, then $|\mathcal{F}| \le 5$. So, we assume that there exists a segment in $\widetilde{\mathcal{F}}$ that crosses $e$. Note that at most two such segments exist. Let $\mathcal{S}_1 \subset \widetilde{\mathcal{F}}$ denote the segments of the star containing $e$. Let $\mathcal{S}_2$ denote the segments of the star crossing $e$. $|\mathcal{S}_2| \le 2$. We consider the two cases below.
\begin{itemize}
\item $|\mathcal{S}_2| = 1$. Thus, there exists a $D_{i+1}S_i$-segment. Recall that $S_i$ separates $D_i$ from $D_{i+1}$. Hence, $\mathcal{S}_1$ is not incident to $S_i$. By Case \ref{c2}, there are no $D_iD_{i+1}$-segments. Hence $|\mathcal{S}_1| \le 4$, which implies $|\mathcal{F}| \le 5$.
\item $|\mathcal{S}_2| = 2$. Thus $|\mathcal{S}_1| \le 3$ (using Case \ref{c2}). Therefore, $|\mathcal{F}| \le 5$.
\end{itemize}
\item All segments in $\widetilde{\mathcal{F}}$ that connect a $D$-disk to an $A$-disk are of type $D_iA_{i+1}$. $\widetilde{\mathcal{F}}$ may be incident to one, two or three $A$-disks. We consider each of these cases below.
\begin{itemize}
\item $\widetilde{\mathcal{F}}$ is incident to exactly one $A$-disk.
Let $e$ denote the (only) $D_iA_{i+1}$-segment in $\widetilde{\mathcal{F}}$. If no segment in $\widetilde{\mathcal{F}}$ crosses $e$, then $|\mathcal{F}|\le 5$. So, we assume that there exists a segment in $\widetilde{\mathcal{F}}$ that crosses $e$. Note that at most one such segment exists. 
This implies that $\widetilde{\mathcal{F}}$ consists of a star together with a segment crossing all the star segments. Therefore, by Case \ref{c2} and the fact that $S_i$ separates $D_i$ from $D_{i+1}$, $|\mathcal{F}| \le 4$ (recall that $\widetilde{\mathcal{F}}$ is incident to only one $A$-disk).
\item $\widetilde{\mathcal{F}}$ is incident to exactly two $A$-disk. This implies that $\widetilde{\mathcal{F}}$ is incident to exactly two $D$-disks (otherwise, either $\widetilde{\mathcal{F}}$ is not pairwise touching or it is incident to all $A$-disks). Thus, $\widetilde{\mathcal{F}}$ consists of two crossing stars. Note that the number of segments in each star is at most two. Thus, $|\mathcal{F}|\le 4$.
\item $\widetilde{\mathcal{F}}$ is incident to three $A$-disk. Let $E$ denote the set of three segments in $\widetilde{\mathcal{F}}$ incident to $A$-disks. If $|\mathcal{F}|>3$, there exists a segment in $\widetilde{\mathcal{F}}$ that crosses at least two of the segments in $E$. Any such segment is incident to an $A$-disk. Recall that no $A$-disk can be incident to more than one segment. Therefore, $|\mathcal{F}|=3$.
\end{itemize}
\end{itemize}
Therefore, the maximum crossing family of $P$ cannot be greater than five.
\end{proof}

\section{$1$-avoiding point sets}\label{sec:sc}

In this section, we restrict our attention to $1$-avoiding point sets.
We study the combinatorial properties of crossing families in $1$-avoiding point sets. We then introduce a relaxation on these combinatorial properties.
As a step towards finding maximum crossing families in $1$-avoiding point sets, we study the size of our relaxed notion of crossing families called ``side compatible subsets''. We give linear bounds on side compatible subsets for some simplified versions of the problem.

\subsection{Combinatorial properties of crossing families}

Let $P=P_{B \vdash R}$ be a $1$-avoiding point set $B \cup R$ where $B$ and $R$ are two separable equal-sized sets of points such that $B$ avoids $R$.
Without loss of generality, for point set $P_{B \vdash R}$, we assume $B$ and $R$ are separable by a vertical line and $R$ lies to the left of $B$.
The \emph{dual} of a point $p=(a,b)$ is the line $p^\star=\{(x,y) \mid y=ax-b\}$ and the dual of a line $l=\{(x,y) : y=ax+b\}$ is the point $l^\star=(a,-b)$.
The dual of a $1$-avoiding point set $P_{B \vdash R}$ consists of two sets of lines $B^\star$ (blue lines) and $R^\star$ (red lines) such that each line in $B^\star$ has a larger slope than any line in $R^\star$ and each line in $R^\star$ intersects (or sees) the lines of $B^\star$ in the same order.
We call such a line arrangement a \emph{$1$-avoiding} line arrangement.

We study characteristics of crossing families in $1$-avoiding point sets in combinatorial terms and in the dual plane.

\begin{definition}
An \emph{allowable sequence} is a sequence $\pi_1, \dots, \pi_l$ of permutations of $[n]$ satisfying the following properties:
\begin{enumerate}
\item $\pi_1$ is the identity permutation (i.e. $\pi_1=1,\dots,n$) and $\pi_l$ is the reverse of $\pi_1$;
\item The move from $\pi_i$ to $\pi_{i+1}$ consists of reversing one or more non-overlapping
substrings (each of size at least two);
\item Any two elements in $[n]$ reverse their order exactly once.
\end{enumerate}
An allowable sequence is \emph{simple} if each move from $\pi_i$ to $\pi_{i+1}$ consists of reversing just one pair of elements.
\end{definition}

All combinatorial information of a point configuration or a line arrangement in general position is encoded by a simple allowable sequence.
A simple allowable sequence associated with a point configuration encodes the total slope order of lines joining pairs of points. An allowable sequence associated with a line arrangement $L$ encodes the order in which a sweeping vertical line from left to right sees the intersection points of arrangement $L$.
Using allowable sequences for studying a geometric problem on a point configuration or a line arrangement is equivalent to generalizing the problem to a generalized point configuration\footnote{A \emph{generalized configuration} of points in general position is a finite set of points in the Euclidean plane, together with an arrangement of pseudolines, such that every pair of points lie on exactly one pseudoline and each pseudoline contains exactly two points.} or a pseudoline arrangement.


Let $|B^\star|=|R^\star|=n$. Say each line in $R^\star$ sees the blue lines in the order $b^\star_1,b^\star_2,\dots,b^\star_n$; and $b^\star_1$ intersects the red lines in the order $r^\star_1,r^\star_2,\dots,r^\star_n$. We can represent the order in which each blue line intersects the red lines in a table, where row $i$ contains the indices of red lines in the order that they are seen by $b^\star_i$.
We denote such a table by $T({B^\star},{R^\star})$. For subsets ${B^\star}' \subset B^\star$ and ${R^\star}' \subset R^\star$, we can \emph{reduce} the table to ${B^\star}'$ and ${R^\star}'$ if we only keep the rows representing ${B^\star}'$ and indices representing ${R^\star}'$. We refer to the table reduced to ${B^\star}'$ and ${R^\star}'$ as a \emph{subtable}.

We know $B$ and $R$ can be crossed if and only if  they obey the rank condition. This implies that $P_{B \vdash R}$ has a crossing family of size $k$ whose segments connect a point in $B$ to a point in $R$, if and only if there are subsets ${B^\star}' \subseteq B^\star$ and ${R^\star}'\subseteq R^\star$, each of size $k$, such that the elements of the diagonal of the table reduced to ${B^\star}'$ and ${R^\star}'$ are all distinct.

\begin{lemma}
Let $P=P_{B \vdash R}$. A subtable of $T(B^\star,R^\star)$ whose diagonal $\mathrm{d}_1 \mathrm{d}_2 \dots \mathrm{d}_k$ consists of distinct elements has the property that
for all rows $j$, $\mathrm{d}_i$ is before $\mathrm{d}_j$ if and only if $i<j$.
See Figure~\ref{fig:tab}.
\label{lem:tab}
\end{lemma}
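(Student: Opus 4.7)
My plan is to analyze, for each pair $m < m'$, the four-point subconfiguration $\{b_{i_m}, b_{i_{m'}}, d_m, d_{m'}\}$ using the crossing family implied by the distinct-diagonal hypothesis. By the characterization stated immediately before the lemma, a distinct diagonal ensures that the matching $\mathcal{F}=\{\overline{b_{i_\ell}\,d_\ell}\}_{\ell=1}^{k}$ forms a crossing family in the primal; in particular $\overline{b_{i_m}\,d_m}$ and $\overline{b_{i_{m'}}\,d_{m'}}$ intersect internally.

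I will show that the convex hull of $\{b_{i_m}, b_{i_{m'}}, d_m, d_{m'}\}$ is a quadrilateral whose counterclockwise cyclic order is exactly $b_{i_m}, b_{i_{m'}}, d_m, d_{m'}$. This rests on three ingredients: (i) the separability of $B$ from $R$ forces the cyclic order to have the two blues consecutive and the two reds consecutive (no $BRBR$-interleaving is possible); (ii) since the two matched segments cross internally, they must be the diagonals (rather than sides) of the convex quadrilateral, pairing $b_{i_m}$ with $d_m$ and $b_{i_{m'}}$ with $d_{m'}$ at opposite cyclic positions; and (iii) the $1$-avoidance of $B$ and $R$ implies that every red in $R$ (hence $d_m$) sees the blues in the common order $b_{i_1}, b_{i_2}, \ldots$, so from $d_m$ the blue $b_{i_m}$ precedes $b_{i_{m'}}$, which pins down which of the remaining cyclic arrangements is realized.

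From this cyclic order, the standard convex-polygon fact that the counterclockwise angular order of vertices as seen from a fixed vertex matches the hull's cyclic order (starting from the next CCW vertex) gives, from viewpoint $b_{i_{m'}}$, the angular order $d_m, d_{m'}, b_{i_m}$. Restricting to reds in $R'$, this shows that $d_m$ precedes $d_{m'}$ in row $m'$ of the reduced table, which is the ``$i<j \Rightarrow d_i$ before $d_j$ in row $j$'' direction of the iff statement. The converse direction follows by the symmetric application of the same argument with the roles of $m$ and $m'$ swapped: if $i>j$, the analysis applied to the pair $(j,i)$ yields $d_j$ before $d_i$ in row $j$, ruling out ``$d_i$ before $d_j$''.

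The main obstacle will be ingredient (iii): carefully translating the dual-side statement ``each red sees blues in the common order'' into a constraint on the CCW cyclic hull order of the four-point subconfiguration. Resolving this requires the observation that the angular CCW order of vertices from any polygon vertex matches the hull's cyclic order (relative to a consistent reference direction such as the separating-line direction), together with a short case check ruling out the ``wrong'' cyclic order in which $b_{i_{m'}}$ would precede $b_{i_m}$.
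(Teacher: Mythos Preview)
Your approach is correct and takes a genuinely different route from the paper's.

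The paper argues purely combinatorially: it uses only the fact that the rows of $T(B^\star,R^\star)$ form a sequence of permutations in which no two elements reverse their relative order more than once (the allowable-sequence property), and runs a short proof by contradiction---if some $d_a$ appeared after $d_b$ in row $b$ with $a<b$, one chases indices to produce two diagonal entries that would have to swap twice. No primal geometry, no crossing family, no convex-position analysis.

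You instead pass to the primal, invoke the rank-condition equivalence stated just before the lemma to obtain the crossing family $\{\overline{b'_\ell\,r_{d_\ell}}\}$, and read off the required order from the convex quadrilateral on $\{b'_m,b'_{m'},r_{d_m},r_{d_{m'}}\}$. This works, and in fact a single quadrilateral gives the conclusion at \emph{both} blue vertices: from $b'_m$ as well as from $b'_{m'}$ one sees $r_{d_m}$ before $r_{d_{m'}}$, so your converse direction follows from the same picture without an additional ``swap''---your phrasing there is slightly misleading, but the content is fine. The orientation bookkeeping in ingredient~(iii) also goes through: under the standard duality the left-to-right intersection order along $b_i^\star$ is exactly the increasing-slope order of the primal lines $\overline{b_i r_j}$, which for reds confined to one half-plane from $b_i$ coincides with their counterclockwise angular order; the analogous statement for reds viewing blues fixes the chirality of the quadrilateral as you claim.

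What each approach buys: the paper's argument uses only the ``no double swap'' property and therefore extends verbatim to the combinatorial tables and pseudoline setting discussed immediately afterward; it is also shorter and sidesteps all orientation conventions. Your argument is geometrically transparent and essentially re-runs (for one pair at a time) the convex-position reasoning behind Aronov et al.'s rank-condition theorem, but it is tied to the $1$-avoiding hypothesis (needed so that ``distinct diagonal'' already gives the two-sided rank condition and hence the specific matching $b'_\ell \leftrightarrow r_{d_\ell}$) and requires the angular-order translation you correctly flag as the main obstacle.
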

\begin{figure}[h]
\begin{center}
\begin{tabular}{c|c|c|c|c|c|c|c|c|}
\cline{2-9}
\small{${b^\star}'_1$} & \multicolumn{1}{l}{$\mathrm{\mathbf{d_1}}$ } & \multicolumn{1}{c}{}  &  \multicolumn{5}{r}{$\{\mathrm{d_2} ~ \dots ~ \mathrm{d_k}\}$}&\\
\small{${b^\star}'_2$} & \multicolumn{1}{l}{$\mathrm{d_1}$} & \multicolumn{1}{c}{$\mathrm{\bf{d_2}}$}  &  \multicolumn{5}{r}{$\{\mathrm{d_3} ~ \dots ~ \mathrm{d_k}\}$} & \\
\small{${b^\star}'_3$} & \multicolumn{2}{l}{$\{\mathrm{d_1} ~  \mathrm{d_2}\}$} & \multicolumn{1}{c}{$\mathrm{\bf{d_3}}$} & \multicolumn{4}{r}{$\{\mathrm{d_4} ~ \dots ~ \mathrm{d_k}\}$} & \\
 &\multicolumn{7}{c}{\multirow{4}{*}{}}& \\
 & \multicolumn{7}{c}{}&\\
 & \multicolumn{7}{c}{}&\\ 
  & \multicolumn{7}{c}{}&\\
\small{${b^\star}'_{k-1}$} & \multicolumn{5}{l}{$\{\mathrm{d_1} ~ \dots ~ \mathrm{d_{k-2}}\}$} & \multicolumn{1}{r}{$\hspace{1.2cm}\mathrm{\bf{d_{k-1}}}$}  &  \multicolumn{1}{r}{$\mathrm{d_k}$} & \\
\small{${b^\star}'_k$} & \multicolumn{6}{l}{$\{\mathrm{d_1} ~ \mathrm{d_2} ~ \dots ~ \mathrm{d_{k-1}}\}$} & \multicolumn{1}{r}{$\mathrm{\bf{d_k}}$}  &\\
\cline{2-9}
\end{tabular}
\end{center}
\caption{The structure of a subtable with distinct diagonal entries.}
\label{fig:tab}
\end{figure}

\begin{proof}
We can think of $P^\star$ as a set of lines $R^\star$ which are seen by a set of parallel lines $B^\star$. Hence the permutations obtained from the rows of $T(B^\star,R^\star)$ satisfy the property that each permutation is either identical to the preceding permutation or can be obtained from it by reversing one or more non-overlapping increasing substrings. Note that no two elements reverse their order more than once in these permutations (i.e. no two lines cross more than once).

We prove by contradiction. Let row $b$ be the first row where there exists $a<b$ such that $\mathrm{d}_a$ is after $\mathrm{d}_b$. Note that $b>1$. Let row $c$ be the last row where $\mathrm{d}_a$ is after $\mathrm{d}_c$. Note that $b \le c <k$. Since there are $c-1$ elements before $\mathrm{d}_c$ in row $c$ and $\mathrm{d}_a$ is after $\mathrm{d}_c$, there must be an element $\mathrm{d}_d$, with $d>c$, before $\mathrm{d}_c$ in row $c$. Thus, 
\begin{itemize}
\item in row $a$, $\mathrm{d}_a$ is before $\mathrm{d}_d$;
\item in row $c$, $\mathrm{d}_a$ is after $\mathrm{d}_d$; and
\item in row $d$, $\mathrm{d}_a$ is before $\mathrm{d}_d$;
\end{itemize}
which is not possible since $\mathrm{d}_a$ and $\mathrm{d}_d$ cannot reverse their order more than once.
\end{proof}

Lemma~\ref{lem:tab} implies the following.

\begin{corollary}
Let $P=P_{B \vdash R}$. Let $T'$ be a subtable of $T(B^\star,R^\star)$ with distinct diagonal entries. Let $T'$ be obtained by reducing $T(B^\star,R^\star)$ to ${B^\star}'$ and ${R^\star}'$, where $|{B^\star}'|=|{R^\star}'|$. Let ${B^\star}'={{b^\star}'_1,\dots {b^\star}'_k}$. Let $\mathrm{d}_i$ and $\mathrm{d}_j$ (where $i \neq j$) be two elements on the diagonal of $T'$. The intersection of red lines $r^\star_{\mathrm{d}_i}$ and $r^\star_{\mathrm{d}_j}$ is either above both blue lines ${b^\star}'_i$ and ${b^\star}'_j$, or below both of them.
\label{cor:crf-tab}
\end{corollary}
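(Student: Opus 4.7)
The plan is to combine the order information supplied by Lemma~\ref{lem:tab} with a short incidence computation in the $1$-avoiding dual arrangement. Without loss of generality assume $i<j$. Applying Lemma~\ref{lem:tab} to row $i$ and to row $j$ of the subtable $T'$, the diagonal entry $\mathrm{d}_i$ appears before $\mathrm{d}_j$ in both rows. Translating through the definition of $T(B^\star,R^\star)$, this says that as we sweep from left to right along each of the blue lines ${b^\star}'_i$ and ${b^\star}'_j$, the red line $r^\star_{\mathrm{d}_i}$ is encountered before $r^\star_{\mathrm{d}_j}$.

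The geometric key I would isolate as a small incidence lemma: given two lines $r_\alpha, r_\beta$ of distinct slopes, both strictly less than the slope of a third line $b$, the left-to-right order in which $b$ meets $r_\alpha$ and $r_\beta$ is determined entirely by which side of $b$ contains $q=r_\alpha\cap r_\beta$. This is a one-paragraph calculation: writing $q=(x_q,y_q)$ and $b:y=m_b x+c$, the $x$-coordinates of $b\cap r_\alpha$ and $b\cap r_\beta$ have the form $(y_q - c - m_\bullet x_q)/(m_b - m_\bullet)$, and their comparison reduces, using $m_b>m_\alpha,m_\beta$ together with the sign of $m_\alpha-m_\beta$, to comparing $y_q - m_b x_q$ with $c$, i.e.\ to whether $q$ lies above or below $b$.

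Applying this fact with $r_\alpha = r^\star_{\mathrm{d}_i}$, $r_\beta = r^\star_{\mathrm{d}_j}$ and, in turn, $b = {b^\star}'_i$ and $b = {b^\star}'_j$ gives the conclusion: both blue lines have slopes exceeding every red slope (the $1$-avoiding hypothesis), and both witness the same left-to-right order of the two red crossings, so $q = r^\star_{\mathrm{d}_i} \cap r^\star_{\mathrm{d}_j}$ must lie on the same side of ${b^\star}'_i$ as of ${b^\star}'_j$. This is exactly the statement that $q$ is above both, or below both, of the two blue lines. I do not expect any real obstacle here; the only technical content is the slope comparison in the incidence lemma, and the rest is a direct rereading of Lemma~\ref{lem:tab} in the dual plane.
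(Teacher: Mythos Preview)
Your proof is correct and follows essentially the same approach the paper intends: the paper states the corollary as an immediate consequence of Lemma~\ref{lem:tab} without further argument, and your proposal supplies exactly the natural translation---same relative order of $\mathrm{d}_i,\mathrm{d}_j$ in rows $i$ and $j$ implies the red--red intersection lies on the same side of both blue lines. Your explicit slope computation is a welcome elaboration; the paper's implicit version is the allowable-sequence reading (two indices swap in the table precisely when the sweep crosses their intersection), but these are the same fact.
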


Figure~\ref{fig:tab-line} illustrates an example of a $1$-avoiding line arrangement together with its table representation.

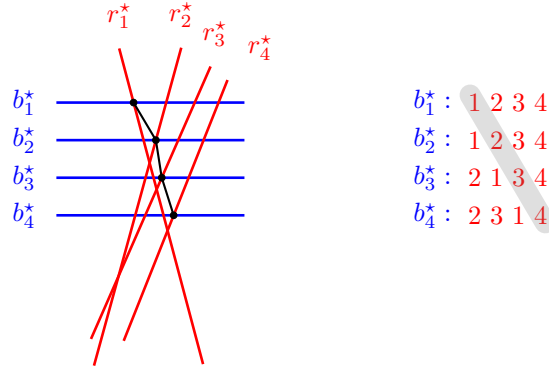
\begin{figure}[h]
\begin{center}
\begin{tikzpicture}[scale=.5]
	\def \lw {1};
	\def \d {5};

	\foreach \i in {1,2,3,4}
		\draw[blue,line width=\lw pt,name path=B\i] (0,{6-\i}) -- (-\d,{6-\i}) node [label=left:{$b^\star_\i$}] {};
	
	\path[draw,red,line width=\lw pt, name path=R1] (-1.1,-1.96) -- (-3.33, 6.44) node [label=above:{$r^\star_1$}] {};
	\path[draw,red,line width=\lw pt, name path=R2] (-4,-2) -- (-1.68,6.46) node [label=above:{$r^\star_2$}] {};
	\path[draw,red,line width=\lw pt, name path=R3] (-4.08,-1.29) -- (-0.9,5.96) node [label=above:{$~r^\star_3$}] {};
	\path[draw,red,line width=\lw pt, name path=R4] (-3.21,-1.34) -- (-0.45,5.6) node [label=above right:{$r^\star_4$}] {};

	\path [name intersections={of=R1 and B1,by=D1}];
	\path [name intersections={of=R2 and B2,by=D2}];
	\path [name intersections={of=R3 and B3,by=D3}];
	\path [name intersections={of=R4 and B4,by=D4}];			

	\foreach \i in {1,2,3,4}{
		\draw[fill] (D\i) circle (2.5*\lw pt);
	}
	
	\draw[ thick] (D1) -- (D2) -- (D3) --(D4);

	\foreach \i in {1,2,3,4}{
		\node[blue] at (\d,{6-\i}) {{$b^\star_\i:$}};	
	}

	\node[red] at ({1.4*\d},{5}) {{$1 ~ 2 ~ 3 ~ 4$}};	
	\node[red] at ({1.4*\d},{4}) {{$1 ~ 2 ~ 3 ~ 4$}};	
	\node[red] at ({1.4*\d},{3}) {{$2 ~ 1 ~ 3 ~ 4$}};	
	\node[red] at ({1.4*\d},{2}) {{$2 ~ 3 ~ 1 ~ 4$}};				
	
	\draw[line cap=round,gray,nearly transparent, line width=8pt] (1.2*\d,5.2)--(1.6*\d,1.8);
		
\end{tikzpicture}

\end{center}
\caption{The polygonal line in the arrangement corresponds to the diagonal in the table.}
\label{fig:tab-line}
\end{figure}

Using the table representation $T(B^\star,R^\star)$, it is easy to see that any $1$-avoiding $2n$-point set $P=P_{B \vdash R}$ has a crossing family of size at least $\sqrt{n}$. The middle row of the table has $n$ elements, which implies there is a subsequence of size at least $\sqrt{n}$ in the middle row which is either increasing or decreasing.  Denote this monotone subsequence by $S=\{s_1,s_2,\dots, s_l\}$. Let $R^\star_S=\{r^\star_{s_i} \mid s_i \in S\}$. If the subsequence is increasing, then all the red lines in $R^\star_S$ intersect after the middle blue line, and hence $T(B^\star,R^\star)$ can be reduced to $R^\star_S$ and any $l$-subset of $\{b^\star_1,b^\star_2,\dots,b^\star_{\frac{n}{2}}\}$ to form a subtable with distinct diagonal entries. If the subsequence is decreasing, all red lines in $R^\star_S$ intersect before the middle blue line, and hence there exists a subtable with distinct diagonal entries among the lower half of $T(B^\star,R^\star)$.

In the following, we study a generalized variant of a table, called a combinatorial table, where
each row is a permutation of $[n]$ and no two elements reverse their order more than once. The sequence of the permutations of a combinatorial table may contain identical permutations and any subsequence containing distinct permutations of a combinatorial table is a subsequence of an allowable sequence. A combinatorial table corresponds to a table representation of a $1$-avoiding arrangement which may not be realizable using straight lines (i.e. $1$-avoiding arrangement of pseudolines).

\begin{definition}
Let $\Pi=\Pi_1\Pi_2\cdots\Pi_l$ be a sequence of $l$ permutations of $[n]$ such that no two elements reverse their order more than once. Let $S=\{s_1,\ldots,s_m\}$ be an $m$-subset of $[l]$ where $s_1<\dots<s_m$. Let $E$ be an $m$-subset of $[n]$. A \emph{combinatorial} table $\mathtt{T}(S,E)$ is a table where row $i$ is the subsequence of $\Pi_{s_i}$ containing only the elements that are in $E$. We may refer to a combinatorial table as a table for simplicity.
\end{definition}

\begin{lemma}
Let $\Pi=\Pi_1\Pi_2\cdots\Pi_{{n \choose 2} +1}$ be a simple allowable sequence with $\Pi_1=1,2,\dots,n$. There exits $S=\{s_1,\dots, s_{n-1}\}$ where  $S \subset [{{n \choose 2}+1}]$ and $s_1<\dots<s_{n-1}$, such that table $\mathtt{T}(S,[2..n])$ has distinct entries on its diagonal.
\end{lemma}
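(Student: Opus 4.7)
The plan is to track element $1$ through the sequence of permutations and use it to pick out the rows of $S$. The first step is to prove that the position of $1$ in $\Pi_t$ is non-decreasing in $t$. Suppose toward contradiction that the position of $1$ strictly decreases at some step $t$; then $1$ swaps with an element $j$ sitting just to its left, so the pair $(1,j)$ reverses from ``$j$ before $1$'' to ``$1$ before $j$''. But in $\Pi_1$ element $1$ is before $j$, so there must have been an earlier step at which $(1,j)$ reversed from ``$1$ before $j$'' into ``$j$ before $1$''. That gives two reversals of the pair $(1,j)$, contradicting the definition of an allowable sequence. Hence the position of $1$ is non-decreasing, starting at $1$ and ending at $n$.

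Since the sequence is simple, the position of $1$ can change by at most one per step, and therefore strictly increases on exactly $n-1$ of the steps; call them $t_1 < t_2 < \cdots < t_{n-1}$, and observe that element $1$ occupies position $i$ in $\Pi_{t_i}$. Let $j_i$ denote the element at position $i+1$ of $\Pi_{t_i}$, i.e., the element with which $1$ swaps in the transition from $\Pi_{t_i}$ to $\Pi_{t_i+1}$. Since each pair reverses exactly once and $1$ must reverse its order with every element of $\{2, \ldots, n\}$, the elements $j_1, j_2, \ldots, j_{n-1}$ form a permutation of $\{2, \ldots, n\}$.

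I would then take $S := \{t_1, t_2, \ldots, t_{n-1}\}$, which is automatically increasing. Removing element $1$ from position $i$ of $\Pi_{t_i}$ shifts the elements originally at positions $i+1, i+2, \ldots$ one step to the left, so the $i$-th entry of the restriction of $\Pi_{t_i}$ to $\{2, \ldots, n\}$ is exactly $j_i$. Consequently, the diagonal of the combinatorial table $\mathtt{T}(S, [2..n])$ is $(j_1, j_2, \ldots, j_{n-1})$, which consists of distinct entries, as required. The entire argument hinges on the monotonicity of the position of $1$; once that is in place the remaining construction and verification are routine, so I do not anticipate any serious obstacle beyond making the monotonicity step clean.
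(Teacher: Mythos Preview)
Your proposal is correct and follows essentially the same approach as the paper: both select $S$ to be the indices at which element $1$ is about to swap with its right neighbour, and both conclude that the diagonal entries are the $n-1$ distinct elements with which $1$ swaps. Your write-up is more detailed in making the monotonicity of the position of $1$ explicit, whereas the paper compresses this into the single observation that $1$ starts leftmost and no pair flips more than once.
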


\begin{proof}
Let $S$ be the set of all numbers $i < {n \choose 2}+1$ where $\Pi_i$ differs from $\Pi_{i+1}$ in having $1$ and its right neighbour flipped. Let $\mathrm{d}_1,\dots,\mathrm{d}_{n-1}$ be the diagonal entries of $\mathtt{T}(S,[2..n])$. Note that $1$ is the leftmost element in $\Pi_1$ and no pair of elements flip more than once. Thus, $\mathrm{d}_i$ is the $i$-th element flipped by $1$. If an entry $\mathrm{d}_j$ on the diagonal is repeated, then $1$ and $\mathrm{d}_j$ have to flip more than once in the allowable sequence, which is not possible.
\end{proof}

\begin{lemma}
Let $\Pi=\Pi_1\Pi_2\cdots \Pi_l$ be a sequence of permutations of $[n]$ where $\Pi$ may contain repeated permutations but all repetitions of the same permutation are consecutive, and the largest subsequence containing all distinct permutations is a subsequence of an allowable sequence. There exists $\Pi$ with $l=\Theta(n^2)$, such that for no $n$-subset $S$ from $[l]$, table $T(S,[n])$ has distinct elements on its diagonal.
\end{lemma}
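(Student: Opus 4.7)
The plan is to construct $\Pi$ explicitly. A key preliminary observation is that no single permutation can appear $n$ or more times consecutively in an admissible $\Pi$: otherwise, selecting $n$ rows from that block produces the diagonal $(\sigma(1), \ldots, \sigma(n))$, which is already a permutation of $[n]$ and hence distinct. Therefore each block has length at most $n-1$, and reaching $l = \Theta(n^2)$ requires $\Theta(n)$ distinct permutations.

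I would take $t = \Theta(n)$ distinct permutations $\sigma_1, \sigma_2, \ldots, \sigma_t$ forming a subsequence of an allowable sequence from the identity to its reverse, and let $\Pi$ consist of $n-1$ consecutive copies of each $\sigma_j$ in order, giving $l = t(n-1) = \Theta(n^2)$. The $\sigma_j$'s would be chosen so that the $t \times n$ value matrix $M(j, i) = \sigma_j(i)$ has a \emph{no rainbow staircase} property: for every non-decreasing index sequence $j_1 \le j_2 \le \cdots \le j_n$ in $[t]$ with each value appearing at most $n-1$ times, the tuple $(M(j_i, i))_{i=1}^n$ contains a repeated entry. An explicit working choice exists for small $n$ --- for instance $(1,2,3,4)$, $(2,4,1,3)$, $(4,3,2,1)$ for $n = 4$, with each permutation repeated $3$ times, can be verified to satisfy both the allowable-subsequence condition (the six pairs flip in the partition $\{(1,2),(1,4),(3,4)\}$ then $\{(1,3),(2,3),(2,4)\}$) and the no-rainbow-staircase condition by enumerating all twelve multiplicity shapes. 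For general $n$, I would extend this via an interleaving template in which intermediate $\sigma_j$'s move small and large elements toward the middle in a synchronized way, so that each column's value is confined to a small set and pigeonhole forces a collision on any staircase.

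The verification proceeds by case analysis on the ``shape'' of the index sequence. Any $n$-subset of rows induces at least two distinct $\sigma_j$'s (since each block has size $n-1 < n$), so the multiplicity vector $(m_1, \ldots, m_t)$ of the selection has at least two nonzero entries, each at most $n-1$. For each shape, one locates a forced collision by identifying an index $i$ where the transition between blocks places a value already appearing earlier on the diagonal. The main obstacle is designing the intermediate $\sigma_j$'s so that (a) the full sequence is a valid allowable subsequence, i.e.\ no pair of elements reverses its order more than once across $\sigma_1 \to \cdots \to \sigma_t$, and (b) every staircase contains a collision. The delicate shapes are those concentrated on one or two adjacent blocks, such as $(1, 1, \ldots, 1, j)$, which impose strong local constraints like $\sigma_j(n) \neq n$ and $\sigma_j(1) \neq 1$ for every $j \ge 2$; propagating these constraints consistently while preserving allowability across a length-$\Theta(n)$ subsequence is the technical crux.
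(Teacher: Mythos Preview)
Your framework is reasonable and your $n=4$ example is correct, but the proposal has a genuine gap: you never actually construct the sequence for general $n$. You say you ``would extend this via an interleaving template'' and then explicitly flag the design of the intermediate $\sigma_j$'s --- satisfying both allowability and the no-rainbow-staircase condition simultaneously --- as ``the technical crux,'' without resolving it. Since the entire content of the lemma is the existence of such a $\Pi$ for every $n$, a proof that stops at $n=4$ and a plan is not a proof.

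The paper's construction is considerably simpler than the template you sketch, and it is worth seeing why. Instead of looking for general permutations with restricted column ranges, the paper keeps \emph{every} column $j$ two-valued throughout: position $j$ holds either $j$ or $n{+}1{-}j$, and each distinct permutation differs from the previous one by the single transposition of one such pair. This immediately disposes of the allowability concern (each element moves exactly once, so every pair reverses at most once), and it reduces the distinct-diagonal condition to a per-pair constraint: for each $j\le n/2$, the selected rows $s_j$ and $s_{n+1-j}$ must lie on the same side of the unique row where the pair $(j,n{+}1{-}j)$ is swapped. The remaining idea is the \emph{scheduling} of those swap rows: alternate a ``middle'' pair $(n/2{-}k,\;n/2{+}1{+}k)$ with an ``outer'' pair $(1{+}k,\;n{-}k)$, leaving roughly $n/2$ identical rows between consecutive swaps. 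The outer-pair constraint at row $\approx n$ forces almost nothing to be selected before it, and each subsequent alternation ratchets the allowed count down again; summing, fewer than $n$ rows can be selected in total.

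So your difficulty --- ``propagating these constraints consistently while preserving allowability'' --- dissolves once you restrict to the two-values-per-column family; the hard part you anticipated is exactly what the paper's construction sidesteps.
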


\begin{proof}[Proof sketch]
Let $l=\lfloor \frac{n}{2} \rfloor^2+n-1$. Construct $\Pi=\Pi_1 \Pi_2 \cdots \Pi_l$ such that $\Pi_1= 1 2 \cdots n$, and for all $1 \le i < l$, $\Pi_{i+1}$ is the same as $\Pi_i$ except for the following cases:
\begin{itemize}
\item if $i=\lceil \frac{n}{2} \rceil +k(n+1)$, where $k \in \{0,1,\dots,\lceil \frac{n-1}{4} \rceil-1\}$, the permutation $\Pi_{i+1}$ differs from $\Pi_{i}$ in having $\frac{n}{2}-k$ and $\frac{n}{2}+1+k$ flipped, and 
\item if $i=n+k(n+1)$, where $k \in \{0,1,\dots,\lfloor \frac{n}{4} \rfloor-1\}$, the permutation $\Pi_{i+1}$ differs from $\Pi_{i}$ in having $1+k$ and $n-k$ flipped.
\end{itemize}
\end{proof}

\subsection{Relaxation on the combinatorial properties}
As a step towards finding maximum crossing families in $1$-avoiding point sets, we study a relaxed notion of crossing families.

Let $\text{crf}(P_1,P_2)$ denote the size of the maximum crossing family whose segments connect a point in $P_1$ to a point in $P_2$.
Let $B=\{\mathtt{b}_1,\dots,\mathtt{b}_n\}$ and $R=\{\mathtt{r}_1,\dots,\mathtt{r}_n\}$ be two separable sets of $n$ points such that $B$ avoids $R$. Let $B^\star=\{\mathtt{b}^\star_1,\dots,\mathtt{b}^\star_n\}$ and $R^\star=\{\mathtt{r}^\star_1,\dots,\mathtt{r}^\star_n\}$ denote the duals of $B$ and $R$, respectively.
Recall that $\text{crf}(B,R)=k$ if and only if there are $k$-subsets ${B^\star}' \subset B^\star$ and ${R^\star}' \subset R^\star$ such that $T'=T({B^\star}',{R^\star}')$ has distinct diagonal entries. A necessary property for $T'$ (by Corollary~\ref{cor:crf-tab}) is that for any two diagonal entries $\mathrm{d}_i$ and $\mathrm{d}_j$ of $T'$, the intersection of red lines $r^\star_{\mathrm{d}_i}$ and $r^\star_{\mathrm{d}_j}$ is either above both blue lines ${b^\star}'_i$ and ${b^\star}'_j$, or below both of them. We refer to this property as the ``sidedness'' property in crossing families.
In the following, we introduce a relaxation on crossing families, called ``side compatibility'', which preserves the sidedness property.

We start by defining some terminology. (See Figure~\ref{fig:s-vs-os} for illustrations.)
Given a line arrangement $\mathcal{L}=\{l_1,\dots,l_n\}$, the \emph{bar representation} of $\mathcal{L}$ is composed of $n \choose 2$ horizontal \emph{bars} (i.e. horizontal segments) arranged such that the $i$-th bar from below represents the $i$-th intersection point $p_i$ from below in $\mathcal{L}$. We assume no intersection points in $\mathcal{L}$ have the same $y$-coordinate (we rotate $\mathcal{L}$ if necessary).
For intersection point $p_i$ of lines $l_{a_i}$ and $l_{b_i}$ in $\mathcal{L}$, the corresponding bar is a segment from $(a_i,i)$ to $(b_i,i)$. 

A \emph{bar stack} $\mathcal{B}_{l,n}$ is an arrangement of $l$ bars $B_1 B_2 \cdots B_l$ where ($\expandafter{\romannumeral 1}$) each bar $B_i$ extends from $(a_i,i)$ to $(b_i,i)$ where $a_i$ and $b_i$ are integers and $1 \le a_i < b_i \le n$,
and ($\expandafter{\romannumeral 2}$) for two bars $B_i$ and $B_j$ either $a_i \neq a_j$ or $b_i \neq b_j$ or both.
We say $B_i$ is at \emph{height} $i$. Note that $\mathcal{B}_{l,n}$ may not come from the representation of a line arrangement.
This makes bar stacks more expressive than line arrangements at representing ``intersections''. However, bar stacks may represent fewer ``intersections'' compared to line arrangements. Having fewer than $n \choose 2$ bars (``intersections'') makes bar stacks easier to work with.

Let $\mathcal{W}_{n,l}=\mathit{w}_1 \mathit{w}_2 \dots \mathit{w}_n$ be a sequence of $n$ integers from zero to $l$ (the numbers may not be distinct).
Let $\mathcal{C}$ be a subset of $[n]$ (containing distinct numbers) for which there exists an injective function $f: \mathcal{C} \rightarrow \mathcal{W}_{n,l}$ such that for any bar $B_i$ whose endpoints' $x$-coordinates both belong to $\mathcal{C}$, either
\begin{enumerate}[label={(\bfseries P\arabic*)}]
\item $f(a_i) < i$ and $f(b_i) <i$\label{bar-P1}, or
\item $f(a_i) \ge  i$ and $f(b_i) \ge i$\label{bar-P2}. 
\end{enumerate}
We say $\mathcal{C}$ is a \emph{side compatible subset} for $\mathcal{W}_{n,l}$ in $\mathcal{B}_{l,n}$. We refer to the function $f$ as the \emph{mapping function} for $\mathcal{C}$.
If for a side compatible subset $\mathcal{C}$, we have the additional property that in case of \ref{bar-P1}, $f(a_i)<f(b_i)$ and in case of \ref{bar-P2}, $f(a_i)>f(b_i)$, we say $\mathcal{C}$ is an \emph{ordered} side compatible subset.


The mapping function $f$ of a side compatible subset $\mathcal{C}$ matches every element $c \in \mathcal{C}$ to $f(c)$; thus, the pair $(\mathcal{C},f)$ defines a matching with edges $\{(c,f(c)) \mid c \in \mathcal{C}\}$. A side compatible subset satisfies the sidedness property, in the sense that for $c_i,c_j \in  \mathcal{C}$, where $c_i<c_j$,  the height of the bar with horizontal interval $[c_i,c_j]$ (representing the ``intersection'' of $c_i$ and $c_j$) is less than both $f(c_i)$ and $f(c_j)$, or greater than or equal to both of them.

We can visualize a side compatible subset in the following way. Think of each $\mathit{w}_i \in \mathcal{W}_{n,l}$ as a distinct horizontal wire above $B_{\mathit{w}_i}$ and below $B_{\mathit{w}_i+1}$. If $\mathit{w}_i =0$ the wire is below $B_1$, and if $\mathit{w}_i = l$ the wire is above $B_l$. ($\mathcal{W}_{n,l}$ is represented by $n$ different horizontal wires.) We refer to the vertical lines going through endpoints of bars as \emph{pillars}. The pillar through $e \in [n]$ is the vertical line $x=e$.
We may think of function $f:\mathcal{C} \rightarrow \mathcal{W}_{n,l}$ as assigning a marble for each $\mathit{c} \in \mathcal{C}$ on the intersection point of the pillar through $c$ and wire $f(c)$.
A \emph{marbling} of \emph{size} $m$ is a set of $m$ marbles such that all marbles lie on the intersection points of the pillars  and the wires. A \emph{valid} marbling is a marbling such that each pillar or wire contains at most one marble.
We say an endpoint of a bar is \emph{associated} with a marble if the vertical line (pillar) through it contains a marble. (A pillar may go through the endpoints of a number of bars.)
A \emph{side compatible marbling} is a valid marbling such that for every bar both of whose endpoints are associated with marbles, both marbles are above or below the bar.
An ordered side compatible subset corresponds to a side compatible marbling such that for every bar both of whose endpoints are associated with marbles, the marble associated with the right endpoint is closer to the bar. We refer to such a marbling as an \emph{ordered side compatible marbling}.

Let $P_{B \vdash R}$ be a $1$-avoiding $2n$-point set, where $B=\{\mathtt{b}_1,\dots,\mathtt{b}_n\}$ and $R=\{\mathtt{r}_1,\dots,\mathtt{r}_n\}$. $B^\star=\{\mathtt{b}^\star_1,\dots,\mathtt{b}^\star_n\}$ and $R^\star=\{\mathtt{r}^\star_1,\dots,\mathtt{r}^\star_n\}$ denote the duals of $B$ and $R$, respectively.
Let $\mathcal{W}=\mathit{w}_1,\dots,\mathit{w}_n$, where $\mathit{w}_i$ is the number of intersection points among $R^\star$ that are below line $\mathtt{b}^\star_i$.
Consider a side compatible marbling corresponding to a side compatible subset $\mathcal{C}$ for $\mathcal{W}$ in the bar representation of $R^\star$ using the mapping function $f$.
A marble at the intersection point of the pillar through $c$ and wire $f(c)$ corresponds to the segment $\mathtt{r}_c\mathtt{B}_{f(c)}$.
This correspondence implies that any crossing family defines a marbling that is side compatible (by Corollary~\ref{cor:crf-tab}). So clearly, $\text{crf}(B,R)=k$ implies that there is a side compatible subset of size $k$ for $\mathcal{W}$ in the bar representation of $R^\star$. However, 
the reverse is not true. See Figure~\ref{fig:s-vs-os} for an example.



\begin{figure}[h]
\begin{center}
\begin{tikzpicture}[scale=.5]
	\def \lw {1};
	\def \d {5};

\begin{scope}[shift={(-.5*\d,0)}]

	\foreach \i in {1,2,3}
		\draw[blue,line width=\lw pt,name path=B\i] (.5*\d,\i) -++ (-\d,0) node [label=left:{$b^\star_\i$}] {};
	
	\path[draw,red,line width=\lw pt, name path=R1]  (-0.04, -0.99) -- (1.22,4.92)node [label=above left:{$r^\star_1$}] {};
	\path[draw,red,line width=\lw pt, name path=R2]  (-1.73,-1) -- (1.57,4.97) node [label=above:{$~r^\star_2$}] {};
	\path[draw,red,line width=\lw pt, name path=R3]  (-2.65,-1.04) -- (2.12,4.9) node [label=above right:{$r^\star_3$}] {};
			
\end{scope}
			
	\foreach \i in {1,2,3}{
		\draw[blue,line width=\lw pt] (2*\d,{\i}) -- (\d,{\i}) ;
		\draw[red,dotted,line width=\lw pt] ({\d+\i},0) -- ({\d+\i},5) ;
	}

	\draw[fill] (\d+1,3) circle (3.5*\lw pt);	
	\draw[fill] (\d+2,1) circle (3.5*\lw pt);	
	\draw[fill] (\d+3,2) circle (3.5*\lw pt);	

	\draw[red, line width=\lw pt] (\d+2,2.5) -- (\d+3,2.5) ;
	\draw[red, line width=\lw pt] (\d+3,3.5) -- (\d+1,3.5) ;
	\draw[red, line width=\lw pt] (\d+2,4.5) -- (\d+1,4.5) ;

	\draw[red,dotted,line width=.5 pt] ({\d-1},2.5) -++ ({\d+2},0);
	\node[red] at ({2*\d+3},2.5) {$y=1$};
	\draw[red,dotted,line width=.5 pt] ({\d-1},3.5) -++ ({\d+2},0);
	\node[red] at ({2*\d+3},3.5) {$y=2$};
	\draw[red,dotted,line width=.5 pt] ({\d-1},4.5) -++ ({\d+2},0);
	\node[red] at ({2*\d+3},4.5) {$y=3$};

 \node	at (1.5*\d,-1) {$\mathcal{W}=\langle 0, 0, 1 \rangle$};
 
\end{tikzpicture}

\end{center}
\caption{The left figure shows a $1$-avoiding line arrangement $B^\star \cup R^\star$, where $\text{crf}(B,R)=2$. On the right, the bar representation of $R^\star$ together with the wires (blue) corresponding to $\mathcal{W}=\langle 0,0,1 \rangle$ (representing $B^\star$) is depicted. The dots represent a side compatible marbling of size three. Note that the largest \emph{ordered} side compatible marbling is of size two.}
\label{fig:s-vs-os}
\end{figure}
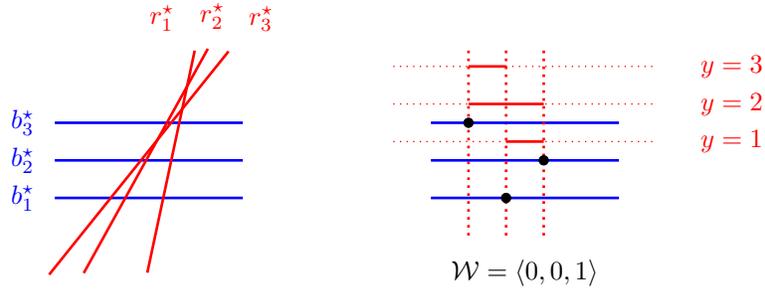

The sidedness property implies that the intersection point of the supporting lines of the two segments corresponding to two marbles in a side compatible marbling is either on both segments (i.e., the segments are crossing) or on neither of them. This is easy to verify under duality. See Figure~\ref{fig:s-os-dual}.

\begin{figure}[h]
\begin{center}
\begin{tikzpicture}[scale=.5]
	\def \lw {1};
	\def \d {8};

	\foreach \j in {1,2,3,4}{
		\foreach \i in {1,2}{
			\draw[blue,line width=\lw pt,name path=B\j\i] (\j*\d,\i) -++ (.5*\d,0);
		}

		\pgfmathparse{Mod(\j,2)==0?1:0}
		\ifnum \pgfmathresult=0
			\path[draw,red,line width=\lw pt, name path=R1]  (\j*\d,3) -- (\j*\d+2,0);
			\path[draw,red, line width=\lw pt, name path=R2]  (4+\j*\d,3) -- (\j*\d+1.5,0);
		\fi

		\ifnum \pgfmathresult>0
			\path[draw,red,line width=\lw pt, name path=R1]  (\j*\d+2,3) -- (\j*\d,0);
			\path[draw,red,line width=\lw pt, name path=R2]  (\j*\d+1.5,3) -- (\j*\d+4,0);
		\fi
		
		\path [name intersections={of=B\j2 and R1,by=M\j21}];
		\path [name intersections={of=B\j1 and R2,by=M\j12}];				
		\path [name intersections={of=B\j1 and R1,by=M\j11}];
		\path [name intersections={of=B\j2 and R2,by=M\j22}];				

 	}
 	
	\foreach \k in {121,112,221,212,311,322,411,422}
		\draw[fill] (M\k) circle (4pt);
	
	\fill[opacity=.2,green] (M121) -- (\d,3) -- (\d,2) -- (M121);
	\fill[opacity=.2,green] (M121) -- (1.5*\d,2) -- (1.5*\d,0) -- (\d+2,0) -- (M121);
	\fill[pattern=north west lines,opacity=.5] (M112) -- (1.5*\d,3) -- (\d,3) -- (\d,1) -- (M112);
	\fill[pattern=north west lines,opacity=.5] (M112) -- (1.5*\d,1) -- (1.5*\d,0)  -- (\d+1.5,0) -- (M112);
	
	\fill[opacity=.2,green] (M221) -- (2*\d+2,3) -- (2*\d,3) -- (2*\d,2) -- (M221);
	\fill[opacity=.2,green] (M221) -- (2.5*\d,2) -- (2.5*\d,0) -- (2*\d,0) -- (M221);	
	\fill[pattern=north west lines,opacity=.5] (M212) -- (2*\d+1.5,3) -- (2*\d,3) -- (2*\d,1) -- (M212);
	\fill[pattern=north west lines,opacity=.5] (M212) -- (2.5*\d,1) -- (2.5*\d,0)  -- (M212);		
			
	\fill[opacity=.2,green] (M311) -- (3*\d,3) -- (3*\d,1) -- (M311);
	\fill[opacity=.2,green] (M311) -- (3.5*\d,1) -- (3.5*\d,0) -- (3*\d+2,0) -- (M311);
	\fill[pattern=north west lines,opacity=.5] (M322) -- (3.5*\d,3) -- (3*\d,3) -- (3*\d,2) -- (M322);
	\fill[pattern=north west lines,opacity=.5] (M322) -- (3.5*\d,2) -- (3.5*\d,0)  -- (3*\d+1.5,0) -- (M322);
			
	\fill[opacity=.2,green] (M411) -- (4*\d+2,3) -- (4*\d,3) -- (4*\d,1) -- (M411);
	\fill[opacity=.2,green] (M411) -- (4.5*\d,1) -- (4.5*\d,0) -- (4*\d,0) -- (M411);	
	\fill[pattern=north west lines,opacity=.5] (M422) -- (4*\d+1.5,3) -- (4*\d,3) -- (4*\d,2) -- (M422);
	\fill[pattern=north west lines,opacity=.5] (M422) -- (4.5*\d,2) -- (4.5*\d,0)  -- (M422);		
						
\end{tikzpicture}
\end{center}
\caption{The shaded double-wedges show the corresponding segments of two marbles from a side compatible marbling in the dual plane. Only the first two pairs of marbles (on the left) are ordered side compatible.}
\label{fig:s-os-dual}
\end{figure}
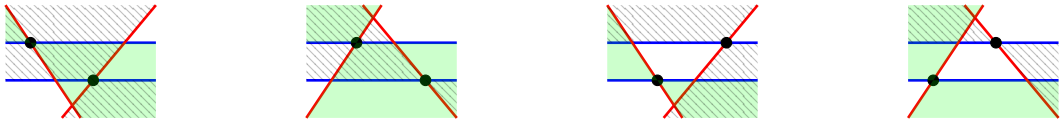

Note that the pair $(\mathcal{B}_{l,n}, \mathcal{W}_{n,l})$ defines a sequence of permutations of $[n]$.
In particular, we can assign a permutation to each wire $w$ as follows: we start with the identity permutation, and consider a horizontal sweep line that moves top-down until it hits wire $w$; whenever the sweep line hits a bar with horizontal interval $[i,j]$, we swap the positions of $i$ and $j$ in the permutation.
For a side compatible subset $\mathcal{C}$, let $\mathcal{M}_\mathcal{C}$ denote the corresponding side compatible marbling for $\mathcal{C}$.
Let $T(\mathcal{C})$ denote the table where row $i$ is the permutation, restricted to elements of $\mathcal{C}$, assigned to the $i$-th wire (numbered top-down) containing a marble.
In order for a side compatible subset $\mathcal{C}$ to correspond to a crossing family, the diagonal entries of $T(\mathcal{C})$ should all be distinct. It is easy to verify that the diagonal entries are distinct only if $\mathcal{C}$ is an ordered side compatible subset; or in other words,
for the segments corresponding to a side compatible marbling to cross pairwise, the marbling should be ordered side compatible. This is easy to see under duality because if two marbles satisfy side compatibility but not ordered side compatibility, then the supporting lines of the corresponding two segments intersect outside both of them.
Thus, $\text{crf}(B,R)=k$ if and only if the largest \emph{ordered} side compatible subset for $\mathcal{W}$ in the bar representation of $R^\star$ is of size $k$.
Hence, $\min_B \text{crf}(B,R) = k$ (where the minimum is taken over all point sets $B$ that avoid $R$) if and only if for any $\mathcal{W}_{n,{n \choose 2}}$, there exists an ordered side compatible subset $\mathcal{C}$ in the bar representation of $R^\star$, where $|\mathcal{C}|=k$.

\begin{observation} \label{obs:bar}
Let $\mathcal{W}$ be a sequence of $n$ integers from zero to $n \choose 2$. For any arrangement $\mathcal{L}$ of $n$ lines, there exists $\mathcal{W}$ such that the largest side compatible subset $\mathcal{C}$ for $\mathcal{W}$ in the bar representation of $\mathcal{L}$ is of size at most $\frac{n}{2}$.
\end{observation}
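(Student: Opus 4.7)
The plan is to exhibit a single $\mathcal{W}$ that works uniformly for every arrangement $\mathcal{L}$ on $n$ lines, which trivially certifies the existential statement. With $l := \binom{n}{2}$, define $w_i = 0$ for $1 \le i \le \lceil n/2 \rceil$ and $w_i = l$ for the remaining indices. In the wire/marble picture, this places $\lceil n/2 \rceil$ wires strictly below every bar and $\lfloor n/2 \rfloor$ wires strictly above every bar, with nothing in between.

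Given any side compatible subset $\mathcal{C}$ with mapping function $f$, I would classify each $c \in \mathcal{C}$ as \emph{low} if $w_{f(c)} = 0$ and \emph{high} if $w_{f(c)} = l$. The key claim is that $\mathcal{C}$ cannot contain both a low and a high element. Indeed, for any pair $c, c' \in \mathcal{C}$ the corresponding intersection bar $B_h$ sits at some height $h$ with $1 \le h \le l$; condition \ref{bar-P1} requires $f(c), f(c') < h$ and condition \ref{bar-P2} requires $f(c), f(c') \ge h$. But the value $0$ never satisfies $\ge h$ for $h \ge 1$, and the value $l$ never satisfies $< h$ for $h \le l$, so a mixed low/high pair violates both \ref{bar-P1} and \ref{bar-P2}.

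Hence every element of $\mathcal{C}$ is low, or every element is high, and the injectivity of $f$ caps each case by the number of wires at the corresponding value, giving $|\mathcal{C}| \le \lceil n/2 \rceil$. This matches $n/2$ exactly when $n$ is even (and is off by only $\tfrac{1}{2}$ when $n$ is odd, consistent with the informal rendering in the observation). There is no real technical obstacle: the construction is deliberately blunt, using only the two extreme wire heights so that side compatibility collapses to a binary ``all-low or all-high'' dichotomy, after which the bound follows immediately from counting wires. Note that this argument uses no structural feature of $\mathcal{L}$ beyond the fact that every bar height lies in $[1,l]$, so the same $\mathcal{W}$ serves every line arrangement simultaneously.
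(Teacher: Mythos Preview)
Your proposal is correct and takes essentially the same approach as the paper: choose $\mathcal{W}$ to consist of $n/2$ zeros and $n/2$ copies of $\binom{n}{2}$, and observe that any side compatible subset of size larger than $n/2$ would have to place one marble on a lowest wire and one on a highest wire, whereupon the bar between those two pillars violates both \ref{bar-P1} and \ref{bar-P2}. One small notational slip: you write $w_{f(c)}=0$ at first (treating $f(c)$ as an index) but then compare $f(c)$ directly to $h$ (treating it as a value); the paper is similarly loose here, and the intended meaning is clear either way.
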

\begin{proof}
Let $\mathcal{W}=\{0\}^\frac{n}{2}\{{n \choose 2}\}^\frac{n}{2}$. If $|\mathcal{C}|>\frac{n}{2}$, then there exits $c_1,c_2 \in \mathcal{C}$ such that $f(c_1)=0$ and $f(c_2)= {n \choose 2}$ (where $f$ is the mapping function for $\mathcal{C}$). However, this implies that the bar whose horizontal interval extends from $c_1$ to $c_2$ does not satisfy \ref{bar-P1} or \ref{bar-P2}.
\end{proof}

\begin{lemma}
Given any bar stack $\mathcal{B}_{n,n}$, there exists an $n$-subsequence $\mathcal{W}$ of $\langle 0~1~\cdots~n\rangle$, such that the largest side compatible subset for $\mathcal{W}$ in $\mathcal{B}_{n,n}$ is of size $n$.
\label{lem:all-wires}
\end{lemma}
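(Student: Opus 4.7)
My plan is to construct the wire sequence $\mathcal{W}$ and the side-compatible marbling explicitly, proceeding by induction on $n$. Reformulating the goal, we must find an injective assignment $t: [n] \to \{0, 1, \ldots, n\}$ (omitting exactly one value $s$) such that for each bar $B_i = \{a_i, b_i\}$, the values $t(a_i)$ and $t(b_i)$ lie on the same side of $i$, i.e., both are $\ge i$ or both are $< i$. A useful starting observation is that choosing $\mathcal{W} = \{0, 1, \ldots, n-1\}$ (omitting $s = n$) makes the constraint for the topmost bar $B_n$ automatic, since $t < n$ for every pillar. This motivates peeling off top bars one at a time.

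With this choice, the remaining task is to find a bijection $t: [n] \to \{0, \ldots, n-1\}$ satisfying the constraints of $B_1, \ldots, B_{n-1}$. For the new topmost bar $B_{n-1}$, both endpoints must have $t < n-1$, so wire $n-1$ must be assigned to some pillar $p^*$ outside $\{a_{n-1}, b_{n-1}\}$. The inductive strategy is to choose $p^*$ with minimal interaction with the remaining bars (ideally one of degree $0$ in the bar graph restricted to $B_1, \ldots, B_{n-1}$), then remove $p^*$ and recurse on a bar stack of $n-1$ bars on $n-1$ pillars. When such a $p^*$ exists, the reduction is clean and the inductive hypothesis applies directly. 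A degree counting argument ($\sum_p \deg(p) = 2n$) combined with the sparsity of the bar stack (only $n$ bars, versus $\binom{n}{2}$ in a full arrangement) is the key tool for finding suitable reduction steps.

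The main obstacle is the case where no degree-$0$ pillar exists outside $\{a_{n-1}, b_{n-1}\}$, which is unavoidable when the bar graph is $2$-regular (a disjoint union of cycles, every pillar of degree exactly two). In this case, assigning $p^*$ to wire $n-1$ imposes lower-bound constraints $t(q) \ge i$ on each neighbor $q$ of $p^*$ via an incident bar $B_i$, which do not directly fit $\mathcal{B}_{n-1,n-1}$. I would handle this either by strengthening the inductive hypothesis to accommodate such unary constraints, or, more cleanly, by a direct sweep-line construction that assigns wires top-down ($n-1, n-2, \ldots, 0$), at each step selecting a pillar whose already-assigned neighbors are on the consistent side of every bar between the current wire and the neighbor's wire. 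Proving the greedy never gets stuck is where the core of the argument lies: it reduces to showing that at each wire $w$, the set of ``unsafe'' pillars is strictly smaller than the pool of unassigned pillars, which follows from the fact that each previously assigned pillar can ``block'' at most $\deg(p)$ later choices, and the total degree bound $2n$ ensures enough slack.
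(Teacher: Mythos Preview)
Your reformulation is correct, and the observation that omitting the top wire trivializes the constraint for $B_n$ is a natural starting move. However, the heart of your plan --- the greedy sweep justified by the total-degree bound --- has a genuine gap.

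The claim that ``at each wire $w$, the set of unsafe pillars is strictly smaller than the pool of unassigned pillars'' does not follow from the $\sum_p \deg(p)=2n$ bound. At step $w$ there are $w+1$ unassigned pillars, while each bar $B_i$ with $i>w$ can render one unassigned pillar unsafe; there are $n-w-1$ such bars (excluding $B_n$), and for $w<n/2-1$ this already exceeds $w+1$. Worse, once a pillar $p$ becomes unsafe (some neighbour $q$ has $t(q)\ge i>w$ along bar $B_i$), it stays unsafe at every later step $w'<w$, so the greedy can never assign it. And even when a safe choice exists, it need not be extensible: in the $4$-pillar stack $B_1=\{1,2\}$, $B_2=\{3,4\}$, $B_3=\{1,3\}$, $B_4=\{2,4\}$, assigning wire $3$ to pillar $2$ and wire $2$ to pillar $3$ is locally safe at each step yet leaves no valid completion. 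So ``safe now'' is not the right invariant, and the counting you propose does not control the stronger invariant you would actually need.

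The paper's proof takes a different route that sidesteps this difficulty entirely. It does \emph{not} build the assignment incrementally. Instead it lays down all $n+1$ wires, picks a threshold height $y$, and looks only for marblings of a very restricted shape: the marble on pillar $x$ must lie in the interval $I^y_x$ between the nearest bar-endpoints on that pillar straddling $y$. This restriction turns the pairwise bar constraints into a pure bipartite matching problem (pillars to wires), so existence reduces to Hall's condition $\sharp(U^y_X)\ge|X|$ for all $X\subseteq[n]$. The work is then to show, by a parity and median-cell analysis, that Hall's condition cannot fail simultaneously for every choice of $y$. The threshold trick is the missing structural idea in your plan: it converts a problem with binary constraints into one with unary (interval) constraints, which is what makes a clean counting argument possible.
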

\begin{proof}

Let $Y=\{ i+0.5 \mid i \in [0..n] \}$. Assume for each $y \in Y$, there is a wire at height $y$. We claim, for some $y$, there exists a side compatible marbling of size $n$ such that a marble associated with an endpoint of a bar is below the bar if and only if the bar is at height greater than $y$.

The endpoints of bars that lie on a pillar partition the pillar into a number of vertical intervals. Let $I^y_x$ denote the vertical interval of the pillar through $x$ that contains $y$, where $x \in [n]$ (See Figure~\ref{fig:interval-bar}). Let $X \subseteq [n]$ and $U^y_X=\bigcup\limits_{x\in X} I^y_x$. The union of two vertical intervals is simply the union of the $y$-values of the two intervals. Note that since $y \in I^y_x$ for all $x$, $U^y_{X}$ is also an interval. For an interval $I$, let $\sharp(I)$ denote the number of elements in $Y$ that are in $I$.
Let $E^y_X$ denote the set of endpoints of bars whose $x$-coordinates are in $X$ and $y$-coordinates are in $U^y_{X}$.
\begin{figure}[h]
\begin{center}
\begin{tikzpicture}[scale=.35]
	\foreach \i in {0,1,...,8}{
		\draw[blue,line width=1 pt] (0,\i+.5) -++ (8,0);
		\pgfmathparse{\i+.5}
		\node [blue,scale=.85] at (12,\pgfmathresult) {$y=\pgfmathresult$};
	}
	\draw[blue,dashed,line width=1 pt] (-3,5.5) -++ (13,0);		

	\draw[red, line width=1 pt] (7,1) -- (8,1);
	\draw[red,line width=1 pt] (2,2) -- (7,2);	
	\draw[red,line width=1 pt] (1,3) -- (8,3);		
	\draw[red,line width=1 pt] (1,4) -- (2,4);	
	\draw[red,line width=1 pt] (3,5) -- (4,5);
	\draw[red,line width=1 pt] (5,6) -- (6,6);	
	\draw[red,line width=1 pt] (4,7) -- (5,7);		
	\draw[red,line width=1 pt] (3,8) -- (6,8);	
	
\coordinate (P1) at (7,1);
\coordinate (P2) at (8,1);
\coordinate (P3) at (2,2);
\coordinate (P4) at (7,2);
\coordinate (P5) at (1,3);
\coordinate (P6) at (8,3);
\coordinate (P7) at (1,4); 
\coordinate (P8) at (2,4); 
\coordinate (P9) at (3,5);
\coordinate (P10) at (4,5);
\coordinate (P11) at (5,6);
\coordinate (P12) at (6,6);
\coordinate (P13) at (4,7);
\coordinate (P14) at (5,7);
\coordinate (P15) at (3,8);
\coordinate (P16) at (6,8);

\draw[gray,opacity=.4,line width=4pt] (P6) -- (8,8.5);
\draw[gray,opacity=.4,line width=4pt] (P4) -- (7,8.5);
\draw[gray,opacity=.4,line width=4pt] (P12) -- (6,0.5);
\draw[gray,opacity=.4,line width=4pt] (P11) -- (5,0.5);
\draw[gray,opacity=.4,line width=4pt] (P10) -- (P13);
\draw[gray,opacity=.4,line width=4pt] (P9) -- (P15);
\draw[gray,opacity=.4,line width=4pt] (P8) -- (2,8.5);
\draw[gray,opacity=.4,line width=4pt] (P7) -- (1,8.5);

\foreach \i in {1,...,16}	
	\draw[red,fill] (P\i) circle (3.5pt);

\end{tikzpicture}
\end{center}
\caption{Vertical intervals of pillars containing $y=5.5$.}
\label{fig:interval-bar}
\end{figure}
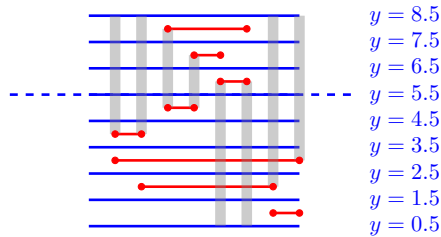

If (for some $y$) a marbling of size $n$ is such that every marble associated with an endpoint of a bar is below the bar if and only if the bar is at height greater than $y$, then for all $x  \in [n]$, the $y$-position of the marble on the pillar through $x$ is in $I^y_x$. Clearly, such a marbling satisfies~\ref{bar-P1} or~\ref{bar-P2}. In order to prove the claim, we need to show there exists such a marbling that is also valid (that is, no two marbles have the same $y$-position). It is easy to see that if for all $X \subseteq [n]$, $\sharp(U^y_X) \ge |X|$ then there is enough room so that each marble can have a distinct $y$-position. Therefore, if for some $y$, we have the property that for all $X \subseteq [n]$, $\sharp(U^y_X) \ge |X|$, then the claim follows.

Assume, for the sake of contradiction, that for all $y$, there exists $X_y \subseteq [n]$ such that $\sharp(U^y_{X_y}) < |X_y|$.

\begin{enumerate}[label=({\bf{Fact \arabic*})},leftmargin=2.8\parindent]
\item\label{Bprop1} $U^y_{X_y}$ is unbounded on at most one side.

If $U^y_{X_y}$ is unbounded on both sides, then $\sharp(U^y_{X_y})=n+1$. However, $|X_y| \le n$. Thus $\sharp(U^y_{X_y}) < |X_y|$ is contradicted.
\item \label{Bprop2} If $U^y_{X_y}$ is bounded from above, then at least $\lceil \frac{|X_y|}{2} \rceil -1$ elements of $Y \cap U^y_{X_y}$ are greater than $y$. If $U^y_{X_y}$ is bounded from below, then at least $\lceil \frac{|X_y|}{2} \rceil -1$ elements of $Y \cap U^y_{X_y}$ are less than $y$.

Let $U^y_{X_y}$ be bounded from above. Recall that $U^y_{X_y}=\cup_{x\in X_y} I^y_x$, and for all $x \in X_y$, $I^y_x$ contains $y$. This implies there are at least $|X_y|$ endpoints in $E^y_{X_y}$ whose heights are greater than $y$. Since no two bars lie at the same height, every integer height contains exactly two endpoints, and hence $E^y_{X_y}$ has at most two endpoints at every integer height. The vertical range between every two consecutive bars contains a distinct element of $Y$. Therefore, we infer that there are at least $\lceil \frac{|X_y|}{2} \rceil -1$ elements of $Y$ in $U^y_{X_y}$ that are greater than $y$. A similar argument works when $U^y_{X_y}$ is bounded from below.
\item \label{Bprop3}If $U^y_{X_y}$ is bounded on both sides, then
	\begin{enumerate}[label={(\bf{B\arabic*})},leftmargin=2.7\parindent]
		\item \label{Bp1}$|X_y|$ is even,
		\item \label{Bp2}$\sharp(U^y_{X_y}) = |X_y|-1$,
		\item \label{Bp3}$y$ is the (unique) median in $Y \cap U^y_{X_y}$,
		\item \label{Bp4} $E^y_{X_y}$ has exactly two endpoints at every integer height in $U^y_{X_y}$, and
		\item \label{Bp5} no two endpoints in $E^y_{X_y}$ that have different heights and are both above or below $y$ lie on the same pillar. 
	\end{enumerate}
~\ref{Bprop2} together with $|X_y| > \sharp(U^y_{X_y})$ implies $|X_y| > 2 \bigl\lceil \frac{|X_y|}{2} \bigr\rceil -1$.
Thus, $|X_y| = \bigl\lceil \frac{|X_y|}{2} \bigr\rceil +\bigl\lfloor \frac{|X_y|}{2} \bigr\rfloor \ge  2 \bigl\lceil \frac{|X_y|}{2} \bigr\rceil$, and hence $\bigl\lfloor \frac{|X_y|}{2} \bigr\rfloor \ge \bigl\lceil \frac{|X_y|}{2} \bigr\rceil$. This immediately implies~\ref{Bp1} and~\ref{Bp2}. Moreover, the number of elements of $Y$ in $U^y_{X_y}$ that are greater (or less) than $y$ is exactly $\frac{|X_y|}{2} -1$. This consequently implies~\ref{Bp3},~\ref{Bp4} and~\ref{Bp5}.
\item \label{Bprop4} If $U^y_{X_y}$ is bounded, then neither $U^{y-1}_{X_{y-1}}$ nor $U^{y+1}_{X_{y+1}}$ can be bounded.

Let $p$ be an endpoint whose height is below $y$ and above $y-1$ (i.e. at height $y-0.5$). Suppose $U^{y-1}_{X_{y-1}}$ is bounded. Note that~\ref{Bp4} implies that $p $ is in both $E^{y-1}_{X_{y-1}}$ and $E^{y}_{X_{y}}$ (since $p$ is immediately above or below $y-1$ and $y$). Thus, $E^{y-1}_{X_{y-1}}$ has an endpoint $p'$ that lies on the same pillar as $p$, and whose height is less than $y-1$. Recall that $U^y_{X_y}$ is bounded, and hence~\ref{Bp5} implies that $p' \notin E^{y}_{X_{y}}$. Thus $\sharp(U^{y-1}_{X_{y-1}}) \ge \sharp(U^y_{X_y})$. However, this implies that the pillar through $p$ contains two endpoints in $E^{y-1}_{X_{y-1}}$ that are both above $y-1$ and have different heights. This contradicts~\ref{Bp5}. A similar argument works for $U^{y+1}_{X_{y+1}}$.

 \end{enumerate}

We consider two cases.
\begin{enumerate}[label={\bf{Case \arabic*.}},leftmargin=2.7\parindent]
\item $n$ is even.\\
Let $y$ be the unique median value in $Y$ (i.e. $y=\frac{n}{2}+0.5$). By~\ref{Bprop1} we know $U^y_{X_y}$ is bounded on at least one side. If $U^y_{X_y}$ is unbounded, then~\ref{Bprop2} implies $\sharp(U^y_{X_y}) \ge \frac{n}{2}+1+\bigl\lceil \frac{|X_{y}|}{2} \bigr\rceil -1$; and since $|X_y| > \sharp(U^y_{X_y})$, we infer $|X_y| >n$, which is not possible. Therefore, $U^y_{X_y}$ is bounded (on both sides). Using~\ref{Bprop4}, we know both $U^{y-1}_{X_{y-1}}$ and $U^{y+1}_{X_{y+1}}$ are unbounded.

Consider $U^{y-1}_{X_{y-1}}$. Note that $|X_{y-1}| \le n$ and hence $\sharp(U^{y-1}_{X_{y-1}}) \le n-1$. If $U^{y-1}_{X_{y-1}}$ is unbounded from above, then $|X_{y-1}| > \sharp(U^{y-1}_{X_{y-1}}) \ge \frac{n}{2}+2+\bigl\lceil \frac{|X_{y-1}|}{2} \bigr\rceil -1$, and hence $|X_{y-1}| > n+2$, which is not possible. Therefore, $U^{y-1}_{X_{y-1}}$ is unbounded from below (and bounded from above). As a result, $|X_{y-1}| > \sharp(U^{y-1}_{X_{y-1}}) \ge \frac{n}{2}+\bigl\lceil \frac{|X_{y-1}|}{2} \bigr\rceil -1$, and hence $\bigl\lfloor \frac{|X_{y-1}|}{2} \bigr\rfloor > \frac{n-2}{2}$. Therefore (since $n$ is even) $|X_{y-1}|=n$ and $\sharp(U^{y-1}_{X_{y-1}})=n-1$.
Note that $|X_{y-1}|=n$ implies $E^{y-1}_{X_{y-1}}$ has exactly two endpoints at every integer height in $U^{y-1}_{X_{y-1}}$. Moreover, two endpoints in $E^{y-1}_{X_{y-1}}$ that are both above $y-1$ and have different heights need to lie on different pillars.

Similarly, we infer that $U^{y+1}_{X_{y+1}}$ is unbounded from above (and bounded from below); $|X_{y+1}|=n$ and $\sharp(U^{y+1}_{X_{y+1}})=n-1$; and consequently,
$E^{y+1}_{X_{y+1}}$ has exactly two endpoints at every integer height in $U^{y+1}_{X_{y+1}}$; and two endpoints in $E^{y+1}_{X_{y+1}}$ that are both below $y+1$ and have different heights need to lie on different pillars.

Let $F$ denote the four endpoints at heights $\frac{n}{2}$ and $\frac{n}{2}+1$ (i.e. the endpoints with heights immediately above or below $y$). No pillar going through an endpoint $p \in F$ can contain any other endpoint with height between $2$ and $n-1$. Recall that $U^y_{X_y}$ is bounded. This implies $\sharp(U^y_{X_y})=n-1$ and $|X_y|=n$. Moreover, $B_n$ and $B_{\frac{n}{2}}$ need to have identical horizontal intervals (and so do $B_1$ and $B_{\frac{n}{2}+1}$), which is not possible.

\item $n$ is odd.\\
Let $y$ be the smaller median value in $Y$ (i.e. $y=\frac{n-1}{2}+0.5$).
If $U^y_{X_y}$ is unbounded from above, then $|X_y| > \sharp(U^y_{X_y}) \ge \frac{n+1}{2}+1+\bigl\lceil \frac{|X_y|}{2} \bigr\rceil -1$. Hence $|X_y| > n+1$, which is not possible. If $U^y_{X_y}$ is unbounded from below, then $|X_y| > \sharp(U^y_{X_y}) \ge \frac{n+1}{2}+\bigl\lceil \frac{|X_y|}{2} \bigr\rceil -1$. Hence $2 \bigl\lfloor \frac{|X_y|}{2} \bigr\rfloor > n-1$, which implies $|X_y| > n$. But again, this is not possible. Therefore, $U^y_{X_y}$ is bounded.

Let $y'$ be the bigger median value in $Y$. Using a similar argument, we conclude that $U^{y'}_{X_{y'}}$ is also bounded. But this is a contradiction to \ref{Bprop4}.
\end{enumerate}
\end{proof}

\begin{lemma}
Given any bar stack $\mathcal{B}_{n,n}$, there exists an $\lfloor \frac{n}{2} \rfloor$-subsequence $\mathcal{W}$ of $\langle 0~1~\cdots~n\rangle$, such that the largest ordered side compatible subset for $\mathcal{W}$ in $\mathcal{B}_{n,n}$ is of size $\lfloor\frac{n}{2}\rfloor$.
\end{lemma}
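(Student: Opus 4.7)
The plan is to bootstrap from the size-$n$ side compatible marbling provided by Lemma~\ref{lem:all-wires} and extract from it an ordered sub-marbling of size $\lfloor n/2 \rfloor$. Let $(\mathcal{C}_0, f_0)$ be the size-$n$ side compatible subset guaranteed by Lemma~\ref{lem:all-wires} for $\mathcal{B}_{n,n}$, so that $\mathcal{C}_0 = [n]$ and $f_0 \colon [n] \to \mathcal{W}_0$ is a bijection onto some $n$-subsequence $\mathcal{W}_0 \subseteq \langle 0~1~\cdots~n \rangle$. For any $\mathcal{C}' \subseteq [n]$, the restriction $f_0|_{\mathcal{C}'}$ is still side compatible, and its image $\mathcal{W} = f_0(\mathcal{C}')$ is a $|\mathcal{C}'|$-subsequence of $\langle 0~1~\cdots~n \rangle$. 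Therefore it suffices to select $\mathcal{C}' \subseteq [n]$ with $|\mathcal{C}'| = \lfloor n/2 \rfloor$ on which $f_0$ is also ordered.

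Define a conflict graph $G$ on vertex set $[n]$ by adding the edge $\{a_i, b_i\}$ (with $a_i < b_i$) for each bar $B_i$ whose endpoints violate the ordered condition under $f_0$: either $f_0(a_i), f_0(b_i) < i$ with $f_0(a_i) > f_0(b_i)$, or $f_0(a_i), f_0(b_i) \ge i$ with $f_0(a_i) < f_0(b_i)$. Any independent set of $G$ of size $\lfloor n/2 \rfloor$ yields the required ordered side compatible subset together with its wire subsequence $\mathcal{W} = f_0(\mathcal{C}')$, so it suffices to show that $G$ has independence number at least $\lfloor n/2 \rfloor$.

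The main obstacle is exactly this independent-set bound: $G$ has at most $n$ edges on $n$ vertices, and a general graph with this density need only guarantee independence number on the order of $n/3$. To push the bound up to $\lfloor n/2 \rfloor$, I would exploit the same ``no pair reverses its order more than once'' rigidity that underlies Lemma~\ref{lem:tab}: two bars sharing a common pillar $v$ that both produce orderedness violations under $f_0$ impose restrictions on $f_0(v)$ relative to the other endpoints that should force $G$ to be triangle-free, and in fact bipartite, whence K\"onig's theorem produces an independent set of the required size. Should the graph-theoretic route prove more delicate than hoped, my backup plan is to re-run the sweep/Hall argument of Lemma~\ref{lem:all-wires} from scratch, restricting the candidate wires to every other position in $\langle 0~1~\cdots~n \rangle$; the added spacing between wires eliminates the obstruction configurations isolated in Facts 3--4 of that proof and should simultaneously enforce orderedness, at the cost of halving the number of available wire positions to exactly $\lfloor n/2 \rfloor$.
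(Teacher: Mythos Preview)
Your proposal has a genuine gap at the independent-set step, and the rigidity you plan to invoke is not available for bar stacks. A bar stack $\mathcal{B}_{n,n}$ is only required to have $n$ bars with pairwise distinct endpoint pairs; there is \emph{no} ``no pair reverses more than once'' constraint (that constraint belongs to combinatorial tables coming from line arrangements, not to arbitrary bar stacks). Hence your appeal to the structure underlying Lemma~\ref{lem:tab} to force the conflict graph $G$ to be triangle-free or bipartite is unjustified. Without such structure you are left with a graph on $n$ vertices and at most $n$ edges, and as you note yourself this only yields an independent set of order $n/3$; a disjoint union of triangles shows the bound is tight in general, and you give no argument ruling such configurations out for the particular $f_0$ produced by Lemma~\ref{lem:all-wires}. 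Your backup plan is only a sketch and does not explain why restricting to every other wire position would automatically enforce orderedness.

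The paper does \emph{not} bootstrap from Lemma~\ref{lem:all-wires}. Instead it constructs the mapping $f$ from scratch so that the bottom $\lfloor n/2\rfloor$ bars are \emph{all} ordered-satisfied by design: it encodes the constraints $f(a_i)>f(b_i)\ge i$ for $i\le\lfloor n/2\rfloor$ as a DAG on white vertices $p_1,\dots,p_n$ and black vertices $h_1,\dots,h_{\lfloor n/2\rfloor}$, then reads off $f$ from a topological ordering of the weakly connected components (sorted by a height parameter) and, within each component, by reachability count. A counting argument on components shows $f$ is injective into $[n]$ with each $f(b_i)\ge i$. Only \emph{after} this does the paper handle the remaining $\lceil n/2\rceil$ top bars by deleting at most one endpoint per violating bar, leaving $\ge\lfloor n/2\rfloor$ pillars. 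The key difference is that the paper engineers $f$ so that at most $\lceil n/2\rceil$ bars can possibly be in conflict, making the deletion step trivial; your approach inherits an uncontrolled $f_0$ and then has to fight for the independent set.
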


\begin{proof}
Let $\mathcal{B}'$ be the set of bars in $\mathcal{B}_{n,n}$ with heights at most $\lfloor \frac{n}{2} \rfloor$.
We claim there exists an injective function $f:[n] \rightarrow [n]$ such that for any bar $B_i=[(a_i,i),(b_i,i)]$ in $\mathcal{B}'$, $f(a_i)>f(b_i) \ge i$ (where $i \in [\lfloor \frac{n}{2} \rfloor]$).
We can visualize this claim in terms of marbling: Let $Y=\{i +0.5 \mid i \in [0..n]\}$. Assume for each $y \in Y$, there is a wire at height $y$. We claim there exists a valid marbling of size $n$ such that for every bar $B_i \in \mathcal{B}'$, both marbles associated with the bar are above the bar; and moreover, the marble associated with the left endpoint of $B_i$ is higher than the marble associated with the other endpoint.

Note that if the claim is true, every bar $B_i \in \mathcal{B}'$ forms a partial order on the heights of marbles lying on pillars through $a_i$ and $b_i$; that is, it implies the inequalities $f(a_i)>f(b_i) \ge i$, where $f(x)+0.5$ is the height of the marble that is on pillar through $x$. We construct a directed graph representing all partial orders obtained from $\mathcal{B}'$, and prove that there exists a total order (consistent with all partial orders) on the heights of marbles such that each marble has a distinct height in $Y$.

We construct a directed 2-coloured graph $G=(V,E)$ as follows:
\begin{itemize}
\item  $V=V_p \cup V_h$, where $V_p = p_1 \cdots p_n$ and $V_h=h_1 \cdots h_{\lfloor \frac{n}{2} \rfloor}$ are sets of white and black vertices, respectively.
\item For every $i \in [\lfloor \frac{n}{2} \rfloor]$, there is a directed edge from $p_{a_i}$ to $p_{b_i}$, and another edge from $p_{b_i}$ to $h_i$.
\end{itemize}

Note that $G$ has no directed cycles. For every $v \in V_p$, let $r(v)$ denote the number of white vertices that are reachable from $v$. We refer to $r(v)$ as the \emph{r-value} of vertex $v$.
Decompose $G$ into weakly connected components . A \emph{weakly connected component} in a directed graph is a maximal connected component in the underlying undirected graph (that is, if replacing all directed edges with undirected edges).
Let
\begin{equation*}
H(C) = 
\begin{cases}
\max\{i \mid h_i \in V(C)\} & \text{if } |V(C)| >1\\
0 & \text{otherwise,}
\end{cases}
\end{equation*}
where $C$ is a weakly connected component and $V(C)$ is the set of vertices in $C$. We refer to $H(C)$ as the \emph{h-value} of $C$. The h-value of a component is zero if the component is \emph{trivial} (i.e. it consists of only one vertex). Note that a trivial component does not contain a black vertex. If $p_v$ is the only vertex in a component then the pillar through $v$ does not contain any endpoints of bars in $\mathcal{B}'$.

Let $c$ be the number of weakly connected components in $G$.
Let $G_1 G_2 \cdots, G_c$ be the ordering of the weakly connected components of $G$ in non-increasing order of their h-values. Let $P(G_i)$ denote the number of white vertices in $G_i$.
For each component $G_i$, sort its white vertices in non-increasing order of their r-values. 
Let $p_{\pi_i(0)}, p_{\pi_i(1)},\cdots$ be this ordering. We define function $f$ so that $f(\pi_i(j))= -j+n-\sum_{k=1}^{i-1} P(G_k)$.
Note that for every pair of white vertices $p_v$ and $p_u$, if $p_v$ can reach $p_u$, then $f(v) > f(u)$. Recall that for any white vertex $p_v \in G_i$, $n-\sum_{k=1}^{i}P(G_k)+1 \le f(v) \le n-\sum_{k=1}^{i-1}P(G_k)$.
In order to prove our claim, we need to show that $f$ is an injective function with range $[n]$.
This immediately follows if for every component $G_i$, $H(G_i) \le n-\sum_{k=1}^{i}P(G_k)+1$.

Note that for every $i \in [c]$,  $0 \le H(G_i) \le \lfloor \frac{n}{2} \rfloor$.
If $H(G_i)=0$, then $H(G_i) \le n-\sum_{k=1}^{i}P(G_k)+1$ because $\sum_{k=1}^c P(G_k) \le n$.  Recall that $H(G_i)=0$ if and only if $|V(G_i)|=1$. The h-values of the components that contain more than one vertex are all distinct. Therefore, if $H(G_i) >0$, then $H(G_i)\le \lfloor \frac{n}{2} \rfloor - (i-1)$.
Let $E^i_p$ denote the number of edges in the subgraph induced by the white vertices in $G_i$. We know $P(G_i) \le E^i_p +1$, and hence $\sum_{k=1}^i P(G_k) \le i+ \sum_{k=1}^i E^k_p$. Note that $\sum_{k=1}^c E^k_p \le \lfloor \frac{n}{2} \rfloor$. Therefore, $n-\sum_{k=1}^{i}P(G_k)+1 \ge \lceil \frac{n}{2} \rceil -i +1$, and subsequently $n-\sum_{k=1}^{i}P(G_k)+1 \ge H(G_i)$. This concludes the proof of our claim.

In the following, we use the function $f$ defined in our claim to prove the lemma. Initialize $\mathcal{C}=[n]$. Using our claim, we know that for every bar $B_i=[(a_i,i),(b_i,i)]$ in $\mathcal{B}_{n,n}$, where $i \in [\lfloor \frac{n}{2} \rfloor]$, $f(a_i)>f(b_i) \ge i$. For all $j \in [\lfloor \frac{n}{2} \rfloor +1 .. n]$ such that neither $f(a_j)>f(b_j) \ge j$ nor $f(a_j)<f(b_j) < j$ holds, we remove $a_j$ or $b_j$ from $\mathcal{C}$. It is easy to verify that at the end $|\mathcal{C}| \ge n -\lceil \frac{n}{2} \rceil = \lfloor \frac{n}{2} \rfloor$.  Using function $f$ on $\mathcal{C}$, we guarantee that $\mathcal{C}$ is an ordered side compatible subset for
a subsequence of size $|\mathcal{C}| \ge \lfloor \frac{n}{2} \rfloor$ obtained from 
$\langle 1~\cdots~n\rangle$.
\end{proof}

\begin{observation}
If for any $n$, any bar stack $\mathcal{B}_{n,n}$, and any sequence $\mathcal{W}_1$ of $2n+2$ integers, there is a side compatible subset of size $n$ for $\mathcal{W}_1$ in $\mathcal{B}_{n,n}$, then for any sequence $\mathcal{W}_2$ of $n$ integers, there is a side compatible subset of size $\lfloor\frac{n}{2}\rfloor-2$ for $\mathcal{W}_2$ in $\mathcal{B}_{n,n}$.

\label{obs:2nTOn}
\end{observation}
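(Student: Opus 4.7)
The strategy is to reduce the length-$n$ problem to the (hypothesized) length-$(2n+2)$ problem by padding and then restricting. Given the bar stack $\mathcal{B}_{n,n}$ and a sequence $\mathcal{W}_2$ of $n$ integers in $[0,n]$, I would construct $\mathcal{W}_1$ by concatenating two copies of $\mathcal{W}_2$ followed by two arbitrary additional values in $[0,n]$. This gives a sequence of length $2n+2$ to which the hypothesis applies, yielding a side compatible subset $\mathcal{C}=[n]$ for $\mathcal{W}_1$ in $\mathcal{B}_{n,n}$ with some injective mapping function $f$.

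Next, regard the $2n+2$ wires of $\mathcal{W}_1$ as partitioned into three blocks: block $A$ (the first copy of $\mathcal{W}_2$), block $B$ (the second copy), and block $C$ (the two padding wires). Let $\mathcal{C}_A,\mathcal{C}_B,\mathcal{C}_C\subseteq[n]$ be the preimages under $f$ of these blocks. Injectivity of $f$ forces $|\mathcal{C}_C|\le 2$, so $|\mathcal{C}_A|+|\mathcal{C}_B|\ge n-2$, and pigeonhole gives (without loss of generality) $|\mathcal{C}_A|\ge \lceil (n-2)/2\rceil \ge \lfloor n/2\rfloor-1$.

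Finally, I would argue that $f|_{\mathcal{C}_A}$ is itself a side compatible marbling for $\mathcal{W}_2$ in $\mathcal{B}_{n,n}$. Because block $A$ is a verbatim copy of $\mathcal{W}_2$, its wires have the same heights as the wires of $\mathcal{W}_2$, so $f|_{\mathcal{C}_A}$ can be read as an injection from $\mathcal{C}_A$ into the wires of $\mathcal{W}_2$. For every bar $B_i$ with both endpoints in $\mathcal{C}_A$, conditions~\ref{bar-P1} and~\ref{bar-P2} depend only on the heights of the assigned wires and the height $i$ of the bar, and so they are inherited verbatim from the side compatibility of $\mathcal{C}$ for $\mathcal{W}_1$. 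This produces a side compatible subset of size at least $\lfloor n/2\rfloor-1\ge \lfloor n/2\rfloor-2$ for $\mathcal{W}_2$ in $\mathcal{B}_{n,n}$, as claimed.

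Since the whole argument is padding followed by pigeonhole, I do not anticipate any substantive obstacle; the only care required is to keep the notion of ``wire'' (a position within the sequence $\mathcal{W}$) distinct from its integer-valued ``height,'' so that the restriction to block $A$ can be correctly identified with a marbling for $\mathcal{W}_2$. Injectivity of $f$ on wires is precisely what prevents the two padding wires from absorbing more than two elements of $[n]$, which is what drives the bound.
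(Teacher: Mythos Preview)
Your argument is correct and in fact yields the slightly stronger bound $\lfloor n/2\rfloor-1$, but it proceeds along a genuinely different route from the paper. The paper does \emph{not} pad the wire sequence; instead it first thins the bar stack: it chooses a subset $\mathcal{C}\subseteq[n]$ of $m=\lfloor n/2\rfloor$ pillars so that at most $m$ bars have both endpoints in $\mathcal{C}$, views the induced bars as a bar stack $\mathcal{B}_{m,m}$, and then invokes the hypothesis at parameter $m$ (so that a sequence of length $2m+2\ge n$ suffices). The loss of two comes from the at most two extra wires needed to reach length $2m+2$. Your approach instead keeps the original $\mathcal{B}_{n,n}$, manufactures a length-$(2n+2)$ sequence by duplicating $\mathcal{W}_2$ and appending two dummies, applies the hypothesis at parameter $n$, and then pigeonholes on which copy of $\mathcal{W}_2$ absorbs more of the mapping. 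This is cleaner in that it avoids the pillar-selection step entirely and only uses the hypothesis once at the given $n$; the paper's route, on the other hand, isolates a structural reduction (halving the pillars while keeping the bar count under control) that may be of independent use. Your care in distinguishing a wire's position from its height is exactly what makes the restriction to block $A$ legitimate, and the injectivity argument bounding $|\mathcal{C}_C|\le 2$ is the right driver for the final count.
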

\begin{proof}
For any bar stack $\mathcal{B}_{n,n}$, there exists a subset $\mathcal{C} \subseteq [n]$ of size $\lfloor\frac{n}{2}\rfloor$ such that the number of bars in $\mathcal{B}_{n,n}$ whose endpoints' $x$-coordinates both belong to $\mathcal{C}$ is at most $\lfloor\frac{n}{2}\rfloor$ (note that for any $\lfloor\frac{n}{2}\rfloor$-subset $C$, either $C$ or a $\lfloor\frac{n}{2}\rfloor$-subset from $[n]\setminus C$ satisfies this property). Let $\mathcal{B}'$ denote the set of bars in $\mathcal{B}_{n,n}$ induced by $\mathcal{C}$ ($\mathcal{B}'$ is a set of at most $\lfloor\frac{n}{2}\rfloor$ bars whose heights range from $1$ to $n$). If the premise of the statement in Observation~\ref{obs:2nTOn} is true, then $\mathcal{C}$ is a side compatible subset for any sequence of $2\lfloor \frac{n}{2}\rfloor+2$ integers in $\mathcal{B}'$ (here a number $\mathit{w}$ in the sequence corresponds to a wire that is below any bar whose height is greater than $\mathit{w}$ and is above any bar whose height is equal or less than $\mathit{w}$).
As a result, there is a side compatible subset of size at least $\lfloor\frac{n}{2}\rfloor-2$ for any sequence of $n$ integers in $\mathcal{B}_{n,n}$. This concludes the proof.
\end{proof}

\begin{claim}
Let $\mathcal{W}$ be any sequence of $2n+2$ integers from zero to $n$. 
Given any bar stack $\mathcal{B}_{n,n}$, there exits a side compatible subset $\mathcal{C}$ for $\mathcal{W}$ that is of size $\frac{n}{2}$.
\label{lem:n-bar-n-wire}
\end{claim}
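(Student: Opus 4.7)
The plan is to apply the probabilistic method, using essentially nothing about $\mathcal{B}_{n,n}$ beyond the fact that it has exactly $n$ bars. I would pick an injection $f : [n] \to \mathcal{W}$ uniformly at random and, for each bar $B_i$ (of height $i$, joining pillars $a_i, b_i$), define the indicator $V_i$ of the event that $f(a_i)$ and $f(b_i)$ lie on opposite sides of $i$---that is, one is $<i$ and the other is $\ge i$. With $m_{<i}$ and $m_{\ge i}$ denoting the numbers of wires in $\mathcal{W}$ at heights $<i$ and $\ge i$ (so that $m_{<i} + m_{\ge i} = 2n+2$), a direct computation over the injection yields
\[
\mathbb{E}[V_i] \,=\, \frac{2\,m_{<i}\,m_{\ge i}}{(2n+2)(2n+1)} \,\le\, \frac{n+1}{2n+1},
\]
by AM--GM applied to $m_{<i}$ and $m_{\ge i}$. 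Summing over the $n$ bars, $\mathbb{E}[V] \le n(n+1)/(2n+1) = n/2 + n/(4n+2) < n/2 + 1/2$.

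For each realisation of $f$ I would form the \emph{violation graph} $G_f$ on $[n]$ whose edges are the $V$ pairs $\{a_i, b_i\}$ corresponding to violated bars. Any independent set $\mathcal{C}$ in $G_f$ is automatically a side compatible subset for $\mathcal{W}$ via $f|_{\mathcal{C}}$, because a bar whose two endpoints both lie in $\mathcal{C}$ is not an edge of $G_f$ and hence is not violated. A graph with at most $V$ edges has a vertex cover of size at most $V$, so the maximum independent set in $G_f$ has size at least $n - V$ pointwise. Averaging, some $f$ admits a side compatible subset of size at least $n - \mathbb{E}[V] \ge n^2/(2n+1) > n/2 - 1/2$; since independent-set sizes are integers, this rounds up to $|\mathcal{C}| \ge \lceil n/2 \rceil \ge n/2$.

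I expect the main delicacy to be the tightness of the probability bound: the per-bar quantity $(n+1)/(2n+1) = 1/2 + 1/(2(2n+1))$ is just barely small enough that the $n$-fold sum stays below $n/2 + 1/2$, which is precisely what makes the integer-rounded conclusion $n/2$ rather than $n/2-1$; using the naive denominator $2n+2$ in the conditional step would lose exactly this slack. An alternative route through Lemma~\ref{lem:all-wires} together with a Hall-deficiency argument on a capacitated bipartite graph between pillars and wires of $\mathcal{W}$ seems plausible but more intricate, because the magic threshold $y^{\star}$ supplied by Lemma~\ref{lem:all-wires} depends only on the bar stack and one can construct adversarial $\mathcal{W}$ that defeat the Hall condition for that particular $y^{\star}$. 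The probabilistic route above sidesteps this issue entirely.
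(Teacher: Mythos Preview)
Your probabilistic argument is correct and genuinely different from the paper's route. The paper's proof sketch extends the Hall-deficiency machinery of Lemma~\ref{lem:all-wires}: for each candidate height $y$ it looks at the minimal obstructing set $X_y$ with $\sharp(U^y_{X_y})<|X_y|$, argues that at most $n$ consecutive wires can have $U^y_{X_y}$ bounded, and hence among the $2n+2$ wires some $y$ has $U^y_{X_y}$ unbounded with $\sharp(U^y_{X_y})\ge \tfrac{n}{2}+1$, which yields the marbling directly. Your approach bypasses all of the interval analysis: you choose a uniformly random injection $f:[n]\to[2n+2]$, bound the expected number of violated bars by $\tfrac{n(n+1)}{2n+1}<\tfrac{n+1}{2}$ via AM--GM, and then use the trivial bound ``independent set $\ge$ vertices minus edges'' on the violation graph to extract a side compatible subset of size $\ge\lceil n/2\rceil$. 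What you gain is brevity and robustness---the argument uses nothing about $\mathcal{B}_{n,n}$ beyond the bare counts ``$n$ bars, $n$ pillars, $2n+2$ wires'', and the numerics are exactly tight enough for the integer rounding to land on $n/2$. What the paper's approach buys is a constructive flavour (one actually locates a good height $y$) and continuity with the framework already developed for Lemmas~\ref{lem:all-wires} and the ordered version; your method, being an averaging argument, is nonconstructive. Your closing remark that a direct Hall-type attack via a single threshold $y^\star$ from Lemma~\ref{lem:all-wires} would be delicate is well taken---indeed the paper does not fix one $y$ but searches over all $2n+2$ of them, which is precisely what makes that route work.
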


\begin{proof}[Proof sketch]
We use the notation used in the proof of Lemma~\ref{lem:all-wires}.
Let $X_y$ be the smallest subset for which $\sharp(U^y_{X_y})<|X_y|$.
First note that $U^y_{X_y}$ is bounded for $y \in \{n-0.5, n+0.5, n+1.5, n+2.5\}$ (i.e. any of the four middle wires).
There cannot exist more than $n$ consecutive wires whose $U^y_{X_y}$ is bounded.
Thus, there exists $y$ such that $U^y_{X_y}$ in unbounded and $\sharp(U^y_{X_y}) \ge \frac{n}{2}+1$. This implies that there exists a side compatible subset of size $\frac{n}{2}$.
\end{proof}

\begin{lemma}
There exist a bar stack $\mathcal{B}_{{n \choose 2},n}$ and a sequence $\mathcal{W}_{l,{n \choose 2}}$, with $l = \Theta(n^2)$, for which the largest side compatible subset is less than $n$.
\end{lemma}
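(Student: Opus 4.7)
The plan is to give an explicit construction of a bar stack $\mathcal{B}_{\binom{n}{2},n}$ and a wire sequence $\mathcal{W}_{l,\binom{n}{2}}$ with $l=\Theta(n^2)$, and to verify that the only candidate $\mathcal{C}=[n]$ for a size-$n$ side compatible subset admits no mapping function $f:[n]\to\mathcal{W}$ satisfying \ref{bar-P1} or \ref{bar-P2} at every bar. A key enabling observation is that (as noted just before Lemma~\ref{lem:all-wires}) a bar stack need not come from a line arrangement, so we may assign the heights $1,\ldots,\binom{n}{2}$ to the $\binom{n}{2}$ pairs in any permutation we like.

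For the bar stack, I would interleave ``long-range'' pairs $(i,n{+}1{-}i)$ joining extreme pillars with ``short-range'' pairs $(i,i{+}1)$ joining near-neighbours, using a height pattern inspired by the bouncing-pair allowable-sequence construction appearing in the earlier proof sketch in this section. For the wire sequence, I would distribute $l=\Theta(n^2)$ wires across $\{0,1,\ldots,\binom{n}{2}\}$ subject to two properties: (i) strictly fewer than $n$ wires sit at any single height, so every injective $f$ is forced to use at least two distinct heights and therefore produces at least one non-trivial gap in its sorted sequence of wire-heights; and (ii) every gap of positive width is guaranteed to contain some bar height.

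To rule out a size-$n$ side compatible subset, I would fix any injective $f:[n]\to\mathcal{W}$, sort the pillars as $\sigma(1),\ldots,\sigma(n)$ by their wire-heights, and focus on any non-empty gap $(f(\sigma(i)),f(\sigma(i{+}1))]$. Condition (ii) provides a bar with height in that gap; the design of the bar stack is meant to ensure that such a bar can always be chosen so that one endpoint lies in $\{\sigma(1),\ldots,\sigma(i)\}$ and the other in $\{\sigma(i{+}1),\ldots,\sigma(n)\}$. That bar violates both \ref{bar-P1} and \ref{bar-P2} at $f$, producing the contradiction. The hard part will be verifying this ``cut-covering'' property for all $n!$ permutations simultaneously with only $\binom{n}{2}$ bars. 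I would handle it by a case analysis on $\sigma$: long-range bars cover gaps near the middle of the sorted order, while short-range bars cover gaps near the ends; if an entirely combinatorial verification becomes too intricate, I would fall back on choosing the height-to-pair assignment uniformly at random and bounding, via a union bound over $\sigma$ and $f$, the expected number of satisfying $f$'s. The main obstacle is quantifying the covering precisely, which is why the result is stated as a sketch.
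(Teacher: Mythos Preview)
Your proposal identifies the right obstruction --- a bar whose endpoints receive $f$-values on opposite sides of the bar's height --- but the plan for actually producing such a bar is where it breaks down. The ``cut-covering'' property you need is: for every injective $f:[n]\to\mathcal{W}$ there is a height $h$ and a bar at height $h$ with endpoints $a,b$ such that $f(a)<h\le f(b)$ (or vice versa). You acknowledge this is the hard part, and neither the long-range/short-range interleaving nor the random-assignment fallback is carried out. The randomization idea in particular is not promising: you would need the property to hold for \emph{every} $f$ simultaneously, so a first-moment/union-bound argument over $n!$ orderings against only $\binom{n}{2}$ height slots has no room.

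The paper avoids this difficulty with a clean structural device you are missing. It decomposes the edges of $K_n$ into $\tfrac{n}{2}$ Hamiltonian paths and makes each path a contiguous \emph{block} of $n-1$ bars at consecutive heights; all $l=(\tfrac{n}{2}+1)(n-1)=\Theta(n^2)$ wires are placed only in the $\tfrac{n}{2}+1$ gaps between (and outside) the blocks, $n-1$ wires per gap. Any $n$ wires must then, by pigeonhole, straddle some block $\mathcal{B}_i$. For $\mathcal{C}=[n]$ the mapping $f$ therefore induces a nontrivial bipartition of $[n]$ into pillars whose marbles lie below $\mathcal{B}_i$ versus above $\mathcal{B}_i$; because $\mathcal{B}_i$ is a Hamiltonian path on all $n$ pillars, some bar of $\mathcal{B}_i$ has its two endpoints in opposite parts, violating both \ref{bar-P1} and \ref{bar-P2}. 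The Hamiltonian-path blocks are precisely what deliver your ``cut-covering'' for free --- every nontrivial bipartition of $[n]$ is crossed by an edge of each block --- and the placement of wires only between blocks turns ``$f$ uses wires at two different levels'' into ``$f$ straddles an entire block''. Your long/short bars do not enjoy this universal cut-crossing property within any single height window, which is why the verification stalls.
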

\begin{proof}
First note that we can partition the edges of $K_n$ into $\frac{n}{2}$ paths of size $n-1$. See Figure~\ref{fig:kn-paths}. We construct a bar stack using these paths.
Let $P_0,\dots, P_{\frac{n}{2}-1}$ denote the set of these paths.
Assume that the vertices of $K_n$ are labeled $1,\dots,n$.
Let $e^i_k = (u^i_k,v^i_k)$ denote the $k$-th edge in $P_i$.
$e^i_k$ corresponds to $B_{i*(n-1)+k}$, which is a segment from $(u^i_k,i*(n-1)+k )$ to $(v^i_k,i*(n-1)+k )$.
Let $\mathcal{B}_i$ denote the set of bars $\{B_{i*(n-1)+k} \mid k \in [n-1]\}$. We refer to $\mathcal{B}_i$ as a \emph{block} (of segments).
Recall that $\mathcal{W}_{l,{n \choose 2}}$ corresponds to a set of wires.
Let $l=(\frac{n}{2}+1)(n-1)$.
We construct $\mathcal{W}_{l,{n \choose 2}}$ so that there are $n-1$ wires between every two consecutive blocks (i.e. there are $n-1$ wires between $B_{i*(n-1)+n-1}$ and $B_{(i+1)*(n-1)+1}$ for all $i$ in $\{0, \dots,\frac{n}{2}-2\}$); $n-1$ wires below $\mathcal{B}_0$; and $n-1$ wires above $\mathcal{B}_{\frac{n}{2}-1}$.
Note that there is no wire between the segments that belong to the same block.
A set of wires $W$ \emph{encompasses} a block if the block is between two wires in $W$.
Any set of $n$ wires encompasses a block, and each block of segments is incident to all $n$ pillars.
Thus, the size of the largest side compatible subset is at most $n-1$.
\begin{figure}[h]
\begin{center}
\begin{tikzpicture} [scale=.7]
	\def \r {1};
	
	\foreach \j in {0,1,...,4}{
		\coordinate (O\j) at (3*\j,0);
		\foreach \i in {0,1,...,10}{
			\path (O\j) -- +(36*\i: \r cm) coordinate (D\i\j);	
			\draw[fill] (D\i\j) circle (2pt);
		}
	}
	
	\draw (D00) -- (D10) -- (D90) -- (D20) -- (D80) -- (D30) --( D70) -- (D40) -- (D60) -- (D50);
	\draw (D11) -- (D21) -- (D01) -- (D31) -- (D91) -- (D41) --( D81) -- (D51) -- (D71) -- (D61);
	\draw (D22) -- (D32) -- (D12) -- (D42) -- (D02) -- (D52) --( D92) -- (D62) -- (D82) -- (D72);
	\draw (D33) -- (D43) -- (D23) -- (D53) -- (D13) -- (D63) --( D03) -- (D73) -- (D93) -- (D83);
	\draw (D44) -- (D54) -- (D34) -- (D64) -- (D24) -- (D74) --( D14) -- (D84) -- (D04) -- (D94);
		
\end{tikzpicture}
\end{center}
\caption{Partitioning the edges of $K_n$ into $\frac{n}{2}$ paths of size $n-1$.}
\label{fig:kn-paths}
\end{figure}
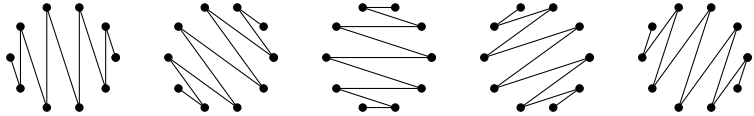
\end{proof}


\begin{question}
Does there exist a bar stack $\mathcal{B}_{l,n}$ together with a sequence $\mathcal{W}_{n,l}$, where $l=\Theta(n^2)$, such that the largest side compatible subset for $\mathcal{W}_{n,l}$ in $\mathcal{B}_{l,n}$ is of size $o(n)$?
\end{question}

Given a bar stack $\mathcal{B}$, let $G(\mathcal{B})$ denote the geometric graph whose vertices are the endpoints of segments in $\mathcal{B}$ such that two vertices $u$ and $v$ are connected by an edge if $(u,v)$ is a segment in $\mathcal{B}$ or if $u$ and $v$ belong to the same pillar. A \emph{simple} cycle in a graph is a cycle that does not self-intersect.
\begin{observation}
Let $\mathcal{B}$ be the bar stack obtained from the bar representation of a line arrangement. $G(\mathcal{B})$ does not admit any simple cycles.
\end{observation}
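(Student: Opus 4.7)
The plan is to suppose for contradiction that $C$ is a simple (non-self-intersecting) cycle in $G(\mathcal{B})$ and derive a forced crossing from the height structure of a line arrangement. Each vertex of $G(\mathcal{B})$ is the endpoint of a unique bar and therefore incident to exactly one bar-edge, so $C$ must alternate bar-edges with pillar-edges; I would write $C = B_1 E_1 B_2 E_2 \cdots B_k E_k$, where $B_i$ is a bar at height $h_i$ and $E_i$ is a pillar-edge on some pillar $P_i$ connecting endpoints of $B_i$ and $B_{i+1}$. Because any two distinct pillars are joined by exactly one bar in $\mathcal{B}$ (each pair of lines meets exactly once), $C$ must touch at least three distinct pillars: with a single pillar all edges are collinear and the drawing is not a proper simple closed curve, while with two distinct pillars only one bar is available and the cycle cannot close.

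The essential tool is the \emph{betweenness property} of line arrangements: for any three lines with slope indices $a<b<c$, the bar representing $l_a\cap l_c$ has height strictly between the heights of $l_a\cap l_b$ and $l_b\cap l_c$. This follows from a sweep-line argument on the three-line sub-arrangement, since among three lines the intersection of the extremal-slope pair is always the middle swap in the allowable sequence. I would then select $B^*\in C$ of widest slope-span $[p,q]$, joining pillars $P,Q$ with $\sigma(P)=p<q=\sigma(Q)$. A pigeonhole argument yields a pillar $R$ visited by $C$ with $\sigma(R)\in(p,q)$: otherwise every bar of $C$ has slope-span at most $q-p$, and after leaving $Q$ the cycle cannot return to $P$ without crossing the forbidden slope interval or reusing $B^*$.

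Applying the betweenness property to the triple $(P,R,Q)$ places the height $h^*$ of $B^*$ strictly between the heights of the arrangement bars $B_{PR}$ and $B_{RQ}$. I would then argue that the pillar-edge of $C$ on $R$ has its height interval straddling $h^*$, so that $B^*$ crosses it transversally at $(\sigma(R),h^*)$, contradicting simplicity. This last step is the main obstacle, since the cyclic neighbors of $R$ in $C$ need not be $P$ and $Q$, allowing the pillar-edge on $R$ in $C$ to lie entirely above or below $h^*$. I plan to resolve this by induction on the number $k$ of bars of $C$: the base case $k=3$ is immediate from betweenness since all three bars among the three pillars lie in $C$ and the triangle argument applies directly, and for $k\ge 4$ I would choose $R$ whose cyclic position in $C$ lies between $P$ and $Q$ along the walk, and argue that either the pillar-edge on $R$ already straddles $h^*$, or else one of the arrangement bars $B_{PR}$, $B_{RQ}$ can be used to reroute $C$ into a strictly smaller simple cycle to which the induction hypothesis applies.
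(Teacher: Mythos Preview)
The paper states this observation without proof, so there is no argument of the authors to compare against; what follows is an assessment of your proposal on its own terms.

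Your central tool, the betweenness property, is the right idea but is misstated. The claim that for ``slope indices $a<b<c$'' the bar $l_a\cap l_c$ sits at the middle height fails when the three slopes have mixed signs (try slopes $-1,\ 0.01,\ 1$). What your allowable-sequence justification actually proves is that the middle swap is between the two lines that are \emph{extremal in the left-to-right order at $y=+\infty$}; the bar representation---and hence the Observation itself---tacitly assumes this natural indexing. With an arbitrary labelling the Observation is already false for three lines: if the $y$-order of intersections is $l_1\cap l_2,\ l_2\cap l_3,\ l_1\cap l_3$ (which is realizable), the unique $6$-cycle in $G(\mathcal{B})$ is a non-self-intersecting staircase. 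This is a fixable slip, but you should be explicit about which indexing you use and why the statement requires it.

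The substantive gap is the inductive step. When the pillar-edge on $R$ lies entirely to one side of $h^*$, you propose to reroute through $B_{PR}$ or $B_{RQ}$ and invoke the hypothesis on a shorter cycle. Two things break. First, the rerouted cycle need not be simple: the newly inserted bar can cross pillar-edges of $C$ that you keep, so the induction hypothesis does not apply. Second, even if the shorter cycle self-intersects, that crossing may involve the inserted bar and therefore says nothing about $C$; you have given no mechanism to transfer a crossing in the rerouted cycle back to a crossing between two edges of the original $C$. Your base case $k=3$ is clean, and your pigeonhole step locating a pillar $R$ with index in $(p,q)$ is correct, but the reduction from $k\ge 4$ to smaller $k$ is not justified as written. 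One way forward is to drop the induction and argue directly with a parity/Jordan-curve count: for any non-integer $x\in(p,q)$ the vertical line through $x$ meets the simple curve $C$ in an even number of bars, one of which is $B^*$, and combining this over all such $x$ with the betweenness property should let you pin down a bar of $C$ that must cross a pillar-edge of $C$; but that argument also needs to be written out carefully.
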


\begin{question}
Let $\mathcal{W}=\mathit{w}_1 \mathit{w}_2 \dots \mathit{w}_n$ be any sequence of $n$ integers from zero to ${n \choose 2}$. Let $\mathcal{B}$ be any bar representation of a line arrangement.
Is the largest side compatible subset for $\mathcal{W}$ in $\mathcal{B}$ linear-sized?
\label{lem:bar-rep}
\end{question}

\section{Generalizations of crossing families}
In this section, we study two (geometric) generalizations of crossing families.

The first generalization may be viewed as a generalization of spoke sets in the dual plane. In Section~\ref{sec:ss}, we summarize the definitions and results on spoke sets in the primal and dual planes.
We then introduce a more generalized notion for the dual of spoke sets, which we call ``M-semialternating paths''.
M-semialternating paths are related to pseudolines in two-coloured line arrangements that intersect (in order) lines of alternate colours in the arrangement.
We give a few upper bounds on the sizes of certain M-semialternating paths.
As a step towards showing that spoke sets are linear-sized, we prove that the sizes of certain M-semialternating paths are connected. This also helps us improve the upper bound on the size of spoke sets from $\frac{9n}{20}$ to $\frac{n}{4}+1$. 

The second generalization generalizes a crossing family to a family of segments such that for every pair of segments, the supporting line of one intersects the interior of the other. We call such a family of segments a \emph{stabbing family}. We show that the size of the largest stabbing family in an $n$-point set is $\frac{n}{2}$.

\subsection{Spoke sets} \label{sec:ss}
\citet{bose} studied partitions of complete geometric graphs into plane trees and introduced the notion of spoke sets, which is closely related to crossing families.

\begin{definition}
\label{def:ss}
Let $\mathcal{P}$ be a set of points in general position.  A set $\mathcal{L}$ of pairwise non-parallel lines such that  each open unbounded region in the arrangement of $\mathcal{L}$ has at least one point of $\mathcal{P}$ is called a \emph{spoke set} for $\mathcal{P}$. The size of a spoke set $\mathcal{L}$ is the number of lines in $\mathcal{L}$.
\end{definition}

Note that if we extend the segments of a crossing family to lines and
perturb them infinitesimally with a clockwise rotation
so that the endpoints of each segment are on different sides of its perturbed line, the resulting line arrangement has exactly one point in every unbounded region and hence forms a spoke set.
While any crossing family of size $k$ guarantees a spoke set of size $k$, the reverse is not true. It is not known whether the order of magnitude of the sizes of crossing families and spoke sets are the same or not.

\citet{gcrf17} studied
spoke sets in the dual plane and introduced the notion of spoke paths:

\begin{definition} 
\label{def:sp}
A \emph{cell-path} in an arrangement $\mathcal{L}$ of lines (or pseudolines) is a sequence of cells in the arrangement such that consecutive cells share an edge. The \emph{length} of a cell-path is one less than the number of cells involved.
A cell-path $\mathcal{C}=\langle C_0,C_1,\dots, C_k\rangle$ is \emph{AB-semialternating} if for every even $i<k-1$, $C_i$ is above the line separating $C_i$ and $C_{i+1}$ if and only if $C_{i+2}$ is above the line separating $C_{i+1}$ and $C_{i+2}$. A cell-path is \emph{line-monotone} if the lines extending the edges shared by two consecutive cells are all distinct.
A cell-path in a subarrangement of $\mathcal{L}$ that is line-monotone and AB-semialternating is called a \emph{spoke path} for $\mathcal{L}$.
A cell-path is \emph{admitted} by the set of lines that extend the edges shared by two consecutive cells in the cell-path.
\end{definition}

Let $P$ be a point set and $P^\star$ denote the dual of $P$. Let $\mathcal{C}=\langle C_0,C_1,\dots, C_{2k}\rangle$ be a spoke path for $P^\star$. $\mathcal{L}=\{p^\star_i \mid p_i \in C_{2i}, i \in[k]\}$ is a spoke set for $P$ (where $p^\star_i$ is the dual of $p_i$).

\begin{lemma} [\citet{gcrf17}, Lemma 1]
Let $\mathcal{P}$ be a point set and $\mathcal{P^{\star}}$ be the dual line arrangement for $\mathcal{P}$. $\mathcal{P}$ contains a spoke set of size $k$ if and only if $\mathcal{P^\star}$ contains a spoke path of length $2k$.
\end{lemma}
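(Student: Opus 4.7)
The plan is to leverage point--line duality $(a,b)\leftrightarrow\{y=ax-b\}$, under which ``$p$ above $\ell$'' is equivalent to ``$\ell^\star$ below $p^\star$.'' The central observation bridging the two sides is that the $k$ pairwise non-parallel lines of a spoke set $\mathcal{L}=\{l_1,\dots,l_k\}$ have $k$ distinct points at infinity, so the circle at infinity is cyclically divided into $2k$ arcs, one per unbounded region of the arrangement of $\mathcal{L}$. Consecutive regions $R_i,R_{i+1}$ (in cyclic order) are separated by a unique line of $\mathcal{L}$, and each line of $\mathcal{L}$ appears as such a separator exactly twice around the cycle.

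For the forward direction, use the spoke-set property to pick a witness point $q_i\in\mathcal{P}\cap R_i$ for every unbounded region, and consider the subarrangement $\{q_i^\star\}$ of $\mathcal{P}^\star$. Under duality, the side of $q_i^\star$ on which the dual point $l_j^\star$ lies is determined entirely by which side of $l_j$ the witness $q_i$ lies on in the primal, which is determined by the ``side pattern'' of $R_i$. Because crossing the separator between $R_i$ and $R_{i+1}$ flips exactly one bit of this pattern, the cells $C_i$ of the subarrangement that realize these patterns form a sequence in which consecutive cells share an edge on a single line, with each edge lying on a distinct $q_j^\star$; line-monotonicity is then immediate from the distinctness of the witnesses. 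The AB-semialternating condition is the last nontrivial check: at an odd-indexed cell the path enters on one line and leaves on another, and I would verify that the two separator lines $l_{\pi(i)}$ and $l_{\pi(i+1)}$ have slopes straddling the direction of $R_{i+1}$, which in the dual places $l_{\pi(i)}^\star$ and $l_{\pi(i+1)}^\star$ on consistent sides of $q_{i+1}^\star$, forcing the required same-side condition on $C_i$ and $C_{i+2}$.

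The reverse direction runs the same dictionary backward: given a spoke path of length $2k$, its $2k$ distinct supporting lines dualize to $2k$ primal witness points, while the $k$ ``pivots'' selected by the AB-bends dualize to the candidate spoke lines. Line-monotonicity of the path gives pairwise distinctness of the witnesses; the AB-semialternating structure gives pairwise non-parallelism of the extracted spoke lines (via a slope-straddling argument dual to the one above); and each of the $2k$ unbounded regions of the resulting primal arrangement contains one of the $q_i$ by the side-pattern correspondence, so $\mathcal{L}$ is indeed a spoke set.

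The main obstacle will be the bookkeeping needed to reconcile the inherently cyclic picture at infinity (natural on the primal side) with the linear picture of a cell-path (required on the dual side): a linear spoke path of length $2k$ has $2k+1$ distinct cells, whereas the cycle of unbounded primal regions has only $2k$ elements. Reconciling this off-by-one amounts to choosing a consistent ``cut'' of the cycle in both directions of the proof, and verifying that this cut does not destroy an edge-sharing transition or create a spurious cell repetition. Beyond that, the AB-semialternating verification---translating the ``slopes on opposite sides'' condition in the primal into the ``same side of $q_{i+1}^\star$'' condition in the dual---is the delicate local calculation the proof ultimately rests on.
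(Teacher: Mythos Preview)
The paper does not give its own proof of this lemma; it is quoted from Schnider (reference \texttt{gcrf17}) and stated without argument. The only hint the paper supplies is the sentence immediately preceding the lemma: from a spoke path $\langle C_0,\dots,C_{2k}\rangle$ in $P^\star$, take a point $p_i\in C_{2i}$ for each $i\in[k]$ and set $\mathcal{L}=\{p_i^\star\}$. Your reverse direction is this same construction---your ``$k$ pivots selected by the AB-bends'' are exactly these even-cell representatives---and the off-by-one you flag is handled in the paper's sketch simply by using $C_2,\dots,C_{2k}$ and discarding $C_0$.

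Your forward direction has no counterpart in the paper. The idea is right, but your description of the cell-path is imprecise: you speak of cells of the subarrangement $\{q_i^\star\}$ that ``realize the side patterns'' of the primal regions $R_i$, yet those patterns are $k$-bit vectors (positions relative to $l_1,\dots,l_k$), whereas cells of a $2k$-line arrangement carry $2k$-bit sign vectors, so there is no direct correspondence. The clean construction---dual to the paper's hint---is to take as the \emph{even} cells of the path the cells of $\{q_i^\star\}$ containing the points $l_1^\star,\dots,l_k^\star$ (each is a median cell, since every spoke line has exactly $k$ witnesses on each side); moving from $l_j^\star$ to $l_{j+1}^\star$ one crosses exactly the two lines dual to the witnesses in the antipodal pair of regions wedged between $l_j$ and $l_{j+1}$, which supplies the odd cell in between. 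Extending by one step at each end to pick up the two remaining witnesses yields length $2k$, and with the path described this way, line-monotonicity and the AB-semialternating check go through essentially as you outline.
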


\begin{conjecture} [\citet{gcrf17}, Lemma 2]
Let $\mathcal{P}=\mathcal{P}_{\mathcal{B} \vdash \mathcal{R}}$ be a $1$-avoiding $2k$-point set. The dual line arrangement $\mathcal{P^\star}$ contains a spoke path of length $k+2$, if $k$ is even, and $k+3$, if $k$ is odd.
\label{schnider-claim}
\end{conjecture}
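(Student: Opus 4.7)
The plan is to construct the desired spoke path directly by exploiting the rigidity of a $1$-avoiding dual arrangement. In the dual, the $k$ blue lines $B^\star$ form a bundle of steep slopes and the $k$ red lines $R^\star$ a bundle of shallower slopes, and by hypothesis every red line encounters the blue lines in one fixed order $b^\star_1, b^\star_2, \dots, b^\star_k$. The avoidance condition further guarantees that the $\binom{k}{2}$ blue--blue intersections all lie in a region separated from the $\binom{k}{2}$ red--red intersections, so the arrangement decomposes vertically into a ``red zone'' of shallow crossings, a central ``mixed zone'' of red--blue crossings organized by the common order, and a ``blue zone'' of steep crossings. This stratification is what I would try to leverage.

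First I would fix a starting cell $C_0$ at one end of the central strip, chosen so that its incident edges on either side lie on one red and one blue line. I would then build the path by alternating crossings: from $C_{2j}$, cross the next blue line in the common order $b^\star_{j+1}$ to reach $C_{2j+1}$; from $C_{2j+1}$, cross an adjacent red line (the one ordered by vertical position among the reds that bound the current cell) to reach $C_{2j+2}$. Because every red line sees the blue lines in the same order, each new blue crossing involves a blue line not yet used, and the vertical propagation through the red strips ensures each new red crossing is likewise fresh, which establishes line-monotonicity. Continuing the alternation exhausts the $k$ blue lines in $k$ steps and uses $k$ or $k+1$ red lines, yielding the required length $k+2$ (for $k$ even) or $k+3$ (for $k$ odd) after adjusting the endpoints of the path to sit in the appropriate unbounded cells of the red zone.

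The genuinely delicate step is the AB-semialternating condition. It demands that, looking at two consecutive segments of the path shared with lines $\ell_i$ and $\ell_{i+1}$ (for even $i$), the cells $C_i$ and $C_{i+2}$ lie on the same side of their respective separating lines in a coordinated way. Here I would maintain an auxiliary invariant throughout the construction: after each pair of crossings, record whether the path currently sits above or below the blue line just used, and whether it sits above or below the red line just used. The separation between the red zone and blue zone (which is a consequence of the $1$-avoiding structure) forces these two local decisions to remain coupled, which I would then translate into the required ``above iff above'' clause in Definition~\ref{def:sp}. Checking this invariant survives each crossing --- and choosing, when more than one adjacent red is available, the one that preserves it --- is the combinatorial heart of the argument.

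The main obstacle is precisely this last step. The authors note that Schnider's original proof appears to break here, and the difficulty is real: naive zigzag constructions are easily shown to be line-monotone but routinely violate the AB-semialternating condition at the first crossing where the path exits the central strip and enters either the red or the blue zone. Repairing this requires either a case analysis on the parity of $k$ and the position of the starting cell, or a more global argument (for instance, pairing each blue line $b^\star_j$ with a specific red line $r^\star_{\sigma(j)}$ via a carefully chosen permutation $\sigma$ dictated by the common order of crossings) that forces the invariant automatically. Establishing that some such pairing always exists, and that the resulting cell-path terminates at a pair of unbounded cells rather than getting trapped in the interior, is where I expect the bulk of the work --- and the reason the statement is listed as a conjecture rather than a lemma.
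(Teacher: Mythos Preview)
The paper does \emph{not} prove this statement; it is recorded as a conjecture precisely because the authors argue that Schnider's original proof is flawed. The paper's only content here is a summary of Schnider's approach together with an explanation of the gap: Schnider builds an ``extended diagram'' (a horizontal wiring diagram for $\mathcal{R}^\star$ superimposed on a vertical one for $\mathcal{B}^\star$), transforms it through a sequence of moves to a canonical diagram that visibly contains two long spoke paths, then reverses the moves while tracking the paths. The authors observe that the transformation rules can make pseudolines double-cross, so the intermediate objects need not be legitimate extended diagrams; and that fixing the rules to prevent double-crossing breaks the ability to maintain or bound the spoke paths during the reverse process.

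Your approach is genuinely different from Schnider's: you attempt a direct greedy construction inside the original arrangement, alternately crossing blue lines in the common order and adjacent red lines, rather than deforming the arrangement. You correctly identify that line-monotonicity is the easy part and that the AB-semialternating condition is where such a construction runs into trouble --- this is consistent with the paper's diagnosis that the difficulty is real and currently unresolved. One additional issue you should flag: your opening structural claim, that the blue--blue crossings are separated from the red--red crossings and that the arrangement decomposes into clean red/mixed/blue ``zones'', uses more than $1$-avoidance gives. In $\mathcal{P}_{\mathcal{B}\vdash\mathcal{R}}$ only $\mathcal{B}$ avoids $\mathcal{R}$, so every red dual line meets the blue lines in a common order, but the blue lines need not meet the red lines in a common order; the red--red crossings can therefore be interleaved with the mixed crossings, and the tidy stratification you rely on to propagate your invariant may not exist. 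Since your entire mechanism for preserving the AB-semialternating condition rests on that stratification, this is a second substantive gap beyond the one you already acknowledge.
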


Conjecture~\ref{schnider-claim} is claimed to be proved by~\citet{gcrf17} but the proof given does not seem to be correct.
The sketch of their proof is the following: Given the line arrangement $\mathcal{P}^\star=\mathcal{B}^\star \cup \mathcal{R}^\star$, they construct
an ``extended diagram'' for $\mathcal{P}^\star$, which is a horizontal wiring diagram for $\mathcal{R}^\star$ and a vertical wiring diagram for $\mathcal{B}^\star$,
with the same intersection order along every pseudoline as in $\mathcal{P}^\star$. A \emph{wiring diagram} is an arrangement of pseudolines consisting of piecewise linear ``wires'', where the wires (i.e. pseudolines) are horizontal except for small neighbourhoods of their crossings with other wires. They change the extended diagram through a number of steps to reach a certain type of an extended diagram, which has two spoke paths of size $2k$. They then reverse their moves to get back to the initial extended diagram, and in each step modify those spoke paths accordingly so that they become spoke paths of the new extended diagram obtained. Note that the size of the spoke paths may shrink at each step. They show that at least one of those spoke paths is large enough when they return to the initial diagram. However, the problem with their proof is that their proposed rules for getting new diagrams from old ones may cause the pseudolines to double cross and hence what is obtained is not guaranteed to be an extended diagram. It is easy to change the rules so that at each step we can guarantee that no two pseudolines double cross, however, with the new rules, the process of modifying the spoke paths when reversing the moves becomes problematic. That is, either it is not easy to maintain a spoke path or to guarantee having a large enough one.

\subsection{M-semialternating paths}
\label{sec:m}
Here, we generalize the notion of spoke paths. We start by some terminologies.

We call a polygonal chain whose line segments connect points in consecutive cells of a spoke path admitted by $\mathcal{A} \subseteq \mathcal{L}$ an \emph{AB-semialternating path} for $\mathcal{A}$. 
The definition of spoke paths (Definition~\ref{def:sp}) implies that any AB-semialternating path for $\mathcal{A}$ starts and ends in median cells of $\mathcal{A}$.
A \emph{median cell} is a cell of the arrangement that has an equal number of lines above and below it. A cell is of \emph{level} $k$ if each point in its interior is above exactly $k$ lines.
Figure~\ref{fig:sp-m} shows an AB-semialternating path for a subarrangement.

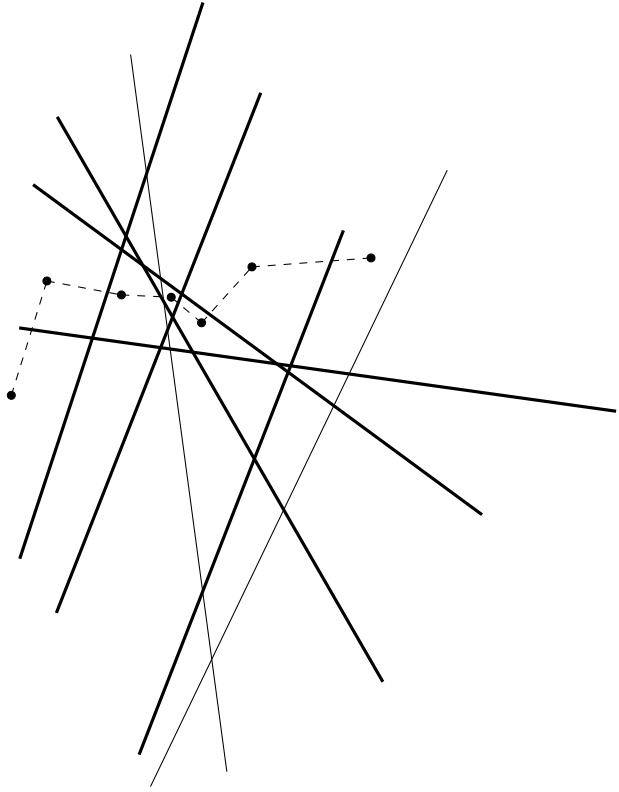
\begin{figure}[h]
	\begin{center}
\begin{tikzpicture}[scale=1.5]
\begin{scope}[rotate=35]
	\draw[very thick] (0.09,0.28)--(4,-3.36);
	\draw[very thick] (0.92,1.25)--(2.5,-3.43);
	\draw[very thick] (1.44,1.62)--(0.93,-4.14);
	\draw[] (2.29,1.7)--(-0.66,-4);
	
	\draw[very thick] (-1.08,-1.4)--(3.08,1.71);
	\draw[very thick] (-1.09,-1.98)--(3.04,0.76);
	\draw[very thick] (-1.21,-3.43)--(2.94,-0.66);
	\draw[] (-1.29,-3.72)--(4,-0.75);
	
	\coordinate (A) at (-0.31,-0.17);
	\coordinate (B) at (0.53,0.48);
	\coordinate (C) at (1,0);
	\coordinate (D) at (1.35,-0.27);
	\coordinate (E) at (1.44,-0.61);
	\coordinate (F) at (2.09,-0.46);			
	\coordinate (G) at (3,-1);			
	
	\foreach \p in {A,B,...,G}
			\draw[fill] (\p) circle (1pt);	
			
	\draw[dashed] (A) --(B);			
	\draw[dashed] (B) --(C);			
	\draw[dashed] (C) --(D);			
	\draw[dashed] (D) --(E);			
	\draw[dashed] (E) --(F);			
	\draw[dashed] (F) --(G);			
					
\end{scope}
	
\end{tikzpicture}
	\end{center}
	\caption{The dashed polygonal chain is an AB-semialternating path for the arrangement marked in bold. No spoke path of size eight exists for the line arrangement containing all lines.}
	\label{fig:sp-m}
\end{figure}

Let $\mathcal{C}=\langle C_0, C_1,\cdots,C_{|\mathcal{A}|}\rangle$ be a line-monotone cell-path admitted by $\mathcal{A}$, where $|\mathcal{A}|$ is even. An \emph{even} cell in $\mathcal{C}$ is a cell whose level has the same parity as that of $C_0$.
Let the \emph{level-sequence} of $\mathcal{C}$ be $seq(\mathcal{C})=\langle s_0,s_1,\cdots,s_{\frac{|\mathcal{A}|}{2}}\rangle$, where $s_i$ is the level of $C_{2i}$ in $\mathcal{C}$.
Note that every even cell in a spoke path admitted by $\mathcal{A}$ is a median cell of $\mathcal{A}$; hence the level-sequence of a spoke path admitted by $\mathcal{A}$ is $\{\frac{\mathcal{A}}{2}\}^{\frac{\mathcal{A}}{2}+1}$.

We generalize the notion of an AB-semialternating path admitted by a subarrangement $\mathcal{A}$ so that it intersects each line in $\mathcal{A}$ once and the level-sequence of the underlying cell-path is non-decreasing.
A pair of unbounded regions $U_1$ and $U_2$ in a line arrangement are \emph{antipodal} if for every line of the arrangement, the points in $U_1$ and $U_2$ are on opposite sides.
The fact that the underlying cell-path intersects each line in $\mathcal{A}$ exactly once, implies that our generalized path needs to start and end in antipodal unbounded regions (of $\mathcal{A}$).

\begin{definition}
Let $\mathcal{L}=\mathcal{L_B} \cup \mathcal{L_R}$ be a two-coloured line arrangement, where $|\mathcal{L}_\mathcal{B}|=|\mathcal{L}_\mathcal{R}|$. 
Let $\ell$ be a pseudoline intersecting each line in $\mathcal{L}$ once. Let $\ell(\mathcal{L})=l_1, l_2,\dots, l_{|\mathcal{L}|}$ define the order in which $\ell$ intersects the lines of $\mathcal{L}$.
We say $\ell$ is monotonically semialternating, or M-semialternating for short, for $\mathcal{L}$ if 
\begin{enumerate}[label={(\bf{M\arabic*)}},leftmargin=2.5\parindent]
\item \label{semialt}for every odd $i<|\mathcal{L}|$, $l_i$ and $l_{i+1}$ are of different colours, and 
\item the level-sequence of the underlying cell-path for $\ell$ is non-decreasing.
\end{enumerate}
 If $\mathcal{A} \subseteq \mathcal{L}$ admits an M-semialternating pseudoline, we say $\mathcal{L}$ contains an \emph{M-semialternating path} of \emph{size} $|\mathcal{A}|$. We say $\ell$ is \emph{semialternating} if it satisfies \ref{semialt}.
\end{definition}

Let $\mathcal{P}_{\mathcal{B} | \mathcal{R}}$ be a point set $\mathcal{B} \cup \mathcal{R}$, where $\mathcal{B}$ is a blue point set and $\mathcal{R}$ is a red point set, such that $\mathcal{B}$ and $\mathcal{R}$ are two equal-sized sets of points that are separable by a vertical line. We call $\mathcal{P}_{\mathcal{B} | \mathcal{R}}$ a \emph{color-separable} point set. Let $\mathcal{L}_\mathcal{B}$ and $\mathcal{L}_\mathcal{R}$ be the dual line arrangements for $\mathcal{B}$ and $\mathcal{R}$, respectively. Let $\mathcal{L}=\mathcal{L}_\mathcal{B} \cup \mathcal{L}_\mathcal{R}$, where $\mathcal{L}_\mathcal{B}$ is blue and $\mathcal{L}_\mathcal{R}$ is red. We call $\mathcal{L}$ a \emph{color-separable} line arrangement. Recall that a line arrangement is \emph{$1$-avoiding} if its dual point configuration is $1$-avoiding.


We show that the sizes of certain M-semialternating paths in color-separable line arrangements are connected. M-semialternating paths that start in different levels of a color-separable line arrangement correspond to different concepts in the dual plane.
We
exploit this correspondence
to improve the upper bound on the size of spoke sets.
We also give a linear upper bound on the size of subarrangements admitting semialternating pseudolines. We show that if we consider lines rather than pseudolines, there exist $1$-avoiding line arrangements whose largest subarrangement admitting an M-semialternating line is of constant size.

\begin{definition}
For a point set $P$ and a line $l$, let $A(P,l)$ denote the set of points in $P$ that are above $l$.
A \emph{parallel set} of size $k$ in a two-coloured point set $\mathcal{P}=\mathcal{P}_{\mathcal{B} | \mathcal{R}}$ is a set of lines $L={L_0,L_1,\dots,L_k}$
for which there exist $B \subseteq \mathcal{B}$ and $R \subseteq \mathcal{R}$, where $|B|=|R|=k$, such that for any $1 \le i \le k$, $A(B,L_{i-1}) \subsetneq A(B,L_i)$ and $A(R,L_{i-1}) \subsetneq A(R,L_i)$.
A \emph{focal parallel set} is a parallel set whose lines all intersect at the same point.
\end{definition}

\begin{observation}
Let $\mathcal{P}=\mathcal{P}_{\mathcal{B} | \mathcal{R}}$ be a color-separable point set.
Let $\mathcal{P}^\star$ be the dual line arrangement for $\mathcal{P}$. $\mathcal{P}$ contains a parallel set (or focal parallel set) of size $k$ if and only if a $2k$-subset $\mathcal{L}$ of $\mathcal{P}^\star$ admits an M-semialternating pseudoline (or line) that starts and ends in cells of $\mathcal{L}$ that are of levels zero and $2k$.
\end{observation}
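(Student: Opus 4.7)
The plan is to use point-line duality directly. With the convention $p = (a,b) \leftrightarrow p^\star\colon y = ax - b$ and $L\colon y = mx + c \leftrightarrow L^\star = (m,-c)$, the relation ``$p$ is above $L$'' is equivalent to ``$L^\star$ is above $p^\star$''. Consequently, for any primal line $L$, the level of the cell of $\mathcal{P}^\star$ containing $L^\star$ equals the number of points of $\mathcal{B}\cup\mathcal{R}$ lying above $L$. This correspondence is the bridge between the two conditions, and both directions of the equivalence will fall out from tracking it carefully.

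For the forward direction, suppose the parallel set $L_0,\ldots,L_k$ is witnessed by $B\subseteq\mathcal{B}$ and $R\subseteq\mathcal{R}$ of size $k$ each. The chain of strict inclusions of length $k$ forces $A(B,L_0)=A(R,L_0)=\emptyset$ and $A(B,L_k)=B$, $A(R,L_k)=R$, and moreover each step $L_{j-1}\to L_j$ must add exactly one blue point and exactly one red point to the ``above'' set. Setting $\mathcal{L} = B^\star\cup R^\star$ (a $2k$-subset of $\mathcal{P}^\star$ with $k$ lines of each color), the duality places $L_j^\star$ in a cell of $\mathcal{L}$ at level $2j$. I will define the pseudoline $\ell$ to pass through $L_0^\star,L_1^\star,\ldots,L_k^\star$ in order, perturbed to avoid intersection points of $\mathcal{L}$. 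Between $L_{j-1}^\star$ and $L_j^\star$ it crosses exactly the two lines whose sides change between these points---one blue, one red---yielding property \ref{semialt} and a strictly increasing level sequence $0,2,\ldots,2k$. If the parallel set is focal through a common point $p$, the $L_j^\star$ are collinear on the line $p^\star$, and I simply take $\ell=p^\star$ as an honest M-semialternating line.

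For the backward direction, given an M-semialternating pseudoline $\ell$ for a $2k$-subset $\mathcal{L}$ with end cells at levels $0$ and $2k$, I will first observe that the level sequence is entirely forced: each of the $2k$ cell transitions changes the level by $\pm 1$, and since the total change must be $+2k$, every transition must increase the level. Thus cell $C_i$ has level $i$, the even-indexed cells have levels $0,2,\ldots,2k$, and property \ref{semialt} then partitions $\mathcal{L}$ into $k$ blue and $k$ red lines paired by consecutive crossings. I will then choose a representative point $L_j^\star$ inside each even-indexed cell $C_{2j}$---on $\ell$ itself if $\ell$ is an actual line, so the primal lines $L_j$ concur at $\ell^\star$---and dualize. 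The key point is that because $\ell$ crosses each line of $\mathcal{L}$ exactly once and monotonically (once a line has been crossed, it stays on the opposite side of $\ell$ for the rest of the path), the set of lines of $\mathcal{L}$ below $L_{j-1}^\star$ is properly contained in that below $L_j^\star$. Dualizing gives the strict inclusions $A(B,L_{j-1})\subsetneq A(B,L_j)$ and $A(R,L_{j-1})\subsetneq A(R,L_j)$ required of a parallel set, and collinearity of the $L_j^\star$ on $\ell$ in the line case yields a focal one.

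The main subtlety will be justifying the monotone-crossing principle cleanly in both directions---forward, to conclude that exactly the two expected lines (and no others) are crossed between consecutive $L_j^\star$'s; backward, to upgrade equality of cardinalities into honest set inclusions. This is where the two formulations are genuinely equivalent rather than only numerically so, and a small amount of careful bookkeeping of which line is crossed on which segment of $\ell$ will be the hard part of the argument.
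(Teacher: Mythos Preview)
The paper states this as an observation without proof, so there is nothing to compare against. Your argument is correct and is exactly the natural duality computation: the order-reversing incidence ``$p$ above $L$ $\Leftrightarrow$ $L^\star$ above $p^\star$'' converts the nested sets $A(B,L_j)$, $A(R,L_j)$ into the level of $L_j^\star$ in $\mathcal{L}=B^\star\cup R^\star$, and the strict chain of length $k$ forces $|A(B,L_j)|=|A(R,L_j)|=j$, giving levels $0,2,\ldots,2k$ and one blue/one red crossing per step. Your backward direction is also sound; the observation that a path from level $0$ to level $2k$ in $2k$ unit steps must increase at every step is the cleanest way to see that the level sequence is forced, and monotonicity of the ``below'' sets along such a path gives the honest inclusions, not just the cardinalities.

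Two small technical remarks, neither of which is a gap. First, for the forward direction you should note why the polygonal path $L_0^\star\to\cdots\to L_k^\star$ is a simple curve (hence extendable to a pseudoline): since the level strictly increases along it, non-adjacent edges lie in cells of disjoint level ranges and cannot meet. Second, in the focal case the line $\ell=p^\star$ might a priori pass through a vertex of $\mathcal{L}$; but the parallel-set condition is open, so one can perturb $p$ (and the $L_j$ through it) slightly to put $p^\star$ in general position with respect to $\mathcal{L}$ without destroying the nested inclusions.
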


\begin{lemma}
Let $\mathcal{P}=\mathcal{P}_{\mathcal{B}|\mathcal{R}}$ be a color-separable $2n$-point set, and let $\mathcal{L}$ be the dual line arrangement for $\mathcal{P}$. If $\mathcal{L}$ contains an M-semialternating path of size $2k$, then either the spoke set or the parallel set for $\mathcal{P}$ is of size at least $\frac{k}{2}$.
\label{lem:ssORps}
\end{lemma}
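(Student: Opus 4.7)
The plan is to analyse the level-sequence $s_0 \leq s_1 \leq \cdots \leq s_k$ of the M-semialternating pseudoline $\ell$ admitted by a $2k$-line sub-arrangement $\mathcal{A}\subseteq\mathcal{L}$, and to dichotomise the $k$ colour-alternating crossing pairs according to how the level changes across each pair. Since a single crossing changes the level by $\pm 1$ and the sequence is non-decreasing, each step satisfies $s_{i+1}-s_i\in\{0,2\}$. Call a pair \emph{flat} if $s_{i+1}=s_i$ (one of its two crossings takes $\ell$ from above its line to below and the other from below to above) and \emph{climbing} if $s_{i+1}=s_i+2$ (both crossings are from above to below). By pigeonhole, either at least $\lceil k/2\rceil$ pairs are flat, or at least $\lceil k/2\rceil$ are climbing, and I will extract a spoke set from the flat pairs and a parallel set from the climbing ones.

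Suppose at least $\lceil k/2\rceil$ pairs are flat, and let $\mathcal{A}_0\subseteq\mathcal{A}$ be the sub-arrangement of the $\geq k$ lines coming from those pairs. The pseudoline $\ell$ still crosses each line of $\mathcal{A}_0$ exactly once, so the induced cell-path is line-monotone. The crucial local check is that each flat pair satisfies the AB-semialternating condition: the two possible level patterns within such a pair are $s,s{+}1,s$ and $s,s{-}1,s$. In the first, $\ell$ crosses from above the first line to below and then from below the second line to above, so the cells before and after the pair both lie above their respective crossed lines; the second pattern is symmetric and produces the ``both below'' case. Removing the lines of $\mathcal{A}\setminus\mathcal{A}_0$ only merges cells of the arrangement and does not change the side on which any retained line lies, so these local conditions extend verbatim to the restricted cell-path. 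Hence $\ell$ determines a line-monotone AB-semialternating cell-path in $\mathcal{A}_0$, i.e.\ a spoke path for $\mathcal{L}$ of length $|\mathcal{A}_0|\geq k$, and the spoke-path/spoke-set correspondence yields a spoke set for $\mathcal{P}$ of size $|\mathcal{A}_0|/2\geq k/2$.

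Otherwise at least $\lceil k/2\rceil$ pairs are climbing; let $\mathcal{A}_+\subseteq\mathcal{A}$ consist of the $\geq k$ lines of those pairs. Because every crossing inside a climbing pair takes $\ell$ from above its line to below it, every crossing of $\ell$ in $\mathcal{A}_+$ is from above to below. Consequently $\ell$ begins in $\mathcal{A}_+$ above every one of its lines (starting cell at level $0$) and ends below every one of them (terminal cell at level $|\mathcal{A}_+|$). Colour-alternation is inherited pair by pair, and the restricted level-sequence jumps by $+2$ at every step and is in particular non-decreasing, so $\ell$ remains an M-semialternating pseudoline for $\mathcal{A}_+$ whose endpoints lie in the two antipodal unbounded regions of extreme levels. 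The preceding observation characterising parallel sets then produces a parallel set for $\mathcal{P}$ of size $|\mathcal{A}_+|/2\geq k/2$.

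I expect the principal obstacle to be in the first case, namely verifying that the AB-semialternating condition that is naturally satisfied at each flat pair within $\mathcal{A}$ survives the restriction to $\mathcal{A}_0$, where flat pairs originally separated by climbing pairs become adjacent in the new cell-path. This is really the statement that removing lines from a line arrangement does not alter the side of any remaining line on which a given cell lies; once that is cleanly formulated, the rest of the argument is a routine pigeonholing followed by direct appeals to the previously stated spoke-set and parallel-set correspondences.
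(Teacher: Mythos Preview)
Your proof is correct and follows essentially the same route as the paper's: both arguments partition the $k$ colour-alternating pairs into ``flat'' ($s_{i+1}=s_i$) and ``climbing'' ($s_{i+1}=s_i+2$) types, observe that the flat pairs yield a spoke path while the climbing pairs yield a parallel set, and conclude by noting that one of the two collections has size at least $k/2$. The paper phrases the last step as $|\mathcal{I}_1|+|\mathcal{I}_2|=k$ rather than pigeonhole, and it justifies the spoke-path property by remarking that the restricted level-sequence is constant (which immediately forces each pair to have pattern $s,s{\pm}1,s$ and hence AB-semialternating), whereas you check AB-semialternating locally and then argue it survives restriction; these are equivalent. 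One small slip: in your climbing case you write ``$\ell$ begins in $\mathcal{A}_+$ above every one of its lines (starting cell at level $0$)'', but level $0$ means \emph{below} every line; the direction of the above/below language is reversed throughout that paragraph, though the conclusion (extreme levels $0$ and $|\mathcal{A}_+|$) is right.
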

\begin{proof}
Let $\ell$ denote an M-semialternating pseudoline for $\mathcal{L}' \subseteq \mathcal{L}$, where $|\mathcal{L}'|=2k$.
Let $\mathcal{C}$ denote the underlying cell-path for $\ell$ in $\mathcal{L}'$.
Let $\mathcal{S}= \langle s_0,s_1,\cdots,s_k \rangle$ denote the level-sequence of $\mathcal{C}$. Note that $s_0+s_k=2k$ and for every $i \in [k]$, $s_i-s_{i-1} \in \{0,2\}$.
Assume $s_0=\mathtt{x}$ (thus $s_k=2k-\mathtt{x}$), where $0 \le \mathtt{x} \le k$. $\mathcal{S}$ contains $\frac{2k-\mathtt{x}-\mathtt{x}}{2}+1=k-\mathtt{x}+1$ distinct numbers and $\mathtt{x}$ repetitions.
Let $\mathcal{I}_1= \{ i \mid  s_i-s_{i-1}=2\}$ and $\mathcal{I}_2=[k] \setminus \mathcal{I}_1$. Note that $|\mathcal{I}_1|=k-\mathtt{x}$ and $|\mathcal{I}_2|=\mathtt{x}$. Let $B(i)$ denote the set of lines that are below the $(i+1)$-th even cell in $\mathcal{C}$ (note that $|B(i)|=s_i$). Let $\mathcal{L}_1=\bigcup \limits_{i \in \mathcal{I}_1} B(i) \setminus B(i-1)$, and $\mathcal{L}_2=\bigcup \limits_{i \in \mathcal{I}_2} B(i) \triangle B(i-1)$, where $\triangle$ represents the symmetric difference. Recall that $\ell$ intersects each line in $\mathcal{L}'$ exactly once.
$\ell$ forms an M-semialternating pseudoline for $\mathcal{L}_1$ that starts and ends in cells of levels zero and $|\mathcal{L}_1|=2|\mathcal{I}_1|=2(k-\mathtt{x})$ (Note that the level-sequence of the underlying cell-path for $\ell$ in $\mathcal{L}_1$ is strictly increasing). Similarly, $\ell$ forms a spoke path of size $|\mathcal{L}_2|=2|\mathcal{I}_2|=2\mathtt{x}$ for $\mathcal{L}_2$ (The level-sequence of the underlying cell-path for $\ell$ in $\mathcal{L}_2$ is a constant sequence). Hence the dual point set for $\mathcal{L}_1$ has a parallel set of size $k-\mathtt{x}$ and the dual point set for $\mathcal{L}_2$ has a spoke set of size $\mathtt{x}$. This consequently implies that either the parallel set or the spoke set for $\mathcal{P}$ is of size at least $\frac{k}{2}$. 
\end{proof}

Let $\ell$ be an M-semialternating pseudoline for the color-separable line arrangement $\mathcal{L}$, where $|\mathcal{L}|=2k$. Let $\mathcal{C}=\langle C_0,C_1,\cdots,C_{2k} \rangle$ denote the underlying cell-path for $\ell$ in $\mathcal{L}$.
Let $o$ be a point in $C_0 \cap \ell$.
Let $Rot_\alpha(\cdot)$ be the function that rotates the input $\alpha$ degrees clockwise about $o$.
Clearly, for any $\alpha$, $Rot_\alpha(\ell)$ remains semialternating for $Rot_\alpha(\mathcal{L})$. However, the level-sequence of the underlying cell-path for $Rot_{\alpha}(\ell)$ in $Rot_{\alpha}(\mathcal{L})$ may become non-monotone. Moreover, the lines below $Rot_\alpha(C_0)$ are not (necessarily) of the same color.
It is easy to see (proof below) that given any M-semialternating pseudoline, there exists an angle $\alpha$ such that the level-sequence of the underlying cell-path for $Rot_\alpha(\ell)$ in $Rot_\alpha(\mathcal{L})$ is strictly increasing. However, there may not exist an angle $\alpha$ such that the level-sequence of the underlying cell-path for $Rot_\alpha(\ell)$ in $Rot_\alpha(\mathcal{L})$ is a constant sequence.

Let $\mathcal{S}=\langle s_0,s_1,\cdots, s_k \rangle$ denote the level-sequence of $\mathcal{C}$. $s_i-s_{i-1} \in \{0,2\}$. Since $\ell$ crosses each line of $\mathcal{L}$ once and the level-sequence is non-decreasing, we infer that $0 \le s_0 \le k$; thus, since $\mathcal{L}$ is color-separable, all the lines below $C_0$ are of the same color. (Recall that a vertical line to the left of all intersection points of $\mathcal{L}$ intersects all the red and all the blue lines consecutively.)
Let $\theta^-$ be such that the set of lines below $o$ in $Rot_{\theta^-}(\mathcal{L})$ is the same as the set below $o$ in $\mathcal{L}$ except one less.
($\theta^-$ is undefined if the level of $o$ in $\mathcal{L}$ is zero.)
Similarly, let $\theta^+$ be such that the set of lines below $o$ in $Rot_{\theta^+}(\mathcal{L})$ is the same as the set below $o$ in $\mathcal{L}$ except one more.
Let $\overline{\mathcal{S}}=\langle \overline{s}_0,\overline{s}_1,\cdots,\overline{s}_k \rangle$ and $\overset{+}{\mathcal{S}}=\langle \overset{+}{s}_0,\overset{+}{s}_1,\cdots,\overset{+}{s}_k \rangle$ be the level-sequences of the underlying cell-paths for $Rot_{\theta^-}(\ell)$ in $Rot_{\theta^-}(\mathcal{L})$ and $Rot_{\theta^+}(\ell)$ in $Rot_{\theta^+}(\mathcal{L})$, respectively.
There exists $d \in[k]$ such that for all $j<d$, $\overset{+}s_j=s_j+1$ and for all $j \ge d$, $\overset{+}s_j=s_j-1$. Similarly, there exists $u \in[k]$ such that for all $j<u$, $\overline{s}_j=s_j-1$ and for all $j \ge u$, $\overline{s}_j=s_j+1$.

$\overline{\mathcal{S}}$ is monotone. The difference between two consecutive elements in $\overline{\mathcal{S}}$ can either be zero or two (since $\overline{\mathcal{S}}$ represents the level-sequence of a cell-path). This implies that $u$ is such that $s_{u}=s_{u-1}$, and hence $\overline{\mathcal{S}}$ contains more distinct numbers compared to $\mathcal{S}$. The lines below $o$ in $\mathcal{L}$ are all of the same color, and the lines below $o$ in the rotated arrangement (with rotation angle $\theta^-$) are still of the same color. Therefore, we can continue rotating by $\theta^-$ degrees until the (rotated) pseudoline starts from the zero level and we get a strictly increasing level-sequence.

$\overset{+}{\mathcal{S}}$ is monotone only if $s_d \neq s_{d-1}$. Starting with the pair $\mathcal{L}$ and $\ell$ and rotating them  $\theta^+$ degrees, $Rot_{\theta^+}(\ell)$ remains M-semialternating with respect to $Rot_{\theta^+}(\mathcal{L})$. However, if we continue rotating this way, the level-sequence of the underlying cell-path may become non-monotone. Hence, by rotating an M-semialternating pseudoline (together with the arrangement),  it may not be possible to get a level-sequence that is constant.

\begin{lemma}
The minimum size of the largest spoke set among all configurations of color-separable $n$-point sets is the same as the minimum size of the largest parallel set in a color-separable $n$-point set. The statement still holds if we replace ``color-separable'' with ``$1$-avoiding''.
\label{lem:pset-ss}
\end{lemma}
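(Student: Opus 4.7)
The plan is to prove both inequalities via the dual rotation correspondence developed in the paragraphs immediately preceding the lemma, exploiting the operations $Rot_{\theta^-}$ and $Rot_{\theta^+}$ that shift the starting level of an M-semialternating pseudoline one step while preserving its size.

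For the direction $\min_P s(P) \le \min_P p(P)$, I would take any color-separable $P$ with dual $\mathcal{L}$ realizing the extremal spoke path of size $2s(P)$: an M-semialternating pseudoline whose level-sequence is constant at the median. Applying $Rot_{\theta^-}$ iteratively decrements the starting level of the underlying cell-path by one at each step while preserving the M-semialternating property and introducing one new distinct entry into the level-sequence, exactly as recorded in the preamble. After $s(P)$ iterations the starting level reaches $0$ and the level-sequence becomes strictly increasing $\langle 0,2,\dots,2s(P)\rangle$, which by the observation preceding Lemma~\ref{lem:ssORps} encodes a parallel set of size $s(P)$. I would then verify that the rotated arrangement is again the dual of a color-separable point set $\tilde P$ of the same cardinality: rotation preserves the partition of blue and red slopes into two disjoint arcs of the slope circle — the intrinsic content of color separability — and an affine normalization returns the arrangement to the canonical form in which blue lines have larger slope than red lines. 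Thus $p(\tilde P) \ge s(P)$, and choosing $P$ to minimize $s$ yields $\min_P p(P) \ge \min_P s(P)$.

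The reverse direction would proceed symmetrically: start from an M-semialternating pseudoline with strictly increasing level-sequence encoding a parallel set of size $p(P)$, and apply $Rot_{\theta^+}$ repeatedly to drive the starting level up to the median. The main obstacle, flagged explicitly in the preamble, is that $Rot_{\theta^+}$ can destroy monotonicity of the level-sequence whenever two consecutive entries coincide ($s_d = s_{d-1}$), so the rotated pseudoline need not remain M-semialternating. To overcome this I would interleave each rotation with a local rerouting of the pseudoline through a neighbouring cell, ensuring that the level-sequence is strictly increasing at the index where $Rot_{\theta^+}$ would otherwise collapse; such reroutings exist because two cells that share an edge sit inside a larger collection of M-semialternating continuations in a color-separable arrangement. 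An inductive argument on the gap between the starting level and the median then terminates at a pseudoline with constant median level-sequence, i.e., a spoke path of size $2p(P)$, giving $s(\tilde P) \ge p(P)$. The $1$-avoiding case follows from the same construction, since the $1$-avoiding property is a combinatorial strengthening of color separability that is preserved by the slope-arc-preserving rotations used above.

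The technical heart of the argument is the local rerouting required in the $\theta^+$ direction; I expect this to be the main source of casework and to require manipulations analogous to the wiring-diagram modifications discussed around Conjecture~\ref{schnider-claim}, where (as noted there) naive rules can inadvertently produce double crossings and must be corrected.
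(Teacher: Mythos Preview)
Your approach via incremental rotations $Rot_{\theta^\pm}$ has a genuine gap in the reverse direction, and the paper's actual proof avoids this obstacle entirely by a different, simpler idea.

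The paper does not iterate the small rotations $Rot_{\theta^\pm}$ at all. Instead it performs a \emph{single} $90^\circ$ rotation of the dual arrangement $\mathcal{L}$ about the origin, and takes $\mathcal{P}'$ to be the dual of the rotated arrangement. The point is that this rotation swaps the two extremal types of M-semialternating pseudolines symmetrically: a spoke path (constant level-sequence at the median) becomes, in the rotated arrangement, an M-semialternating path starting at level $0$ and ending at level $|L_s|$ (hence a parallel set), and a parallel-set path becomes a spoke path. Neither direction can fail, because the obstruction to monotonicity after rotation is precisely that the set $B_\ell$ of lines below the starting cell contains both colours; but for a spoke path $B_{\ell_s}$ is monochromatic (the start is at the median of a color-separable arrangement), and for a parallel-set path $B_{\ell_p}$ is empty. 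Thus the single $90^\circ$ rotation yields a correspondence $\mathcal{P}\mapsto\mathcal{P}'$ with $ss(\mathcal{P}')=ps(\mathcal{P})$ and $ps(\mathcal{P}')=ss(\mathcal{P})$, from which equality of the minima is immediate. The $1$-avoiding case is then handled by an explicit orientation computation (or by checking properties (I) and (II) directly in the dual) showing that the $90^\circ$ rotation of a $1$-avoiding arrangement is again $1$-avoiding.

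Your $Rot_{\theta^+}$ step is exactly where the paper, in the paragraphs you cite, says the incremental approach breaks: the level-sequence can become non-monotone, and the paper explicitly notes that ``it may not be possible to get a level-sequence that is constant'' this way. Your proposed ``local rerouting'' is not substantiated; you yourself flag that it resembles the wiring-diagram manipulations behind Conjecture~\ref{schnider-claim}, which the paper identifies as flawed for precisely the reason you would have to overcome (uncontrolled double crossings). There is also a logical issue even in your forward direction: exhibiting some $\tilde P$ with $p(\tilde P)\ge s(P)$ for the $P$ that minimises $s$ does not give $\min_Q p(Q)\ge\min_Q s(Q)$; you would need the map $P\mapsto\tilde P$ to be surjective onto color-separable configurations, which you do not establish. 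The paper's single $90^\circ$ rotation sidesteps both problems simultaneously, since it is (up to normalisation) an involution on the class of color-separable arrangements and exchanges the two invariants exactly.
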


\begin{proof}
Let $\mathcal{P}=\mathcal{P}_{\mathcal{B} | \mathcal{R}}$ be a color-separable point set. Let $ss(\mathcal{P})$ and $ps(\mathcal{P})$ denote the sizes of the maximum spoke set and parallel set for $\mathcal{P}$, respectively.
We show that for every color-separable point set $\mathcal{P}$ there exists a color-separable point set $\mathcal{P}'$ such that $ss(\mathcal{P'})=ps(\mathcal{P})$ and $ps(\mathcal{P}')=ss(\mathcal{P})$. We construct $\mathcal{P}'$ as follows:
\begin{itemize}
\item Assume by rotation and translation that $\mathcal{B}$ and $\mathcal{R}$ lie on the right and left sides of the $y$-axis, respectively.
\item Let $\mathcal{L}$ be the dual line arrangement for ${\mathcal{P}}$.
\item Let $\mathcal{L}'$ be the line arrangement that is obtained by rotating $\mathcal{L}$ $90$ degrees clockwise.
\item $\mathcal{P}'$ is the dual point configuration for $\mathcal{L}'$.
\end{itemize}
We know there exists $L_s \subseteq \mathcal{L}$, where $|L_s|=2 \cdot ss(\mathcal{P})$, such that $L_s$ admits an M-semialternating pseudoline $\ell_s$ that starts and ends in median cells of $L_s$. Similarly, there exists $L_p \subseteq \mathcal{L}$, where $|L_p|=2 \cdot ps(\mathcal{P})$, such that $L_p$ admits an M-semialternating pseudoline $\ell_p$ that starts and ends in cells of $L_p$ that are of levels zero and $|L_p|$.
Let $Rot(\cdot)$ denote the function that rotates the input $90$ degrees clockwise about the origin. Recall that $\mathcal{L}'=Rot(\mathcal{L})$.
Clearly, $Rot(\ell_s)$ is a semialternating pseudoline for $Rot(L_s)$. Similarly, $Rot(\ell_p)$ is semialternating for $Rot(L_p)$.
Recall that any M-semialternating pseudoline $\ell$ admitted by an arrangement $\mathcal{A}$ intersects every line of $\mathcal{A}$ once. Let $B_\ell$ denote the set of lines below the first cell in the underlying cell-path for $\ell$ in $\mathcal{A}$. If the level-sequence of the underlying cell-path for $Rot(\ell)$ in $Rot(\mathcal{A})$ becomes non-monotone, $B_\ell$ must contain both blue and red lines (note that $Rot(\ell)$ remains semialternating and cannot intersect any line in $Rot(\mathcal{A})$ more than once).
$B_{\ell_p}$ in $L_p$ is an empty set, and $B_{\ell_s}$ in $L_s$ consists of lines that are of the same color.
Thus, $Rot(\ell_s)$ and $Rot(\ell_p)$ are M-semialternating pseudolines for $Rot(L_s)$ and $Rot(L_p)$, respectively.
$Rot(\ell_s)$ starts and ends in cells of levels zero and $|L_s|$ in $Rot(L_s)$, and $Rot(\ell_p)$ starts and ends in median cells in $Rot(L_p)$. Therefore, $\mathcal{P}'$ contains a parallel set of size $ss(\mathcal{P})$, and a spoke set of size $ps(\mathcal{P})$.

In the following, we prove (in two ways) that if $\mathcal{P}$ has the additional property that $\mathcal{B}$ avoids $\mathcal{R}$ (i.e. $\mathcal{P}$ is $1$-avoiding), then $\mathcal{P}'$ is also $1$-avoiding.

Let ${\mathcal{P}}=\{p_1,p_2,\dots,p_n\}$, where $p_i=(a_i,b_i)$. We have
\begin{align*}
	\mathcal{L} &= \{p^\star_i=\{(x,y) : y=a_ix-b_i\} \mid i \in [n] \}, \\
	\mathcal{L}'&=\{ \{(x,y) : y=-\frac{x}{a_i}-\frac{b_i}{a_i}\} \mid i \in [n] \},  \text{~and} \\
	\mathcal{P}'&=\{ p'_i = (-\frac{1}{a_i},\frac{b_i}{a_i}) \mid i \in [n] \}.
\end{align*}
Note that since no points in $\mathcal{P}$ are on the $y$-axis, $\mathcal{L}'$ and $\mathcal{P}'$ are well-defined.
Let $o(p,q,r)$ denote the orientation of points $p$, $q$, and $r$. That is, $o(p,q,r)=+1$ if the circle through $p,q,r$ is clockwise; $o(p,q,r)=-1$ if the circle through $p,q,r$ is counterclockwise; and $o(p,q,r)=0$ if the three points are collinear.

Clearly, three points in ${\mathcal{P}}$ are collinear if and only if their corresponding points in $\mathcal{P}'$ are collinear.
Let $p_i$, $p_j$, and $p_k$ be three points in ${\mathcal{P}}$. Without loss of generality, we assume $p_i$ and $p_j$ are on the same side of the $y$-axis. Note that the ordering of the $x$-coordinates of $p_i$ and $p_j$ is the same as that of $p'_i$ and $p'_j$.
Suppose $p_k$ is above the line through $p_i$ and $p_j$. We show that $p'_k$ is above the line through $p'_i$ and $p'_j$ if and only if an even number of points in $\{p_i,p_j,p_k\}$ have negative $x$-coordinates. Similarly, if $p_k$ is below the line through $p_i$ and $p_j$, then $p'_k$ is below the line through $p'_i$ and $p'_j$ if and only if an even number of points in $\{p_i,p_j,p_k\}$ have negative $x$-coordinates. This is because

\begin{align*}
o(p_i,p_j,p_k) &= a_1b_2-a_2b_1+a_2b_3-a_3b_2+a_3b_1-a_1b_3 \text{, and}\\
o(p'_1,p'_j,p'_k)&=-\frac{b_2}{a_1a_2}+\frac{b_1}{a_1a_2}-\frac{b_3}{a_2a_3}+\frac{b_2}{a_2a_3}-\frac{b_1}{a_1a_3}+\frac{b_3}{a_1a_3},
\end{align*}
which implies $o(p_i,p_j,p_k)=a_1a_2a_3 \cdot o(p'_i,p'_j,p'_k)$. Therefore,
\begin{align*}
o(p_i,p_j,p_k) = 
\begin{cases}
o(p'_i,p'_j,p'_k)  &  \text{if } \text{neg}_x(p_i,p_j,p_k) \equiv 0 \pmod{2},\\
-o(p'_i,p'_j,p'_k)  &  \text{otherwise},
\end{cases}
\end{align*}
where $\text{neg}_x(p_i,p_j,p_k)$ is the number of points in $\{p_i,p_j,p_k\}$ that are on the left side of the $y$-axis. This immediately implies that if ${\mathcal{P}}={\mathcal{P}}_{{\mathcal{B}} \vdash {\mathcal{R}}}$ is $1$-avoiding and ${\mathcal{B}}$ and ${\mathcal{R}}$ are on different sides of the $y$-axis, then $\mathcal{P}'=\mathcal{B}' \cup \mathcal{R}'$ is a $1$-avoiding point set where $\mathcal{B}'$ avoids $\mathcal{R}'$, and $\mathcal{B}'$ and $\mathcal{R}'$ are on different sides of the $y$-axis.

Alternatively, we can prove $\mathcal{P}'$ is $1$-avoiding using the dual plane directly. Recall that ${\mathcal{P}}={\mathcal{P}}_{{\mathcal{B}} \vdash {\mathcal{R}}}$, and $\mathcal{L}={\mathcal{P}^\star}={\mathcal{B}^\star} \cup {\mathcal{R}^\star}$ is such that ($\uppercase\expandafter{\romannumeral 1}$) each line in ${\mathcal{R}}^\star$ intersects the lines of ${\mathcal{B}}^\star$ in the same order, and ($\uppercase\expandafter{\romannumeral 2}$) the lines above (or below) each unbounded median cell of $\mathcal{L}$ are all of the same color.
Note that both properties ($\uppercase\expandafter{\romannumeral 1}$) and ($\uppercase\expandafter{\romannumeral 2}$) hold for $Rot(\mathcal{L}')$, and hence the dual of $\mathcal{L}'$ is a $1$-avoiding point set. 
\end{proof}

A \emph{parallel family} is a set of segments such the supporting lines of every pair of segments intersect outside both segments.
The transformation we use for proving Lemma~\ref{lem:pset-ss} turns out to be the same, upto a negation in the $x$-coordinate, as the transformation used by \citet{Pach-crf} in proving that the problems of finding the maximum crossing family and maximum parallel family are equivalent.


\begin{conjecture}
Let $\mathcal{P}=\mathcal{P}_{\mathcal{B} \vdash \mathcal{R}}$ be a $1$-avoiding $2n$-point set. The sizes of the largest spoke set and largest parallel set for $\mathcal{P}$ are of the same order of magnitude.
\label{q:ssTOps}
\end{conjecture}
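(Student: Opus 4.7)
The plan is to combine the rotation duality from Lemma~\ref{lem:pset-ss} with the rotation-of-pseudoline machinery developed in the paragraph just before that lemma. Lemma~\ref{lem:pset-ss} effectively constructs an involution $\mathcal{P} \leftrightarrow \mathcal{P}'$ on the class of $1$-avoiding color-separable point sets (the $90^{\circ}$ rotation of the dual arrangement) that swaps spoke-set size with parallel-set size: $ss(\mathcal{P}) = ps(\mathcal{P}')$ and $ps(\mathcal{P}) = ss(\mathcal{P}')$. Consequently, if one proves the single-sided inequality $ps(\mathcal{P}) \ge c\cdot ss(\mathcal{P})$ for all $1$-avoiding $\mathcal{P}$ with some absolute constant $c>0$, then applying the same bound to $\mathcal{P}'$ gives $ss(\mathcal{P}) = ps(\mathcal{P}') \ge c\cdot ss(\mathcal{P}') = c\cdot ps(\mathcal{P})$, and Conjecture~\ref{q:ssTOps} follows. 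So the task reduces to producing a parallel set of size $\Omega(k)$ from any spoke set of size $k$.

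Next, given a spoke set of size $k$ in $\mathcal{P}$, the dual $\mathcal{P}^\star$ contains a $2k$-subarrangement $\mathcal{L}'$ admitting an M-semialternating pseudoline $\ell$ whose underlying cell-path has the constant level-sequence $\langle k,k,\dots,k\rangle$. I would extract a parallel set by constructing, from $\ell$ and a large sub-collection of $\mathcal{L}'$, an M-semialternating pseudoline whose level-sequence is strictly increasing from $0$ to $2k'$ with $k'=\Omega(k)$; by the observation preceding Lemma~\ref{lem:ssORps}, this is exactly a parallel set of size $k'$. The mechanism is the rotation discussion immediately preceding Lemma~\ref{lem:pset-ss}: keeping the arrangement fixed but rotating $\ell$ clockwise by $\theta^-$ about a point $o$ in its initial cell decreases the starting level of the underlying cell-path by one while keeping its level-sequence monotone. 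Iterating $k$ such rotations drives the starting level down to $0$, producing the desired parallel-set witness.

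The main obstacle is that rotating $\ell$ without also rotating $\mathcal{L}'$ can destroy \ref{semialt} (strict color alternation) and can violate line-monotonicity as intersection multiplicities degenerate. I would handle this by tracking, at each rotation event, the set of lines that must be pruned from $\mathcal{L}'$ in order for the restricted cell-path of $Rot_\theta(\ell)$ to remain M-semialternating. The crucial feature of $1$-avoiding arrangements to invoke is that red lines meet the blue lines in a common order, and every unbounded median cell has lines of a single color above (or below) it, as exploited in the proof of Lemma~\ref{lem:pset-ss}. The hope is that this color-segregation forces only $O(1)$ lines to be discarded per rotation event, leaving a subarrangement of size $\Omega(k)$ at the end.

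The hard part is precisely this quantitative pruning analysis: a priori, a single rotation event could desynchronize the colors along $\ell$ in a way that cascades and requires removing $\Theta(k)$ lines. A natural fallback plan is to translate the argument into the side-compatible-subset framework of Section~\ref{sec:sc}, where the rotation corresponds to local modifications of a bar-stack/wire marbling; one can then attempt to show that a side-compatible marbling of size $k$ at the median wire extends to side-compatible marblings of comparable size at extreme wire heights, which dualizes back to both a spoke set and a parallel set of the same order in~$\mathcal{P}$.
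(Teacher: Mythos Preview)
The statement you are attempting is labeled a \emph{conjecture} in the paper and is left open there; the authors provide no proof. They only remark that Conjecture~\ref{q:ssTOps} is equivalent to Conjecture~\ref{conj:max-over-min}, and that together with Conjecture~\ref{conj:large-m} it would yield linear spoke sets for $1$-avoiding point sets. So there is no paper proof to compare against, and your proposal must be judged on its own.

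Your reduction via Lemma~\ref{lem:pset-ss} is correct: the $90^\circ$ rotation of the dual arrangement is an involution on $1$-avoiding point sets swapping $ss$ and $ps$, so a universal bound $ps(\mathcal{P})\ge c\cdot ss(\mathcal{P})$ would indeed give the conjecture. The gap is in the next step. The $\theta^{-}$ mechanism in the paragraph before Lemma~\ref{lem:pset-ss} rotates $\ell$ \emph{and} $\mathcal{L}$ together; this is a rigid motion that leaves the cell-path unchanged and only relabels levels according to the new vertical direction. The strictly increasing level-sequence it produces therefore lives in $Rot_\alpha(\mathcal{L})$, whose dual is a \emph{different} point set, and a parallel set there says nothing about $ps(\mathcal{P})$. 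In fact, taking $\alpha=90^\circ$ recovers exactly the statement $ps(\mathcal{P}')=ss(\mathcal{P})$ of Lemma~\ref{lem:pset-ss}, so this route is circular. Your variant of rotating $\ell$ while keeping $\mathcal{L}$ fixed is a genuinely different operation that the paper never analyzes; it changes the actual cell-path, and there is no reason \ref{semialt} or line-monotonicity survives. You recognize this and propose a pruning argument, but the ``$O(1)$ lines per event'' claim has no support: the $1$-avoiding hypothesis constrains how red lines meet blue lines, not how a deformed pseudoline meets either family, and nothing rules out cascades costing $\Theta(k)$ lines. As you yourself note, this quantitative control is the hard part---and it is essentially the full content of the conjecture, which remains open.
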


\begin{lemma}\label{lem:semiTOm}
Let $\mathcal{L}$ be a color-separable line arrangement. The size of the largest M-semialternating path in $\mathcal{L}$ is at least half the size of the largest semialternating path in $\mathcal{L}$.
\end{lemma}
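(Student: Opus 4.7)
The plan is to take any subarrangement $\mathcal{A}\subseteq\mathcal{L}$ of size $2k$ admitting a semialternating pseudoline $\ell$, and retain roughly half the pairs so that the \emph{same} pseudoline (or its reverse) becomes M-semialternating on the remaining subarrangement.

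First, I would let $\ell$ intersect the lines of $\mathcal{A}$ in the order $l_1,l_2,\dots,l_{2k}$, where each pair $(l_{2j-1},l_{2j})$ consists of one blue and one red line by \ref{semialt}. Let $\langle C_0,C_1,\dots,C_{2k}\rangle$ be the underlying cell-path and $\langle s_0,s_1,\dots,s_k\rangle$ its level-sequence, with $s_j$ the level of $C_{2j}$ in $\mathcal{A}$. Setting $\delta_j:=s_j-s_{j-1}$, each $\delta_j\in\{-2,0,+2\}$, because the two lines crossed between consecutive even cells each change the level by $\pm 1$.

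Next, I would establish the key invariance: if we retain only the pairs indexed by some $I=\{j_1<\cdots<j_m\}\subseteq[k]$ and let $\mathcal{A}'$ be the resulting subarrangement, then $\ell$ is still semialternating on $\mathcal{A}'$ (whole different-coloured pairs are removed, so \ref{semialt} is preserved), and the level-sequence $\langle s'_0,\dots,s'_m\rangle$ of $\ell$ in $\mathcal{A}'$ satisfies $s'_i-s'_{i-1}=\delta_{j_i}$. This holds because whether $\ell$ crosses a given line from above-to-below or below-to-above is a local property of that line and $\ell$, independent of which other lines are present; hence the contribution of a pair to the level change is invariant under deletion of other pairs.

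Then I would apply a brief averaging-and-reversal argument. Let $I^{\ge}=\{j:\delta_j\ge 0\}$ and $I^{<}=\{j:\delta_j<0\}$, so $|I^{\ge}|+|I^{<}|=k$ and $\max(|I^{\ge}|,|I^{<}|)\ge k/2$. If $|I^{\ge}|\ge k/2$, retaining the pairs indexed by $I^{\ge}$ yields $\mathcal{A}'\subseteq\mathcal{A}$ of size at least $k$ on which $\ell$ is M-semialternating: (M1) holds by pair-wise retention, and (M2) holds because every surviving level-difference $\delta_{j_i}$ is non-negative. Otherwise $|I^{<}|>k/2$; in this case I would reverse the traversal direction of $\ell$, so each $\delta_j$ gets negated and the strictly negative steps become strictly positive, after which the same retention argument applies to the reversed pseudoline. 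Either way, $\mathcal{L}$ contains an M-semialternating path of size at least $k=\tfrac{1}{2}|\mathcal{A}|$, which is the desired bound.

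The only step that requires any care is the invariance of the pair-wise level differences $\delta_j$ under deletion of other pairs; everything else is elementary counting. I expect that this invariance, plus the observation that pair-retention preserves \ref{semialt}, will be the only facts that need explicit verification in the formal write-up.
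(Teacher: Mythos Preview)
Your argument is correct, and it takes a somewhat different route from the paper's. Both proofs start the same way: fix a semialternating pseudoline $\ell$ for a $2k$-line subarrangement, form the level-sequence, and look at the step vector $\delta_j\in\{-2,0,+2\}$. From there the paper argues geometrically: any pair with $\delta_j=-2$ consists of two lines that $\ell$ crosses from above, hence both were below the initial cell $C_0$; since $|B(0)|=s_0$ and (after possibly reversing direction so that $s_0\le k$) this gives at most $k/2$ down-steps, removing those pairs leaves an M-semialternating path of size $\ge k$. You instead use a pure pigeonhole: one of $I^{\ge}$, $I^{<}$ has size $\ge k/2$, retain that half, and reverse $\ell$ if it was $I^{<}$.

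Your approach is more elementary in that it avoids the observation $B(i-1)\setminus B(i)\subseteq B(0)$ and never needs color-separability or the bound $s_0\le k$; it would work verbatim for any two-coloured arrangement. The paper's approach, on the other hand, always keeps the $\delta_j=0$ pairs together with the $\delta_j=+2$ pairs, so in instances with many flat steps it retains strictly more lines than your worst case, though the guaranteed bound is the same $k$. Your identification of the one nontrivial point is accurate: the invariance of each $\delta_j$ under deletion of other pairs is exactly what makes the argument go through, and your justification (the above/below status of $\ell$ at a crossing with $l$ depends only on $\ell$ and $l$) is correct. You should also note explicitly that reversal preserves \ref{semialt}, since the reversed pairing $(l_{2k},l_{2k-1}),(l_{2k-2},l_{2k-3}),\dots$ coincides with the original pairs.
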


\begin{proof}
Let $\ell$ be a semialternating pseudoline for subarrangement  $\mathcal{L}' \subseteq \mathcal{L}$.
Let $\mathcal{C}=\langle C_0, C_1,\cdots,C_{2k}\rangle$ be the underlying cell-path for $\ell$ in $\mathcal{L}'$.
Let $\mathcal{S}=\langle s_0,s_1,\cdots, s_k \rangle$ denote the level-sequence of $\mathcal{C}$. Clearly, for any $i \in [k]$, $s_i-s_{i-1} \in \{0,2,-2\}$. If for all $i \in [k]$, $s_i-s_{i-1} \in \{0,2\}$ then the level-sequence is monotone. Let $\mathcal{I}=\{ i \mid  s_i-s_{i-1}=-2\}$. Let $B(i)$ denote the set of lines that are below $C_{2i}$. Note that $\ell$ intersects each line of $\mathcal{L}'$ once. This implies that for $i \in \mathcal{I}$, $B(i-1) \setminus B(i) \subseteq B(0)$. Therefore, $|\mathcal{I}| \le \dfrac{s_0}{2}$. 
Let $\mathcal{L}''$ be the arrangement $\mathcal{L}'$ without $ \bigcup \limits_{i \in \mathcal{I}} B(i-1) \setminus B(i)$. The underlying cell-path for $\ell$ in $\mathcal{L}''$ is monotone. Recall that $s_0 \le k$. Hence $|\mathcal{L}''| \ge \dfrac{|\mathcal{L}'|}{2}$. This concludes the proof.
\end{proof}

\begin{conjecture}\label{conj:large-m}
The size of the largest semialternating path for any $1$-avoiding line arrangement $\mathcal{L}$ is linear (in the number of lines in $\mathcal{L}$). 
\end{conjecture}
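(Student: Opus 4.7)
My plan is to construct, for every $1$-avoiding line arrangement $\mathcal{L} = \mathcal{L}_\mathcal{B} \cup \mathcal{L}_\mathcal{R}$ with $|\mathcal{L}_\mathcal{B}| = |\mathcal{L}_\mathcal{R}| = n$, a subarrangement $\mathcal{A} \subseteq \mathcal{L}$ of size $\Omega(n)$ together with a pseudoline $\ell$ that semialternates through $\mathcal{A}$. I would work with the dual point set $\mathcal{P} = \mathcal{P}_{\mathcal{B} \vdash \mathcal{R}}$, using the bar/wire representation from Section~\ref{sec:sc} to reformulate the problem as choosing a sequence of red-blue crossings that can be traversed in order by a single pseudoline. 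A crucial structural feature is that in a $1$-avoiding arrangement every red line crosses every blue line, and every red line sees the blues in the same order $b^\star_1,\dots,b^\star_n$, so the set of red-blue intersections has a rigid $n \times n$ grid structure that I expect to be the main tool.

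First I would set up the combinatorial reformulation: a semialternating pseudoline for a size-$2k$ subarrangement corresponds to a choice of $k$ pairs $(r^\star_{i_t}, b^\star_{j_t})$ together with a linear extension realizable as a pseudoline. Parametrizing the selection by a permutation $\sigma \in S_n$ and taking the pairs $(r^\star_i, b^\star_{\sigma(i)})$, I would build $\ell$ by a sweep entering from an unbounded cell below all selected lines and crossing $r^\star_{i_t}$ immediately followed by $b^\star_{j_t}$ at (or infinitesimally close to) their intersection, in the order of the height of these crossings, which enforces condition~\ref{semialt} by construction.

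The next step is to extract a linearly long monotone substructure. A straightforward sweep realizes only those subsequences of pairs whose crossings appear monotonically along $\ell$; an Erd\H{o}s--Szekeres argument on the $y$-coordinates of the chosen crossings yields only a subsequence of length $\Omega(\sqrt n)$. To boost this to $\Omega(n)$ I would exploit the rigidity of the $1$-avoiding ordering: varying $\sigma$ changes the vertical order of the crossings in a predictable way, and a carefully chosen $\sigma$ (for instance one that follows a balanced staircase in the red-blue intersection grid) should produce a linearly long compatible sequence. A complementary recursive approach would split $\mathcal{L}$ along the median blue line, recurse on the two halves, and paste the resulting pseudolines via a constant-size connector; the order-preserving property of the $1$-avoiding structure guarantees the split is balanced. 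Along the way I would invoke Lemma~\ref{lem:semiTOm} in reverse only to control the constant, not the order of magnitude.

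The hardest part will be verifying global pseudoline realizability while maintaining linear size. Each time $\ell$ traverses a pair $(r^\star_{i_t}, b^\star_{j_t})$ it locally rearranges the permutation of remaining lines above $\ell$, and a poorly chosen $\sigma$ can force $\ell$ to double-cross a previously handled line. I expect the principal technical obstacle to be reconciling the local alternation condition~\ref{semialt} with the global single-crossing condition of a pseudoline, which is similar in flavor to (though, we hope, easier than) Conjecture~\ref{q:ssTOps}. If the direct construction fails, a fallback would be to prove the conjecture first for restricted classes, such as arrangements in convex position or with a prescribed slope gap between the two color classes, and then amortize via small perturbations to reach the general $1$-avoiding setting.
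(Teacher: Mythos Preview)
The statement you are attempting to prove is labeled a \emph{conjecture} in the paper, and the paper does not prove it. It is stated as an open problem; immediately after it the authors only remark that Conjectures~\ref{conj:large-m} and~\ref{q:ssTOps}, if true, would combine with Lemmas~\ref{lem:ssORps} and~\ref{lem:semiTOm} to yield linear-sized spoke sets for $1$-avoiding point sets. The only nearby result the paper actually proves is the \emph{upper} bound: a $1$-avoiding arrangement on $2n$ lines whose largest semialternating path has size at most $\tfrac{10}{11}\cdot 2n$. So there is no ``paper's own proof'' to compare against.

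Your proposal is not a proof either; it is a research outline, and you correctly flag the point at which it breaks down. The Erd\H{o}s--Szekeres step gives only $\Omega(\sqrt{n})$, and your two suggested upgrades (a ``balanced staircase'' permutation $\sigma$, or a recursive median split with constant-size connectors) are both asserted rather than argued. In particular, the recursive scheme needs the two halves' pseudolines to be concatenated into a single pseudoline that never revisits a line from the other half; the $1$-avoiding structure controls the red--blue crossings but not the red--red crossings, and those are exactly what can force a double-crossing when you paste. Your own final paragraph names this obstacle (``reconciling the local alternation condition~\ref{semialt} with the global single-crossing condition of a pseudoline'') and does not resolve it. Until that step is filled in, what you have is a plausible plan of attack on an open conjecture, not a proof.
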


Conjectures \ref{conj:large-m} and \ref{q:ssTOps} (if answered affirmatively), together with Lemmas \ref{lem:ssORps} and \ref{lem:semiTOm}, imply that any $1$-avoiding point set has a linear-sized spoke set.

\begin{conjecture}\label{conj:max-over-min}
Let $\mathcal{L}$ be a $1$-avoiding line arrangement.
Let $\ell$ be a directed pseudoline that intersects each line in $\mathcal{L}$ once.
For $i=0,1,\dots,n$, let $\mathcal{A}_i$ be the largest sub-arrangement that admits an M-semialternating $\ell$ that starts in level $i$ of sub-arrangement $\mathcal{A}_i$. Let $\mathcal{A}=\{|\mathcal{A}_i| \mid i \in [0..n],\mathcal{A}_i \neq \emptyset\}$.
The ratio of the maximum element in $\mathcal{A}$ to the minimum element in $\mathcal{A}$ is constant.
\end{conjecture}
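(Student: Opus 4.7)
The plan is to combine a downward sliding argument on the level-sequence with the spoke/parallel duality of Lemma~\ref{lem:pset-ss}. Let $2k^\star = \max_i |\mathcal{A}_i|$, achieved at $i = i^\star$, with $0 \le i^\star \le k^\star$. As explained in the paragraphs preceding Lemma~\ref{lem:pset-ss}, the level-sequence of $\ell$ in $\mathcal{A}_{i^\star}$ contains exactly $i^\star$ flat steps and $k^\star - i^\star$ strictly-ascending steps; each flat step pairs an L-type line with a U-type line, while each ascending step pairs two U-type lines.

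The first step is a \emph{downward slide}: deleting the two lines of a flat pair from $\mathcal{A}_{i^\star}$ leaves a sub-arrangement on which $\ell$ is still M-semialternating, with starting level decreased by exactly $1$ and size decreased by $2$ (the remaining pairs keep their original colour and type, so both the alternating property and the monotonicity are preserved). Iterating the slide $i^\star - j$ times for any $0 \le j \le i^\star$ yields $|\mathcal{A}_j| \ge 2k^\star - 2(i^\star - j)$, and combining this with the pure parallel sub-arrangement of size $2(k^\star - i^\star)$ extracted by Lemma~\ref{lem:ssORps} covers the whole range $[0, i^\star]$ of starting levels with a bound that is $\Omega(k^\star)$.

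For starting levels $j > i^\star$ I would invoke the $90$-degree rotation behind Lemma~\ref{lem:pset-ss}. This rotation preserves the $1$-avoiding property, carries $\ell$ to a directed pseudoline $\mathrm{Rot}(\ell)$ that still crosses each line of $\mathrm{Rot}(\mathcal{L})$ exactly once, and interchanges spoke-type M-paths (near median starting level) with parallel-type M-paths (near starting level $0$). If $\mathcal{A}'_j$ denotes the analogous quantity for the pair $(\mathrm{Rot}(\mathcal{L}), \mathrm{Rot}(\ell))$ and $i_\star'$ is the starting level achieving the maximum in $\mathrm{Rot}(\mathcal{L})$, the plan is to show that $|\mathcal{A}_j|$ for $j > i^\star$ is controlled by $|\mathcal{A}'_{j'}|$ for some $j' \le i_\star'$; the downward slide applied in $\mathrm{Rot}(\mathcal{L})$ then completes the argument.

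The main obstacle will be making this last reduction precise. Lemma~\ref{lem:pset-ss} is phrased as a minimax equivalence over families of point sets, not as a pointwise, size-preserving bijection between M-paths of $\mathcal{L}$ and those of $\mathrm{Rot}(\mathcal{L})$. Converting it into such a bijection, together with an explicit formula relating the two starting levels, requires tracking how cell levels transform under the rotation. I expect that the cleanest route is to recast the entire argument in the bar-stack and wire language of Section~\ref{sec:sc}, where starting levels correspond directly to wire heights and the rotation becomes a combinatorial operation whose effect on wire heights is straightforward to describe; the $1$-avoiding hypothesis should then translate into a combinatorial constraint that rules out the imbalanced configurations which would otherwise drive the ratio $|\mathcal{A}_{i^\star}|/|\mathcal{A}_j|$ to infinity.
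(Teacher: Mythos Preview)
The statement you are addressing is a \emph{conjecture}, not a theorem: the paper does not prove it. Immediately after stating it, the authors remark only that Conjectures~\ref{q:ssTOps} and~\ref{conj:max-over-min} are equivalent, and Conjecture~\ref{q:ssTOps} is itself left open. So there is no ``paper's own proof'' to compare against, and any purported proof would settle an open problem in the paper.

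Your downward slide is fine and is essentially the deletion analogue of the paper's $\theta^-$ rotation discussed just before Lemma~\ref{lem:pset-ss}: both show that from an M-semialternating path starting at level $i^\star$ one can always pass to one starting at any lower level, losing at most two lines per level. This handles $j \le i^\star$.

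The genuine gap is exactly where you place it, and it is not merely a matter of ``making the reduction precise''. The $90^\circ$ rotation of Lemma~\ref{lem:pset-ss} sends the pair $(\mathcal{L},\ell)$ to $(\mathrm{Rot}(\mathcal{L}),\mathrm{Rot}(\ell))$ and swaps spoke-type paths with parallel-type paths, but it does so between \emph{two different arrangements}. Your downward slide in $\mathrm{Rot}(\mathcal{L})$ produces lower bounds of the form $|\mathcal{A}'_{j'}| \ge |\mathcal{A}'_{i'_\star}| - 2(i'_\star - j')$, i.e.\ bounds in terms of the maximum $|\mathcal{A}'_{i'_\star}|$ taken in $\mathrm{Rot}(\mathcal{L})$, not in terms of $2k^\star = |\mathcal{A}_{i^\star}|$ taken in $\mathcal{L}$. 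You have no relation between these two maxima. Moreover, even translating a bound on $|\mathcal{A}'_{j'}|$ back to a bound on some $|\mathcal{A}_j|$ requires that the rotation preserve M-semialternating for intermediate starting levels, and the paper explicitly observes (in the $\theta^+$ discussion and in the proof of Lemma~\ref{lem:pset-ss}) that this fails in general: monotonicity is preserved under rotation only when the lines below the starting cell are monochromatic, which holds at the two extreme levels but not in between. Closing this loop is precisely the content of the conjecture, so the proposed route is circular rather than incomplete.
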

Conjectures~\ref{q:ssTOps} and \ref{conj:max-over-min} are equivalent.

\begin{lemma}
Let $\mathcal{L}$ be a $1$-avoiding line arrangement consisting of $n$ blue lines and $n$ red lines. There exists $\mathcal{L}$ whose largest semialternating path is of size $\frac{10}{11}\cdot 2n$.
\end{lemma}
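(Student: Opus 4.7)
The plan is to prove the upper bound by an explicit construction. I would build a $1$-avoiding arrangement $\mathcal L$ of $n$ blue and $n$ red lines in which no pseudoline can realise a semialternating subarrangement larger than $\tfrac{20n}{11}$. The target constant $\tfrac{10}{11}$ points strongly to a periodic construction with period $11$, so I would use a gadget-plus-concatenation scheme with $n/11$ copies of a $22$-line ($11$ blue, $11$ red) gadget $G$.

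For the gadget, a natural choice places the $11$ blue lines in a tightly concentrated near-concurrent pencil centred at a point $p_B$ and the $11$ red lines similarly near a point $p_R$, with $p_B$ and $p_R$ well-separated horizontally. Choosing slopes so that all blues exceed all reds and each red line sees the blues in the same order makes $G$ itself $1$-avoiding. The key feature is rigidity: because the pencils are tight, any pseudoline crossing $G$ must cross the $11$ blue lines within a short arc and likewise the $11$ red lines, so its crossing sequence on $G$ is essentially of the form $B^{11}R^{11}$ or $R^{11}B^{11}$, up to the bounded number of ``weavings'' permitted by the two-pencil geometry.

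For the concatenation, I would take $n/11$ affinely-scaled disjoint copies of $G$ arranged along a common axis, ensuring (by a standard placement argument) that the global arrangement is $1$-avoiding and that every pseudoline's crossing through $\mathcal L$ is a concatenation of per-gadget crossings. A combinatorial analysis of semialternating subsequences of the resulting crossing sequence then yields the bound. The key observation is that a semialternating subsequence decomposes into non-overlapping adjacent different-colour pairs $(i_{2k-1}, i_{2k})$ with $s_{i_{2k-1}}\neq s_{i_{2k}}$ and $i_{2k}<i_{2k+1}$; the same-colour runs of length $11$ inside each gadget limit how many such pairs can occur per gadget, and summing per-gadget contributions (including pairs that straddle consecutive gadgets at colour transitions) produces the $\tfrac{20n}{11}$ total.

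The main obstacle will be formalising the pseudoline rigidity of the gadget. Pseudolines are very flexible and can in principle weave through the pencil regions multiple times, yielding non-rigid crossing sequences that might admit longer semialternating subsequences than the naive per-gadget analysis suggests. Ruling this out rigorously will require a careful geometric argument, most likely using the bar-stack formalism and the table representation $T(B^\star,R^\star)$ developed in Section~\ref{sec:sc}, to show that the number of valid different-colour adjacent pairs is uniformly controlled across all admissible pseudoline orderings of $G$. The extremal numerology of $11$ blues and $11$ reds giving exactly the $\tfrac{10}{11}$ ratio should emerge from optimising this allowable-sequence analysis at the gadget scale.
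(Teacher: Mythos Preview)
Your proposal has a fundamental flaw in the gadget design, and the global scheme differs from the paper's in a way that matters.

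First, the gadget. You propose placing the $11$ blue lines in a tight near-concurrent pencil at $p_B$ and the $11$ red lines in another at $p_R$. This is essentially the \emph{easiest} $1$-avoiding configuration, not the hardest: if all red intersections cluster near $p_R$ and the blues pass on one side of $p_R$, then every blue line sees the reds in the same order (and vice versa), so the arrangement is mutually avoiding and in fact admits a full crossing family of size $11$ --- hence certainly a full-size semialternating path. Your claimed rigidity (crossing sequence forced to be $B^{11}R^{11}$) is false for pseudolines: a pseudoline can weave between the two pencils and realise an alternating crossing order. You flag this as ``the main obstacle'', but it is not an obstacle to overcome; it is a proof that this particular gadget cannot work.

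Second, the global scheme. You concatenate $n/11$ disjoint affine copies of the gadget. The paper does something different: it builds \emph{one} carefully asymmetric $22$-line $1$-avoiding arrangement (the blue lines almost parallel, the red lines placed with specific slope and incidence constraints; see Figure~\ref{fig:noFullPath}) and proves by a two-case analysis on where a putative semialternating pseudoline meets the middle blue line $b_6$ that no semialternating pseudoline on all $22$ lines exists. It then \emph{blows up} each of the $22$ lines into $n/11$ infinitesimally close parallel copies of the same colour. The blow-up preserves the obstruction: any semialternating path in the blown-up arrangement projects to one in the base, which must avoid at least two of the original $22$ lines, giving the $\tfrac{10}{11}\cdot 2n$ bound.

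So the missing idea is that the $22$-line gadget must be engineered (and then verified by an explicit case analysis) to admit no full semialternating pseudoline; two pencils will not do this, and your subsequence-counting on $B^{11}R^{11}$ never gets off the ground because that crossing sequence is not forced.
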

\begin{proof}[Proof sketch]
First, we construct a $1$-avoiding line arrangement consisting of $11$ blue lines and $11$ red lines such that it admits no semialternating pseudoline. See Figure~\ref{fig:noFullPath}. Each blue line has a greater slope than that of a red line. The blue lines are almost parallel. Let $b_i$ denote the $i$-th leftmost blue line. Let $r_i$ denote the red line with the $i$-th largest slope. For the sake of contradiction, suppose that there exists a semialternating pseudoline $l$. Note that the intersection of $l$ and $b_6$ may only appear on either the downward ray originating at $V$ or the upward ray originating at $U$ (otherwise, $l$ is not semialternating). We consider the following cases:
\begin{enumerate}[label={\sbf{Case \arabic*:}},leftmargin=2.7\parindent]
\item $l$ intersects the downward ray originating at $V$.
In order for $l$ to intersect each blue line in $\{b_1,\dots,b_5\}$, it needs to intersect $r_4$ prior to $b_6$. If the intersection of $l$ and $b_6$ is below $r_4$, then in order for $l$ to be semialternating, it needs to cross $r_4$ again, which is not possible. Thus, the intersection of $l$ and $b_6$ is above $r_4$ (and $l$ starts below $r_4$). Let $l_1,\dots, l_{22}$ denote the order in which $l$ intersects the lines of the arrangement. For an odd $i$, we say that $l_i$ and $l_{i+1}$ are paired. Recall that $l$ needs to intersect each blue line in $\{b_1,\dots,b_5\}$. However, there are at most four red lines that may be paired with $b_1,\dots b_5$. Hence, either $l$ cannot be semialternating or it needs to double cross some red lines.
\item $l$ intersects the upward ray originating at $U$.
In order for $l$ to intersect each blue line in $\{b_1,\dots,b_5\}$, it needs to intersect $r_{11}$ prior to $b_6$. This implies that, in order for $l$ to intersect every line, it needs to intersect red lines $\{r_5,\dots,r_{10}\}$ prior to $r_{11}$. Therefore, $l$ needs to intersect at least seven red lines prior to $b_6$, which is not possible.
\end{enumerate}
\begin{figure}[!ht]
\begin{center}
\resizebox{15cm}{15cm}{
\includegraphics{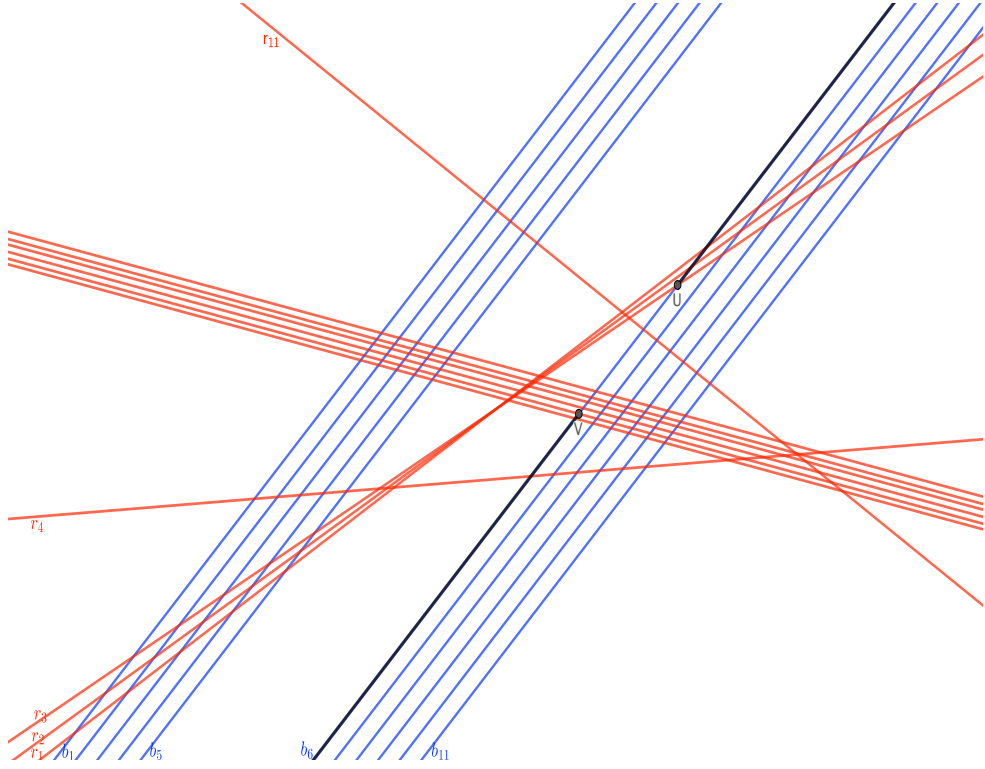}
}
\end{center}
\caption{A $1$-avoiding line arrangement that admits no semialternating pseudoline.}
\label{fig:noFullPath}
\end{figure}
By replacing each line of the arrangement depicted in Figure~\ref{fig:noFullPath} with $\frac{n}{11}$ lines of the same color such that they are all almost parallel to the initial line and their distances to each other are all infinitesimally small, we obtain the lemma. 
\end{proof}

\begin{lemma}
\label{lem:fps-c}
There exist $1$-avoiding point sets whose largest focal parallel set is of constant size.
\end{lemma}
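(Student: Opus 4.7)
The plan is to pass to the dual plane and exploit the correspondence stated earlier: a focal parallel set of size $k$ in a color-separable point set $\mathcal{P}$ corresponds to a $2k$-subarrangement of the dual line arrangement $\mathcal{P}^{\star}$ that admits an M-semialternating \emph{straight} line from level $0$ to level $2k$. So it suffices to construct a $1$-avoiding line arrangement in which every such subarrangement has $k$ bounded by a constant independent of $n$; dualizing then yields the desired point set. The advantage of working with straight lines (rather than pseudolines) is that the crossing order along $\ell$ is governed algebraically, and I plan to use this rigidity to force the bound.

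I would take the following construction. Place $n$ blue lines with slopes concentrated in a tiny interval $[s_B,s_B+\delta]$ and intercepts in a tight cluster of radius $\epsilon$ around some value $c_B$; place $n$ red lines similarly with slopes in $[s_R,s_R+\delta]$ for $s_R<s_B$ and intercepts around $c_R$. The $1$-avoiding slope condition holds by construction, and a short direct check verifies that each red line sees the blue lines in a common order. The key quantities to track are the crossings $x_m(\ell)=(b_m-b_\ell)/(s_\ell-s_m)$ along any candidate straight line $\ell$.

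For $\ell$ to be M-semialternating from level $0$ to level $2k$ on a chosen $2k$-subarrangement, the monotone-level requirement forces the slope of $\ell$ to lie outside the subarrangement's slope range, i.e.\ either above all chosen blue slopes or below all chosen red slopes. In either case, the blue crossings with $\ell$ cluster tightly near a common value $x_B^\star(\ell)=(c_B-b_\ell)/(s_\ell-s_B)$ (with spread $O(\epsilon)$ in suitable coordinates), and similarly the red crossings cluster near $x_R^\star(\ell)$. For generic $\ell$ these two centers are well separated, so in the sorted crossing order along $\ell$ the reds form a single block followed by a single block of blues (or vice-versa); the number of color alternations is then bounded by a constant, which bounds $k$.

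The main obstacle is the one-parameter ``coincidence locus'' in $(s_\ell,b_\ell)$-space where $x_B^\star(\ell)=x_R^\star(\ell)$; on this locus the two clusters overlap and a straight $\ell$ could in principle interleave crossings to produce a long alternation. My strategy for overcoming this is to refine the base construction: replace each of the two bundles with a constant number of sub-bundles whose intercept centers are deliberately offset, so that the coincidence loci for distinct sub-bundle pairs are themselves distinct. At any given $(s_\ell,b_\ell)$ only one such pair can have coincident sub-bundles, limiting the alternating crossings to those contributed by a single sub-bundle pair, which has $O(1)$ points. Combined with the boundary-alternation contribution between non-coinciding sub-bundles, this caps every focal parallel set at a constant size, and dualizing produces the $1$-avoiding point set claimed by the lemma.
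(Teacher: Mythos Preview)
Your reduction to the dual picture is correct, and the observation that a straight M-semialternating line from level $0$ to level $2k$ must have slope outside the slope range of the chosen subarrangement is fine. The gap is in the construction itself.

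With two tight bundles (slopes and intercepts both clustered), you correctly note that along a straight line $\ell$ the blue crossings cluster near one value and the red crossings near another, \emph{except} on a one-parameter coincidence locus in $(s_\ell,b_\ell)$-space. Your proposed fix---splitting each bundle into a constant number $c$ of sub-bundles with offset centers---does not close this. If each sub-bundle still contains $\Theta(n/c)$ lines, then on the coincidence locus for any single blue/red sub-bundle pair those two clusters of size $\Theta(n/c)$ overlap, and nothing you have said prevents their crossings from interleaving $\Theta(n/c)$ times along $\ell$; the adversary simply restricts the subarrangement to that pair and chooses $\ell$ on that locus. If instead you make the sub-bundles of constant size (so there are $\Theta(n)$ of them), you now have $\Theta(n)$ blue cluster-centers and $\Theta(n)$ red cluster-centers along $\ell$, and you have given no reason their interleaving pattern cannot itself alternate $\Theta(n)$ times. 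Either way, the argument does not yield an $O(1)$ bound.

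The paper's construction is quite different and avoids this trap entirely. It takes blue lines to be unit-spaced verticals $x=1,\dots,x=n$ and places the red horizontals $r_1,r_2,\dots$ with gaps that grow explosively: $r_{i+1}$ is placed just above the point where the line through $(1,r_1)$ and $(2,r_i)$ meets $x=n$. The payoff is a slope argument: any straight line $\ell$ crosses the blues monotonically and the reds monotonically, so an alternation requires some blue crossing to fall strictly between the crossings with $r_i$ and $r_{i+1}$; since blues are unit-spaced, this forces the horizontal distance between those two red crossings to be at least $1$, which by the placement of $r_{i+1}$ makes the slope of $\ell$ smaller than that of $d_{i+1}$, hence $\ell$ meets $b_n$ before it can reach $r_{i+2}$. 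Thus only $O(1)$ alternations are possible. This ``rapid-gap'' mechanism is the missing idea; your clustering picture has no analogue of it.
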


\begin{proof}
We construct an arrangement of blue lines $\mathcal{L_{\mathcal{B}}}$ and red lines $\mathcal{L_{\mathcal{R}}}$ whose dual point sets $\mathcal{B}$ and $\mathcal{R}$ are mutually avoiding and the largest focal parallel set of $\mathcal{B} \cup \mathcal{R}$ is of constant size.
\begin{enumerate}[label={(S\arabic*)}]
\item Let $\{b_1,b_2,\dots, b_n\}$ be a set of blue vertical lines where $b_i$ is $x=i$.\label{b-unit}
\item Let $\{r_1,r_2,\dots,r_n\}$ be a set of red horizontal lines, where $r_1$ and $r_2$ are $y=1$ and $y=2$, respectively.
\item We define the remainder of the red lines incrementally. For $1< i < n$, we define $r_{i+1}$ from $\{r_1,r_i\}$ and $\{b_1,b_2,b_n\}$. Let $\mathtt{x}(l_1,l_2)$ denote the intersection point of lines $l_1$ and $l_2$. Let $d_i$ be the line joining $\mathtt{x}(r_1,b_1)$ and $\mathtt{x}(r_i,b_2)$. Let $x_i=\mathtt{x}(d_i,b_n)$. We define $r_{i+1}$ to be a horizontal line that is slightly above $x_i$. See Figure~\ref{fig:fps}.\label{construct-r}
\end{enumerate}
\begin{figure}[h]
\begin{center}
\begin{tikzpicture}[scale=.5,extended line/.style={shorten >=-#1,shorten <=-#1},
  extended line/.default=1cm]

	\def \lw {1};
	\def \d {3};
	\def \h {20}
	\def \n {6};

	\foreach \i in {1,2,...,\n}
		\draw[blue,line width=\lw pt,name path=B\i] (\i*\d-\d,0) --+ (0,\h) node [label=above:{\ifnum \i<3 $b_\i$ \else
		\ifnum \i=6 $b_n$ \else 
		\ifnum \i=4 $\ldots$
		\fi \fi \fi
		}] {} ;

	\path[draw,red,line width=\lw pt, name path=R1] (-\d/2,1) --+(\n*\d,0) node [label=right:{$r_1$}] {};
	\path[draw,red,line width=\lw pt, name path=R2] (-\d/2,1.5) --+(\n*\d,0) node [label=right:{$r_2$}] {};

	\path [name intersections={of=R1 and B1,by=X1}];
	\path [name intersections={of=R2 and B2,by=X2}];
	
	\draw [add = 5.2 and -.7, dashed,name path=D2,line width=1pt] (X1) to +($(X1)-(X2)$);
	\path [name intersections={of=D2 and B\n,by=X3}];
	\node[inner sep=1pt,label=below right:$x_2$] at (X3) {};

	\path[draw,red,line width=\lw pt, name path=R3] let \p1=(X3) in
	   (-\d/2,1.1*\y1) --+(6*\d,0) node [label=right:{$r_3$}] {};

	\path [name intersections={of=R3 and B2,by=X4}];
	\draw [add = 5.2 and -.7, dashed,name path=D3,line width=1pt] (X1) to +($(X1)-(X4)$);
	\path [name intersections={of=D3 and B\n,by=X5}];
	\node[inner sep=1pt,label=below right:$x_3$] at (X5) {};	
	
	\path[draw,red,line width=\lw pt, name path=R4] let \p1=(X5) in
	   (-\d/2,1.05*\y1) --+(6*\d,0) node [label=right:{$r_4$}] {};

	\foreach \i in {1,2,3,4,5}
		\draw[fill] (X\i) circle (2.5*\lw pt);

\end{tikzpicture}
\end{center}
\caption{Placement of red lines incrementally.}
\label{fig:fps}
\end{figure}

Each red line intersects the blue lines in the same order. Similarly, each blue line intersects the red lines in the same order. Hence $\mathcal{B} \cup \mathcal{R}$ is mutually avoiding. Let $\mathcal{L}=\mathcal{L_{\mathcal{R}}} \cup \mathcal{L_{\mathcal{B}}}$. To make sure that the largest focal parallel set of $\mathcal{B} \cup \mathcal{R}$ is of constant size, we need to show that no line added to $\mathcal{L}$ can be M-semialternating for a subset of $\mathcal{L}$ that is of size $\omega(1)$.

Note that any line $l$ intersects the blue lines monotonically (in order of their indices). Likewise, any line $l$ intersects the red lines monotonically. Let $s(l)$ denote the order in which the line $l$ intersects the lines in $\mathcal{L}$. If a line $l$ admits an M-semialternating path of size $m$, then there exists a subset $\mathcal{I} \subset [n]$ of size $\Theta(m)$ such that for every $i \in \mathcal{I}$, $r_i$ and $r_{i+1}$ do not appear consecutively in $s(l)$. For all $i \in \mathcal{I}$ except for at most one, the difference between the $x$-coordinates of $\mathtt{x}(l,r_i)$ and $\mathtt{x}(l,r_{i+1})$ is greater than or equal to one. Let $i$ be the smallest such index. The slope of the line through $\mathtt{x}(l,r_i)$ and $\mathtt{x}(l,r_{i+1})$ is less than the slope of $d_{i+1}$ (by our construction). Hence $b_n$ appears before $r_{i+2}$ in $s(l)$, which implies that $|\mathcal{I}|$ is constant.
 
In order to make sure that the dual point set $\mathcal{B} \cup \mathcal{R}$ is in general position, we perturb the blue and red lines imperceptibly so that all 
lines of the same colour become ``almost''  parallel.
A \emph{grid} point of $\mathcal{L}=\mathcal{L}_\mathcal{R} \cup \mathcal{L}_\mathcal{B}$ is an intersection point of a red line in $\mathcal{L}_\mathcal{R}$ and a blue line in $\mathcal{L}_\mathcal{B}$ .  Let $\mathcal{R}_i =\{r_1,r_2,\dots,r_i\}$ and let $\mathcal{L}_i=\mathcal{R}_i \cup \mathcal{L}_\mathcal{B}$. Note that the modification of the arrangement should not violate the property that every line through two grid points of $\mathcal{L}_i$, that is not a line in $\mathcal{L}_i$, intersects all blue lines prior to intersecting $r_{i+1}$. Lastly, we may rotate the resulting arrangement $45$ degrees clockwise so that the blue and red points (of the dual point configuration) are to the left and right sides of the $y$-axis.
\end{proof}

A point set $P$ is a \emph{wheel} if there exists a dummy point $q \notin P$ in the plane such that if we start with a vertical line through $q$ and rotate it clockwise about $q$ until it gets vertical again,
the rotating line would encounter points of $P$ on alternating sides of the vertical line through $q$.
A two-coloured point set is an \emph{alternating wheel} if it is a wheel where the rotating line through the dummy point sees the points with alternating colours. The size of a wheel is the number of points it has.

\begin{observation} It is easy to see the following:
\begin{iitem}
{\setlength\itemindent{20pt} \item A wheel of size $k$ implies a spoke set of size $\frac{k}{2}$.\label{obs:pin-sp}}
{\setlength\itemindent{20pt} \item The largest alternating wheel in a bicolored $2k$-point set that contains a crossing family of size $k$ consisting of bicolored segments (segments whose endpoints are not of the same color) may be of constant size.\label{obs:crf-pin}}
\end{iitem}
\end{observation}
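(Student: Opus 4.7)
For Part 1, the approach is to build a spoke set of size $k/2$ consisting of lines through the dummy point $q$. Let $p_1,\ldots,p_k$ be the points of the wheel in the order the rotating line encounters them and let $\theta_{p_i}$ be the corresponding rotation angles; by the wheel property the R/L labels of the $p_i$ alternate. I pair the points as $(p_1,p_2),(p_3,p_4),\ldots,(p_{k-1},p_k)$ and for each cyclic ``between-pair'' rotation-time arc $(\theta_{p_{2i}},\theta_{p_{2i+1}})$ (with the last arc wrapping from $\theta_{p_k}$ through $\theta=180^{\circ}\equiv 0^{\circ}$ back to $\theta_{p_1}$), I pick one line through $q$ whose rotation angle $\theta^{\ast}_i$ lies in that arc. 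This yields $k/2$ lines through $q$.

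To verify these form a spoke set, I first observe that at a between-pair time $\theta^{\ast}$ exactly $2i$ encounters have occurred, and because R's and L's alternate these $2i$ flipped points contribute $i$ to each side of the line at $\theta^{\ast}$; the unflipped $k-2i$ points split evenly $(k/2-i)$ on each side, so the line bisects $P$. I then claim that, in cyclic angular order around $q$, the $k$ rays emitted by the $k/2$ lines interleave with the $k$ rays from $q$ to the points of $P$: between two cyclically adjacent line-directions $\theta^{\ast}_j<\theta^{\ast}_{j+1}$ exactly one pair $(p_{2j+1},p_{2j+2})$ is encountered, and its two points---having opposite R/L labels---necessarily fall one in each of the two antipodal wedges cut out by the lines at $\theta^{\ast}_j$ and $\theta^{\ast}_{j+1}$. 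Hence each of the $k$ wedges contains exactly one point, and the $k/2$ lines form a spoke set. The main obstacle is carrying out this interleaving argument cleanly in angular coordinates while handling the wrap-around arc, but the bisection property of each between-pair line is what ultimately forces the correct interleaving.

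For Part 2, the plan is to exhibit a concrete example: take $k$ red points clustered tightly near $(0,0)$, say $\{(\epsilon i,0):1\le i\le k\}$, and $k$ blue points clustered tightly near $(M,1)$, say $\{(M+\epsilon(k+1-i),1):1\le i\le k\}$, for very small $\epsilon$ and very large $M$. The $k$ bicolored segments joining $(\epsilon i,0)$ to $(M+\epsilon(k+1-i),1)$ form a crossing family of size $k$ exactly as in the classical mutually-avoiding construction, since on the left endpoints the $x$-coordinates increase in $i$ while on the right they decrease. To bound the alternating wheel, I argue that for any dummy $q$ each of the two clusters is seen within a vanishingly small angular range as $\epsilon/M\to 0$, so the rotating line through $q$ encounters all red points in one block and all blue points in another block with only a constant number of color transitions between the blocks. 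Hence the longest encounter subsequence that is simultaneously side-alternating and color-alternating---an alternating wheel---has length bounded by a constant independent of $k$. The main subtlety is verifying that no placement of $q$---in particular $q$ close to one of the clusters---can interleave the colors more than a bounded number of times, which follows because the ``far'' cluster is still seen within a vanishingly narrow angular range and can therefore contribute only $O(1)$ interleavings with the ``near'' cluster.
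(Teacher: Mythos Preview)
Your Part~1 argument is correct and is exactly the construction the paper has in mind when it says the claim ``immediately follows from the definitions'': lines through $q$ at angles separating consecutive encounter-pairs, with side-alternation placing one member of each pair into each of the two antipodal sectors.

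Your Part~2 construction, however, does not work. The $k$ segments you write down are \emph{concurrent}: the midpoint of the segment from $(i\epsilon,0)$ to $(M+(k+1-i)\epsilon,1)$ equals $\bigl(\tfrac{M+(k+1)\epsilon}{2},\tfrac{1}{2}\bigr)$ for every $i$. Take the dummy point $q$ to be a tiny perturbation of this common midpoint. Then for each $i$ the points $r_i$ and $b_i$ lie at antipodal directions from $q$, and since the segment slopes $1/(M+(k+1-2i)\epsilon)$ are strictly monotone in $i$, the $k$ pairs occur at $k$ distinct encounter angles. A suitable perturbation makes the encounter order $b_1,r_1,b_2,r_2,\dots,b_k,r_k$, which is simultaneously colour-alternating and side-alternating (all reds lie left of $q$, all blues right). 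Hence your point set admits an alternating wheel of size $2k$, not $O(1)$. The underlying flaw is the claim that a far cluster's narrow angular window forces only $O(1)$ colour interleavings: when the near cluster's points also fall in that same narrow window---as they do for $q$ in the common crossing region---the colours can interleave $\Theta(k)$ times.

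The paper's route is different and avoids this trap. It notes that an alternating wheel with dummy $q$ dualises to an M-semialternating \emph{straight line} $q^\star$ starting and ending in median cells. It then invokes Lemma~\ref{lem:fps-c}, which builds a mutually-avoiding point set (hence one with a size-$k$ bicolored crossing family) whose largest \emph{focal parallel set}---i.e.\ whose largest level-$0$ M-semialternating straight line in the dual---is constant; the $90^\circ$ dual rotation of Lemma~\ref{lem:pset-ss} (applied to lines rather than pseudolines) turns ``constant focal parallel set'' into ``constant alternating wheel'' while preserving mutual avoidance. The essential feature of the Lemma~\ref{lem:fps-c} construction is the exponential spacing of one colour class, which is precisely what prevents any straight line from semialternating through more than $O(1)$ lines; a naive two-cluster example does not have this property.
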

\begin{proof}
Observation~\ref{obs:pin-sp} immediately follows from the definitions of wheels and spoke sets.
Note that an alternating wheel $P$ corresponds to an M-semialternating line in $P^\star$ that starts and ends in the median cells of $P^\star$. Lemma~\ref{lem:fps-c}, together with Lemma~\ref{lem:pset-ss} (when considering M-semialternating lines rather than pseudolines), proves Observation~\ref{obs:crf-pin}.
 \end{proof}

\begin{lemma}
There exists a $1$-avoiding $n$-point set whose largest parallel set is of size at most $\frac{n}{4}+1$.
\label{lem:p-set-upex}
\end{lemma}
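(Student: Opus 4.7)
The plan is to exhibit a specific $1$-avoiding $n$-point set $P$ and verify that its largest parallel set has size at most $\tfrac{n}{4}+1$. By Lemma~\ref{lem:pset-ss}, such a construction also upgrades the upper bound on the largest spoke set for $1$-avoiding $n$-point sets from $\tfrac{9n}{20}$ to $\tfrac{n}{4}+1$, so it suffices to reason about parallel sets.

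I would construct $P=\mathcal{B}\cup \mathcal{R}$ with $|\mathcal{B}|=|\mathcal{R}|=n/2$, placing $\mathcal{B}$ on a thin vertical segment to the right of the $y$-axis and $\mathcal{R}$ on a carefully spaced configuration to the left. The thin-vertical structure of $\mathcal{B}$ is chosen so that the resulting set is automatically $1$-avoiding: every chord through two blue points is (nearly) vertical and hence cannot meet the convex hull of $\mathcal{R}$. The red points are positioned so that their dual line arrangement mimics the ``exponentially growing heights'' idea already used in the proof of Lemma~\ref{lem:fps-c}; the point is to force consecutive red dual-lines to be so far apart in height that any pseudoline trying to sweep across them must accumulate reds in clumps rather than one at a time.

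To bound the parallel set I would translate to the dual plane. By the correspondence developed in the discussion preceding Observation~\ref{obs:bar}, a parallel set of size $k$ in $P$ corresponds to an M-semialternating pseudoline in the dual arrangement $P^\star$ that starts in the level-$0$ cell and ends in the level-$2k$ cell of some $2k$-subarrangement of $P^\star$. The geometry of $P^\star$ then restricts how many lines such a pseudoline can cross while keeping its level sequence monotone and its colour sequence alternating, yielding $2k \le n/2+2$ and hence $k \le n/4+1$.

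The main obstacle is the verification step. In contrast to Lemma~\ref{lem:fps-c}, where the focality of the parallel set essentially reduces the analysis to a single sweep direction, a general parallel set may involve lines of arbitrary slope and intercept, so the pseudoline in the dual can wind through the arrangement in complicated ways. The key technical ingredient I would expect to prove is that the color-separability of $P^\star$ (blue slopes above red slopes, with the same intersection order on each red line) together with the fast spacing of the red coordinates forces any M-semialternating pseudoline in $P^\star$ to cross red lines in groups of at least two, effectively halving the pseudoline's possible length and giving the desired $\tfrac{n}{4}+1$ bound.
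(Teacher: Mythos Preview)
Your approach diverges from the paper's in both the construction and the analysis, and the divergence is not benign: the verification step you flag as the ``main obstacle'' is genuinely an obstacle, and your sketch does not overcome it.

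The paper's construction is much simpler than the one you propose. Both $\mathcal{B}$ and $\mathcal{R}$ are taken to be nearly collinear (blue nearly vertical, red nearly horizontal), with the single additional constraint that every line through two red points is a halving line for $\mathcal{B}$. There is no exponential spacing. The analysis is then carried out entirely in the primal plane, directly from the definition of a parallel set: split $\mathcal{B}$ into top and bottom halves $\mathcal{B}_1,\mathcal{B}_2$ and $\mathcal{R}$ into left and right halves $\mathcal{R}_1,\mathcal{R}_2$, and show by a short case analysis that any parallel set can use at most one point of $\mathcal{B}_2$ once it uses a point of $\mathcal{R}_1$. This gives the bound $|\mathcal{B}_1|+1=\tfrac{n}{4}+1$ immediately. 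No dual picture, no pseudolines.

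Your plan instead imports the exponential-height construction from Lemma~\ref{lem:fps-c} and passes to the dual. But the argument in Lemma~\ref{lem:fps-c} hinges on the parallel set being \emph{focal}: in the dual this means the M-semialternating curve is an actual straight line, and the exponential gaps between successive $r_i$ are calibrated precisely so that a line of bounded slope cannot squeeze a blue crossing between two consecutive red crossings more than a constant number of times. A pseudoline has no such slope constraint; it can bend to visit $b_n$ and then come back, so the mechanism that forces ``reds in clumps'' in Lemma~\ref{lem:fps-c} evaporates. Your final paragraph asserts that color-separability plus fast spacing would still force red lines to be crossed in groups of at least two, but you give no argument for this, and I do not see one: for the exponentially spaced arrangement one can draw M-semialternating pseudolines that alternate colors perfectly over long stretches. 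In short, the construction you borrow was designed for a strictly weaker notion, and the gap between focal and general parallel sets is exactly the content you are missing.
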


\begin{proof}
We construct a two-coloured point set $\mathcal{P}=\mathcal{P}_{\mathcal{B} | \mathcal{R}}$. Let $\mathcal{B}$ and $\mathcal{R}$ be the set of blue and red points, respectively. We construct $\mathcal{B}$ such that the blue points are almost collinear, that is, they are on an arc of a very large circle. We construct $\mathcal{R}$ in the same way and make sure that every line joining two red points is a halving line for $\mathcal{B}$.
See Figure~\ref{fig:psUPex}.
Assume $\mathcal{B}$ and $\mathcal{R}$ are almost on vertical and horizontal lines, respectively, and let $\mathcal{B}$ be to the right of $\mathcal{R}$.
Let $\mathcal{R}=\mathcal{R}_1 \cup \mathcal{R}_2$, where $\mathcal{R}_1$ and $\mathcal{R}_2$ are the first and second halves of $\mathcal{R}$ when traversed from left to right. Let $\mathcal{B}=\mathcal{B}_1 \cup \mathcal{B}_2$, where $\mathcal{B}_1$ and $\mathcal{B}_2$ are the first and second halves of $\mathcal{B}$ when traversed from top to bottom. Let $L$ be a maximum parallel set exhausting $R \cup B$, where $R \subseteq \mathcal{R}$ and $B \subseteq \mathcal{B}$. For each $L_i \in L$, where $i>0$, let $r_i=A(R,L_i) \setminus A(R,L_{i-1})$ and $b_i=A(B,L_i) \setminus A(B,L_{i-1})$.
\vspace{.2cm}
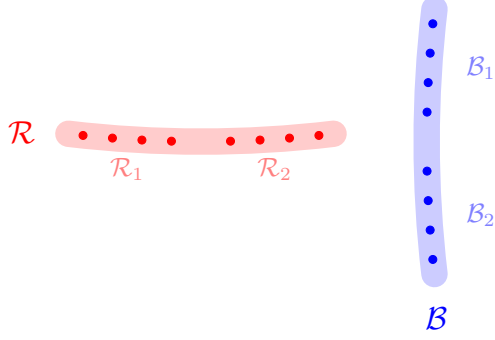
\begin{figure}[h]
	\begin{center}
\begin{tikzpicture}[scale=1]
	\def \l {15};
	\def \a {1.5};
	
	\coordinate(O) at (0,0);
	\foreach \angle  in {\a,-\a}
		\foreach \i in {0,1,2,3}{
			\coordinate(A) at (180+\i * \angle+\angle:\l cm);
			\draw[fill,blue](A) circle (1.5pt);
		}
	\draw[line width= 10pt, blue, opacity=.2,cap=round] (180+4.5* \a:\l cm) arc (180+4.5* \a:180-4.5* \a:\l cm);
	\node[label=below:\textcolor{blue}{\large{$\mathbf{\mathcal{B}}$}}] at (180+5* \a:\l cm) {};
	\node[label=right:\textcolor{blue!50}{$\hspace{.25cm}\mathcal{B}_2$}] at (180+2.5* \a:\l cm) {};
	\node[label=right:\textcolor{blue!50}{$\hspace{.25cm}\mathcal{B}_1$}] at (180-2.5* \a:\l cm) {};

	\begin{scope}[shift={(-1.2*\l,+\l)},rotate=90]
		\foreach \angle  in {\a,-\a}
			\foreach \i in {0,1,2,3}{
				\coordinate(A) at (180+\i * \angle+\angle:\l cm);
				\draw[red,fill](A) circle (1.5pt);
			}
	\draw[line width= 10pt, red, opacity=.2,cap=round] (180+4.5* \a:\l cm) arc (180+4.5* \a:180-4.5* \a:\l cm);
	\node[label=left:\textcolor{red}{\large{$\bf{\mathcal{R}}$}}] at (180-5* \a:\l cm) {};
	\node[label=below:\textcolor{red!50}{$\vspace{.5cm}\mathcal{R}_2$}] at (180+2.5* \a:\l cm) {};
	\node[label=below:\textcolor{red!50}{$\vspace{.5cm}\mathcal{R}_1$}] at (180-2.5* \a:\l cm) {};
	\end{scope}
\end{tikzpicture}
	\end{center}
	\caption{Point set with maximum parallel set of size $\frac{n}{4}+1$.}
	\label{fig:psUPex}
\end{figure}

If $R \cap \mathcal{R}_1 = \emptyset$, then the parallel set is of size at most $\frac{n}{4}$. Assume $r_i \in R \cap \mathcal{R}_1$.
We consider two cases.
\begin{enumerate}[label={\sbf{Case \arabic*:}},leftmargin=2.7\parindent]
\item $b_i \in \mathcal{B}_1$. Note that for any $L_j \in L$, where $b_j \in \mathcal{B}_2$, $A(R,L_i) \subset A(R,L_j)$ (as $b_i$ is above $b_j$), and hence every red point in $R$ that is to the right of $r_i$ is in $ A(R,L_j)$. Therefore $|B \cap \mathcal{B}_2| \le 1$.
\item $b_i \in \mathcal{B}_2$. If $B \cap \mathcal{B}_1 = \emptyset$, then the parallel set is of size at most $\frac{n}{4}$. So we assume that there exists $L_j$ such that $b_j \in \mathcal{B}_1$. Since $b_j$ is above $b_i$, we know $j<i$. Thus $r_i \notin A(R,L_j)$, which implies $r_j$ is to the left of $r_i$.
Note that $A(R,L_{j})$ contains $r_j$ and all red point in $R$ to the left of $r_j$.
This implies that for any line $L_k$ such that $b_k \in \mathcal{B}_2$, $A(R,L_k)=R$. Therefore, $|B \cap \mathcal{B}_2|=1$.
\end{enumerate}
As a result, the size of the parallel set is at most $\max\{|\mathcal{B}_1|,|\mathcal{B}_2|\}+1 = \frac{n}{4}+1$.
\end{proof}

Note that Lemma~\ref{lem:p-set-upex} together with Lemma~\ref{lem:pset-ss} improves on the $\frac{9n}{20}$ upper bound known on the size of spoke sets due to \citet{sch}.

\begin{corollary}
There exists a $1$-avoiding $n$-point set whose largest spoke set is of size at most $\frac{n}{4}+1$.
\label{cor:ssUP}
\end{corollary}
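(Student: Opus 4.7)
The plan is to derive the corollary as an immediate consequence of the two preceding results, namely Lemma~\ref{lem:pset-ss} and Lemma~\ref{lem:p-set-upex}. By Lemma~\ref{lem:p-set-upex}, there exists a $1$-avoiding $n$-point set $\mathcal{P}$ whose largest parallel set has size at most $\frac{n}{4}+1$. I want to transfer this parallel-set upper bound into a spoke-set upper bound on a (possibly different) $1$-avoiding $n$-point set.

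The bridge is Lemma~\ref{lem:pset-ss}, which asserts that the minimum, over all $1$-avoiding $n$-point sets, of the largest spoke set equals the minimum, over all such sets, of the largest parallel set. So I would first invoke Lemma~\ref{lem:p-set-upex} to conclude that the latter minimum is at most $\frac{n}{4}+1$, then invoke Lemma~\ref{lem:pset-ss} to conclude the same bound for the former minimum, which is precisely the existence statement we want.

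Alternatively (and more concretely, to make the proof self-contained), I would apply the explicit $90\degree$ rotation construction used in the proof of Lemma~\ref{lem:pset-ss} directly to $\mathcal{P}$: dualize $\mathcal{P}$ to get the line arrangement $\mathcal{P}^\star$, rotate $\mathcal{P}^\star$ clockwise by $90\degree$ to obtain $\mathcal{L}'$, and take the dual point set $\mathcal{P}'$ of $\mathcal{L}'$. Lemma~\ref{lem:pset-ss} guarantees that $\mathcal{P}'$ is again $1$-avoiding, and that the largest spoke set of $\mathcal{P}'$ equals the largest parallel set of $\mathcal{P}$. Since the latter is at most $\frac{n}{4}+1$ by Lemma~\ref{lem:p-set-upex}, the former is as well, which proves the corollary.

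There is essentially no obstacle: the entire content has already been established in the two preceding results, and the proof is a one-line combination. The only care required is to state the corollary as an existence claim (some $1$-avoiding $n$-point set achieves the bound), matching exactly the form of Lemma~\ref{lem:p-set-upex} under the correspondence of Lemma~\ref{lem:pset-ss}.
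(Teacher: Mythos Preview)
Your proposal is correct and matches the paper's approach exactly: the paper states the corollary immediately after noting that Lemma~\ref{lem:p-set-upex} together with Lemma~\ref{lem:pset-ss} yields the bound, with no further proof given. Your one-line combination (and the optional unpacking via the $90\degree$ rotation) is precisely what the paper intends.
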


Corollary~\ref{cor:ssUP} closes the gap between lower and upper bounds on the size of spoke sets for $1$-avoiding point sets if Conjecture~\ref{schnider-claim} is true.

The following observation compares the notions of crossing family, spoke set, parallel family and parallel set (for point sets in general position).

\begin{observation}
For a given point set $P$,
\begin{itemize}
\item if $P$ has a crossing family of size $k$, then there exists $P' \subseteq P$ where $|P'|=2k$ and $P'$ has $k$ pairwise crossing halving edges.
\item if $P$ has a spoke set of size $k$, then there exists $P' \subseteq P$ where $|P'|=2k$ and $P'$ has a set of $k$ halving lines $L$ such that the corresponding halving edge of each halving line is distinct and crosses all the halving lines in $L$.
\item if $P$ has a parallel family of size $k$, then there exists $P' \subseteq P$ where $|P'|=2k$ and $P'$ has a set of pairwise parallel $0-,2-,\dots,(2k-2)-$edges.
\item if $P$ has a parallel set of size $k$, then there exists $P' \subseteq P$ such that $|P'|=2k$ and $P'$ has a set of $0-,2-,\dots,2k-$sets, where for any $0\le i < k$, the $2i-$set is contained in the $2(i+1)-$set. A \emph{$k$-set} of point set $P$ is a subset $S$ of $P$ containing $k$ points that is separable from its complement $P \setminus S$ by a straight line.
\end{itemize}
\end{observation}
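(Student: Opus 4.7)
The plan is to handle the four bullets sequentially; each reduces to extracting $P' \subseteq P$ from the given structure and checking the required property by a counting argument. For the crossing-family bullet, take $P'$ to be the $2k$ endpoints. Fix a segment $s$ in the family with endpoints $a,b$: each of the other $k-1$ segments crosses $s$, placing one endpoint on each side of the line $\overline{ab}$, so $\overline{ab}$ is a halving line of $P'$ with halving edge $s$. Running this over all $k$ segments gives $k$ pairwise-crossing halving edges. For the parallel-set bullet, let $L_0,\dots,L_k$ and $B, R$ (of size $k$ each) witness the parallel set and take $P'=B\cup R$. The strict inclusions $A(B,L_{i-1}) \subsetneq A(B,L_i)$ (running from size $0$ to $k$) force each step to add exactly one blue point and, similarly, one red, so $|A(P',L_i)| = 2i$, and the $A(P',L_i)$'s form the required chain of $2i$-sets.

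For the spoke-set bullet, enumerate the $2k$ unbounded regions of $\mathcal{L}$ cyclically $U_1,\dots,U_{2k}$ around the circle at infinity, pair them antipodally as $(U_i,U_{i+k})$ (every line of $\mathcal{L}$ separates $U_i$ from $U_{i+k}$), and pick $p_i\in U_i$ to form $P'$. Each segment $s_i=\overline{p_ip_{i+k}}$ therefore crosses every line of $\mathcal{L}$. Let $L'_i$ be the supporting line of $s_i$. Because each $U_j$ occupies a fixed angular sector when viewed from any interior point of the arrangement, the angular ordering of the $p_j$'s from such a center matches the cyclic ordering of the regions; hence $L'_i$ splits the remaining points of $P'$ into $\{p_{i+1},\dots,p_{i+k-1}\}$ on one side and $\{p_{i+k+1},\dots,p_{i-1}\}$ on the other (indices mod $2k$), so $L'_i$ is a halving line with halving edge $s_i$. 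The same cyclic argument shows that for $j \ne i$, $p_i$ and $p_{i+k}$ sit on opposite sides of $L'_j$, so $s_i$ crosses every $L'_j$.

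For the parallel-family bullet, let $P'$ be the $2k$ endpoints of the $k$ pairwise-parallel segments. The parallel-family condition forces every $s_j$ to lie entirely on one side of every $\ell_i$, and this sidedness data admits a total ordering $s_1\prec\cdots\prec s_k$ of the segments with a consistent orientation of each $\ell_i$ such that the endpoints of $s_1,\dots,s_{i-1}$ lie below $\ell_i$ while those of $s_{i+1},\dots,s_k$ lie above. Then $s_i$ has exactly $2(i-1)$ endpoints on one side, making it a $2(i-1)$-edge of $P'$, and the resulting $k$ edges are pairwise parallel with levels $0,2,\dots,2k-2$. The main obstacle is the spoke-set bullet, where one must verify carefully that an arbitrary representative chosen inside each unbounded region (not necessarily far from the bounded part of the arrangement) preserves the cyclic structure; the argument hinges on the fact that each unbounded region occupies a fixed angular sector from any interior viewpoint, forcing the angular order of the $p_j$'s to agree with the cyclic order of the regions, which in turn underwrites the halving and crossing conclusions.
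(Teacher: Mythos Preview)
The paper states this observation without proof, so there is no authors' argument to compare against. Your arguments for the first and fourth bullets are fine.

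The spoke-set bullet has a genuine gap. The assertion that ``each unbounded region occupies a fixed angular sector from any interior viewpoint'' does not deliver what you need: even if the angular order of the $p_j$'s around some fixed centre $q$ matches the cyclic order of the regions, this says nothing about which side of $L'_i=\overline{p_ip_{i+k}}$ the other points lie on, because $L'_i$ need not pass through $q$. Concretely, take $\mathcal L=\{x\text{-axis},\,y\text{-axis}\}$ and $p_1=(0.01,100)$, $p_2=(-0.01,0.01)$, $p_3=(-100,-0.01)$, $p_4=(0.01,-0.01)$, one point in each open quadrant. The line through $p_1$ and $p_3$ is $y=x+99.99$, and both $p_2$ and $p_4$ lie strictly below it, so $\overline{p_1p_3}$ is not a halving edge of $P'=\{p_1,\dots,p_4\}$. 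In fact $p_2$ lies inside triangle $p_1p_3p_4$, so every halving edge of $P'$ is incident to $p_2$ and no two of them properly cross; your antipodal-edge construction cannot be repaired for this $P$ without further ideas.

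The parallel-family bullet also fails as written. You assert that the sidedness data ``admits a total ordering'' of the segments, but take $s_1=[(1,0),(9,0)]$, $s_2=[(0,1),(0,9)]$, $s_3=[(1,9),(9,1)]$. These are pairwise parallel in the paper's sense, yet each supporting line has both of the other two segments on the same side, so every $s_i$ is a $0$-edge (equivalently a $4$-edge) of the six endpoints and none is a $2$-edge. The original parallel-family segments therefore need not realise the levels $0,2,\dots,2k-2$; if the statement is to hold, a different choice of edges inside $P'$ is required, and your argument does not supply one.
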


\subsection{Stabbing families} \label{sec:stab}
Here, we study a more generalized notion than spoke sets, for which we can easily prove a linear lower bound. We start by some definitions.

A \emph{stabbing family} is a set of segments such that for every pair of segments, the line extension of one intersects the interior of the other one.

\begin{definition}
Given two segments $e_1$ and $e_2$, extend the segments to obtain two lines. If the intersection of these lines lies
\begin{itemize}
\item  on both segments, we say the segments are \emph{crossing}.
\item  on $e_1$ but not on $e_2$, we say $e_2$ \emph{stabs} $e_1$.
\item  outside both segments, we say the segments are \emph{parallel}.
\end{itemize}
Two segments are \emph{non-crossing} if they are either parallel or one stabs the other.
\end{definition}

Note that any pair of segments in a stabbing family is either crossing or one stabs the other.
Let $\mathcal{L}_\mathcal{P}$ denote a spoke set for $\mathcal{P}$.
A \emph{spoke matching}, with respect to $\mathcal{L}_\mathcal{P}$, is a matching where each segment (i.e. matching edge) connects two points in $\mathcal{P}$ that lie in antipodal unbounded regions of the arrangement of $\mathcal{L}_\mathcal{P}$; and no two segments start (or end) in the same unbounded region.
It is easy to see that a spoke matching does not contain any parallel segments, and hence forms a stabbing family. Thus, any point set with a spoke set of size $k$, has a stabbing family of size at least $k$. However, the reverse is not true.~\citet{gcrf17} characterizes the family of spoke matchings and describes certain other properties that need to be satisfied by spoke matchings (See Theorem 2 in \citep{gcrf17}).


\begin{lemma}
Let $\mathcal{P}=\mathcal{P}_{\mathcal{B} \vdash \mathcal{R}}$ be a $1$-avoiding point set.
There is a perfect matching in $\mathcal{P}$ such that every edge of the matching connects a point in $\mathcal{R}$ to a point in $\mathcal{B}$ and the matching edges are all pairwise non-crossing.
\label{lem:non-cross}
\end{lemma}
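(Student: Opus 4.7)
The plan is to use the classical minimum-weight matching trick, which does not actually require the $1$-avoiding structure and goes through for any two equal-sized point sets in general position (recall that for $P_{\mathcal{B}\vdash\mathcal{R}}$ we have $|\mathcal{B}|=|\mathcal{R}|$).

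First I would let $M$ be a perfect bipartite matching between $\mathcal{B}$ and $\mathcal{R}$ that minimizes the total Euclidean length of its edges; such an $M$ exists because the set of perfect bipartite matchings between two finite equal-sized sets is itself finite. By construction every edge of $M$ connects a point of $\mathcal{R}$ to a point of $\mathcal{B}$, so only pairwise non-crossing remains to be verified. I claim no two edges of $M$ cross in the sense of the definition preceding the lemma (i.e.\ have their supporting lines meet in the interior of both).

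Suppose for contradiction that two edges $\overline{b_1 r_1},\overline{b_2 r_2}\in M$ cross at a point $p$ lying in the relative interior of each. Since the four points are in general position, $p$ does not lie on either of the segments $\overline{b_1 r_2}$ or $\overline{b_2 r_1}$, so the strict triangle inequality yields
\[
|b_1 r_2| + |b_2 r_1| \;<\; \bigl(|b_1 p|+|p r_2|\bigr) + \bigl(|b_2 p|+|p r_1|\bigr) \;=\; |b_1 r_1|+|b_2 r_2|.
\]
Replacing $\overline{b_1 r_1}$ and $\overline{b_2 r_2}$ in $M$ by $\overline{b_1 r_2}$ and $\overline{b_2 r_1}$ therefore produces a perfect bichromatic matching of strictly smaller total length, contradicting the minimality of $M$. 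Hence every two edges of $M$ are either parallel or one stabs the other, i.e.\ pairwise non-crossing as required.

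There is essentially no obstacle here: the argument is the standard uncrossing swap, and the only subtlety is noting that general position rules out the degenerate case in which $p$ lies on one of the swapped segments, which is why the triangle inequality can be applied strictly. Note that the $1$-avoiding hypothesis is not used; it is a cleaner and perhaps stronger statement than needed, but it is stated in this form because the ambient setting of the section is $1$-avoiding point sets, and later developments about stabbing families will exploit additional geometric structure of $P_{\mathcal{B}\vdash\mathcal{R}}$ beyond what Lemma~\ref{lem:non-cross} itself provides.
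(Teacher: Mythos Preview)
Your proof is correct, and it takes a genuinely different route from the paper's. The paper gives a constructive greedy algorithm that relies on the $1$-avoiding hypothesis: it orders the blue points $b_1,\dots,b_n$ so that $\mathcal{R}$ lies to the right of every directed line $\overrightarrow{b_ib_j}$ with $i<j$ (this ordering exists precisely because $\mathcal{B}$ avoids $\mathcal{R}$), then for each $b_k$ in turn rotates a vertical line counterclockwise about $b_k$ until it hits the first unmatched red point, and matches them. The contradiction for a crossing pair is then an ordering argument about which red point would have been seen first.

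Your minimum-length matching argument is more elementary and, as you note, strictly more general: it needs only $|\mathcal{B}|=|\mathcal{R}|$ and general position, not the avoiding condition. What the paper's approach buys is an explicit, efficiently computable matching and a concrete combinatorial description tied to the $1$-avoiding structure, which fits the flavour of the surrounding arguments; what your approach buys is brevity and the observation that the lemma holds in greater generality than stated.
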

\begin{proof}
Without loss of generality, assume $\mathcal{R}$ and $\mathcal{B}$ lie to the left and right of the $y$-axis, respectively. Label the points of $\mathcal{B}$ by $b_1b_2\dots b_n$ so that for all $i<j$, $\mathcal{R}$ lies to the right of the directed line from $b_i$ to $b_j$ (since $\mathcal{P}_{\mathcal{B} \vdash \mathcal{R}}$ is $1$-avoiding such a labeling exists). Start with $k=1$. For each $k$, rotate a vertical line through $b_k$, counterclockwise about $b_k$, until it hits an unmatched red point, say $r_k$. Match $b_k$ with $r_k$. Increment $k$ by one and repeat the last step as long as $k \le n$.
It is easy to see that the matching obtained has the desired property. For the sake of contradiction suppose $b_ir_i$ and $b_jr_j$ cross, where $i<j$.
Note that since the matching is constructed incrementally, $r_j$ is unmatched when $b_i$ and $r_i$ are matched. However, since both $r_i$ and $r_j$ are to the right of the directed line from $b_i$ to $b_j$ (and the assumption that $\mathcal{B}$ and $\mathcal{R}$ are to the left and right sides of the $y$-axis), we know that if we rotate a vertical line through $b_i$ counterclockwise about $b_i$, it sees $r_j$ prior to $r_i$. This implies that the algorithm picks $r_j$ (over $r_i$) for $b_i$, and hence $b_ir_i$ cannot be a matching edge.
\end{proof}

\begin{corollary}
Let $\mathcal{P}=\mathcal{P}_{\mathcal{B} \vdash \mathcal{R}}$ be a $1$-avoiding point set.
There is a perfect matching in $\mathcal{P}$ such that every edge of the matching connects a point in $\mathcal{R}$ to a point in $\mathcal{B}$ and the matching edges form a stabbing family.
\end{corollary}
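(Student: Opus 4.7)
The plan is to rerun the greedy rotational construction from the proof of Lemma~\ref{lem:non-cross}, but with the direction of rotation reversed. Keep the labeling $\mathcal{B}=\{b_1,\dots,b_n\}$ so that for $i<j$ the set $\mathcal{R}$ lies to the right of the directed line $b_ib_j$. For each $k=1,\dots,n$ in order, rotate a vertical line through $b_k$ \emph{clockwise} about $b_k$ and match $b_k$ with the first still-unmatched red point, call it $r_k$. Exactly as in the lemma, one full clockwise half-turn sweeps across every remaining red, so the algorithm always has an unmatched red available and outputs a perfect blue--red matching $M$.

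To verify that $M$ is a stabbing family, I will suppose for contradiction that two of its edges $(b_i,r_i)$ and $(b_j,r_j)$ with $i<j$ form a parallel pair. Then the four endpoints are in convex position and $M$ pairs them as two opposite sides of the resulting convex quadrilateral. The hypothesis that $\mathcal{B}$ avoids $\mathcal{R}$ excludes the alternating BRBR hull order, because in that case the line $b_ib_j$ would separate $r_i$ from $r_j$ and so cross the convex hull of $\mathcal{R}$. Hence the hull is BBRR; write its CCW cyclic order as $b_i,r_\beta,r_\alpha,b_j$. The only non-crossing blue--red matching on such a quadrilateral pairs each blue with its hull-neighbour on the red side, forcing $r_i=r_\beta$ and $r_j=r_\alpha$.

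The key geometric fact is that from any hull vertex the angular order of the other hull vertices matches their order along the CCW hull traversal. Applied at $b_i$, this tells us that a counterclockwise sweep of a line about $b_i$ starting from vertical encounters $r_\beta$ strictly before $r_\alpha$; reversing the direction of rotation reverses this order, so the clockwise sweep at step~$i$ hits $r_\alpha$ first. Since $i<j$, neither $r_\alpha=r_j$ nor $r_\beta=r_i$ has been matched prior to step~$i$, so the greedy choice at $b_i$ selects $r_\alpha$ (or some even earlier unmatched red) and in particular never selects $r_\beta$, contradicting $r_i=r_\beta$. I expect the main obstacle to be the geometric bookkeeping: confirming carefully that the $1$-avoiding hypothesis really does forbid the BRBR convex configuration and that the hull-vertex angular-order fact applies with our specific starting direction (vertical) and sweep orientation. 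Once these small ingredients are in place, the clockwise variant of the lemma's algorithm immediately delivers the required stabbing family.
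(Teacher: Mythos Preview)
Your approach is correct and takes a genuinely different route from the paper. The paper proves the corollary via the transformation used in Lemma~\ref{lem:pset-ss}: it rotates the dual arrangement by $90^\circ$, shows that this sends the $1$-avoiding set $\mathcal{P}$ to another $1$-avoiding set $\mathcal{P}'$ while exchanging ``crossing'' with ``parallel'' on bicoloured segments, and then invokes Lemma~\ref{lem:non-cross} on $\mathcal{P}'$ so that the resulting non-crossing bicoloured matching transfers back to a stabbing family in $\mathcal{P}$. You instead rerun the greedy construction of Lemma~\ref{lem:non-cross} with the rotation direction reversed and argue directly, via a convex-position analysis, that no parallel pair can appear.

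Two of the bookkeeping items you flag do need to be made explicit for the argument to close. First, the CCW hull order must be $b_i,r_\beta,r_\alpha,b_j$ rather than $b_i,b_j,r_\alpha,r_\beta$; this is precisely where the labeling convention ``$\mathcal{R}$ lies to the right of the directed line $b_i\to b_j$'' enters, since in the second ordering the reds would lie to its left. Second, deducing that the clockwise sweep from vertical meets $r_\alpha$ before $r_\beta$ uses not only the hull angular-order fact but also that both reds lie in the open left half-plane of $b_i$, so that the vertical direction genuinely precedes every red in the sweep. Once these are recorded, the contradiction goes through exactly as you describe. The paper's route has the virtue of revealing a structural duality (non-crossing matchings $\leftrightarrow$ stabbing families) that it reuses for general point sets in the next lemma; your route is more elementary and entirely self-contained, needing no detour through the dual.
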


\begin{proof}
Assume (by rotation and transformation if necessary) that $\mathcal{B}$ and $\mathcal{R}$ lie on the right and left sides of the $y$-axis, respectively.
We transform ${\mathcal{P}}$ to a new point set $\mathcal{P}'$ so that
the dual line arrangement for $\mathcal{P}'$ is a rotation of the dual line arrangement for $\mathcal{P}$ by $90\degree$ (similar to the transformation used in Lemma~\ref{lem:pset-ss}). Recall that a segment $s$ in the primal plane transforms to a double wedge $W_s$ in the dual plain. We refer to the point representing the dual of the supporting line of $s$ as the \emph{apex} of $W_s$. Two segments in the primal plane cross if and only if in the dual plane, the apex of each double wedge is inside the other double wedge. Two segments in the primal plane are parallel if and only if in the dual plane, the apex of neither double wedge is inside the other double wedge. Segment $e$ stabs segment $f$ if and only if in the dual plane, the apex of $W_e$ is inside $W_f$, and the apex of $W_f$ is outside $W_e$.
Let the segment $e_i$ within ${\mathcal{P}}$ transform to the segment $e'_i$ within $\mathcal{P}'$. Note that for a bicolored segment $e_i$, $W_{e_i}$ contains a horizontal line. Hence, the complementary double wedge of $W_{e_i}$ when rotated $90\degree$ is the dual of $e'_i$. Therefore, for a pair of bicolored segments $e_i$ and $e_j$,
\begin{itemize}
\item $e_i$ and $e_j$ are crossing if and only if $e'_i$ and $e'_j$ are parallel,
\item $e_i$ stabs $e_j$ if and only if $e'_j$ stabs $e'_i$, and
\item $e_i$ and $e_j$ are parallel if and only if $e'_i$ and $e'_j$ are crossing.
\end{itemize}
We prove in Lemma~\ref{lem:pset-ss} that if $\mathcal{P}$ is $1$-avoiding, so is $\mathcal{P}'$. Lemma~\ref{lem:non-cross} implies that $\mathcal{P}'$ has a perfect bicolored matching whose segments are pairwise non-crossing. Therefore, ${\mathcal{P}}$ has a perfect bicolored matching that forms a stabbing family.
\end{proof}

We can easily generalize this result to general point sets.
\begin{lemma}
The largest stabbing family for any $2n$-point set $\mathcal{P}$ in general position is of size $n$.
\end{lemma}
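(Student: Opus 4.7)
The plan is to imitate the proof of the preceding corollary, but to replace the $1$-avoiding-specific input (namely Lemma~\ref{lem:non-cross}) with the classical fact that any bichromatic point set in general position admits a perfect non-crossing bichromatic matching.

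The upper bound $n$ is immediate. If two segments $s_1$ and $s_2$ of a stabbing family shared a common endpoint $a$, then general position of $\mathcal{P}$ would force each supporting line to meet the other segment only at $a$, which is never an interior point of a segment; so neither line pierces the other's interior, contradicting the definition of a stabbing family. Hence the segments of a stabbing family are pairwise vertex-disjoint, giving at most $n$ of them in a $2n$-point set.

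For the lower bound, I would first apply an affine motion to $\mathcal{P}$ (which preserves all crossing, stabbing, and parallel relations among segments) so that the $y$-axis passes through no point and has exactly $n$ points of $\mathcal{P}$ on each side; then colour the right half blue and the left half red, making $\mathcal{P}$ colour-separable (though not necessarily $1$-avoiding). Next, I would apply the transformation $T \colon (a,b) \mapsto (-1/a,\, b/a)$ of Lemma~\ref{lem:pset-ss} to obtain $\mathcal{P}' = T(\mathcal{P})$. The map $T$ is bijective, well-defined (no point lies on the $y$-axis), preserves collinearity (a short determinant computation), and swaps the two halves, so $\mathcal{P}'$ is again a colour-separable point set in general position with $n$ points of each colour. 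I would then invoke the classical Akiyama--Alon theorem, provable by ham-sandwich recursion, to produce a perfect bichromatic non-crossing matching $M'$ in $\mathcal{P}'$, and set $M = T^{-1}(M')$, a perfect bichromatic matching in $\mathcal{P}$.

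Finally, I would appeal to the crossing/parallel interchange for bichromatic segments under $T$ that is proved inside the preceding corollary: two bichromatic segments in $\mathcal{P}$ cross if and only if their images under $T$ are parallel in $\mathcal{P}'$, and vice versa. Since $M'$ has no crossings, $M$ has no parallel pairs, and hence $M$ is a stabbing family of $n$ segments. The main (minor) point to verify is that the interchange argument of the corollary truly uses only colour-separability and not the stronger $1$-avoiding hypothesis; this is immediate on inspection, since the only ingredient is that each bichromatic segment's dual double wedge contains a horizontal line, which holds precisely when the segment has its two endpoints on opposite sides of the $y$-axis.
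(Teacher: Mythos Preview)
Your proposal is correct and follows essentially the same route as the paper: halve $\mathcal{P}$ by a vertical line, colour the two sides, apply the $90^\circ$-dual-rotation transformation of Lemma~\ref{lem:pset-ss}, obtain a non-crossing bichromatic perfect matching in the image (the paper spells out the ham-sandwich recursion rather than citing Akiyama--Alon), and pull it back via the crossing/parallel interchange of the preceding corollary. Your write-up is in fact slightly more complete than the paper's, since you supply the easy upper-bound argument (vertex-disjointness) and you explicitly check that the interchange needs only colour-separability and not the $1$-avoiding hypothesis---both points the paper leaves implicit.
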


\begin{proof}
Translate $\mathcal{P}$ so that the $y$-axis becomes a halving line in $\mathcal{P}$. Assume the right and left halves in $\mathcal{P}$ are blue and red respectively. 
Similar to what we did before,
we transform ${\mathcal{P}}$ to a new point set $\mathcal{P}'$ so that the dual line arrangement for $\mathcal{P}'$ is a rotation of the dual line arrangement for $\mathcal{P}$ by $90\degree$.

It is a well-known result that every two-colored point set admits a line, called a ham-sandwich cut, that simultaneously bisects each color class.
We say that a matching is \emph{non-crossing} if the matching edges are pairwise non-crossing.
We find a non-crossing bicolored matching in $\mathcal{P}'$ by induction on $|\mathcal{P}'|$.
If $|\mathcal{P}'|=2$, we match the two points. Otherwise, we find a ham-sandwich cut splitting $\mathcal{P}'$ into two subsets each containing half the red and half the blue points. We match the points lying on a ham-sandwich cut (if any) and find non-crossing bicolored matchings in each of the two (smaller) subsets.
A non-crossing bicolored matching in $\mathcal{P}'$ corresponds to a bicolored matching forming a stabbing family in $\mathcal{P}$.
\end{proof}

\paragraph{\bf{Acknowledgments.}} An initial upper bound of $\frac{n}{4}$ on the size of crossing families was achieved during the second author's visit at EPFL. We thank G\'abor Tardos and J\'anos Pach for helpful discussion at EPFL.

\bibliography{references}

\end{document}